\newcommand{\remove}[1]{}
\newtheorem{proposition}{Proposition}%[section]
\newtheorem{corollary}{Corollary}%[section]
\newtheorem{definition}{Definition}%[section]
\newtheorem{lemma}{Lemma}%[section]
\newtheorem{theorem}{Theorem}%[section]
\newtheorem{remark}{Remark}%[section]
\newenvironment{proof}{\noindent{\bf Proof\@:}}{\hfill $\Box$\\}
\newenvironment{lemmaproof}[1]{\noindent{\bf Proof of Lemma #1\@}}{\hfill $\Box$\\}
\newenvironment{propositionproof}[1]{\noindent{\bf Proof of Proposition #1\@:}}{\hfill $\Box$\\}
\newcommand{\myshow}[1]{}
\date{\today}
\date{}
\title{Reconstruction/Non-reconstruction Thresholds for \\ Colourings of General Galton-Watson Trees.}
\author{Charilaos Efthymiou\\
College of Computing, Georgia Institute of Technology, Atlanta, USA\\	
{\tt efthymiou@gmail.com}}
\begin{document}
\maketitle

\begin{abstract}
The broadcasting models on trees arise in many contexts such as discrete mathematics, bio\-lo\-gy, 
information theory,  statistical physics and computer science. In this work, we consider the $k$-colouring 
model.    A basic question here is whether the root's assignment affects the distribution  of the 
colourings at the vertices at distance $h$ from the root.  This is the 
so-called  {\em reconstruction  problem}. 
For the case where the underlying tree is  $d$-ary  it is well known that $d/\ln d$ is 
the {\em reconstruction threshold}.  That is, for $k=(1+\epsilon)d/\ln d$ we have non-reconstruction while
for $k=(1-\epsilon)d/\ln d$ we have reconstruction.

Here, we consider the largely unstudied  case where the underlying tree is chosen according to  a predefined distribution.
 In particular,   our focus is on the well-known Galton-Watson trees. This model arises naturally in many contexts, e.g.
the theory of {spin-glasses}  and its applications  on   {random Constraint Satisfaction Problems} (rCSP).   
The aforementioned study  focuses on Galton-Watson trees with offspring  distribution ${\cal B}(n,d/n)$,
i.e. the binomial with parameters $n$ and $d/n$,  where $d$ is fixed.  Here we consider  a {\em broader} version of the problem,  
as we assume  {\em general offspring distribution}, which includes ${\cal B}(n,d/n)$ as a special case.

Our approach relates the corresponding bounds for (non)reconstruction to certain 
{\em concentration properties} of the offspring distribution. This allows to derive  reconstruction
thresholds   for a very wide family of offspring distributions, which 
includes  ${\cal B}(n,d/n)$. 
A very interesting corollary is that  for  distributions with  expected offspring $d$,  we get  
reconstruction threshold   $d/\ln d$ under  {\em weaker concentration}  conditions than  what we have in ${\cal B}(n,d/n)$.
 
Furthermore, our  reconstruction threshold for the random colorings of   Galton-Watson with offspring 
${\cal B}(n,d/n)$,  implies the  reconstruction threshold  for the random colourings of $G(n,d/n)$.

\end{abstract}

\setcounter{page}{1}
\section{Introduction}

The broadcasting models on trees and the closely related reconstruction problem 
are studied in  statistical physics, biology, communication theory, e.g. see \cite{OptPhylogeny,PhaseTransPhylog,IsingTree}.
Our work  is motivated from the  study of {\em  random Constraint Satisfaction Problems} (rCSP) such as 
random graph colouring, random $k$-SAT  etc. 
This is mainly because the models on random trees   capture some of the most fundamental properties 
of the corresponding models on random (hyper)graphs, \cite{LocalWeak,GershMont,MonResTet}.

The most fundamental problem in the study of broadcasting models is to determine the
re\-con\-struc\-tion/non-re\-con\-struc\-tion threshold. I.e. whether the configuration  of the 
root biases the distribution of the configuration of distant vertices.  The transition from 
non-reconstruction to reconstruction can be achieved by adjusting appropriately the 
parameters of the model. Typically, this transition exhibits a {\em threshold behaviour}.

So far, the main focus of the study was to determine the precise location of  this threshold for various 
models when  the underlying graph is a fixed  tree, mostly  regular.  In a lot of applications,
e.g. phylogeny reconstruction, rCSP, usually the underlying  tree is random. 
Motivated by such problems, in this work  we study the  reconstruction problem for the colouring
model
when the underlying tree is chosen according to some predefined probability distribution. 
In particular,  we consider {\em Galton-Watson} trees (GW-trees)  with some  {\em general}  offspring distribution.

The main technical challenge  is to deal with is the so-called 
``effect of high degrees".
That is,  we  expect to have vertices in the tree which are of degree much  higher than the 
expected offspring. 
The deviation from the expected degree is so large that expressing the 
(non)reconstruction bounds in terms of maximum degree leads 
 to highly suboptimal results. 
Similar challenges appear  in   problems in random graphs $G(n,d/n)$  e.g. 
sampling colourings \cite{MyMCMC,mysampling1,mysampling2,FPTASampling}.

It is a folklore conjecture  that when the offspring distribution is ``reasonably" concentrated about its expectation,
 then the  reconstruction  threshold  can be expressed in terms of the expected offspring
of the underlying tree. Somehow, the concentration makes the high degree vertices sufficiently rare, such that
their effect on the phenomenon is negligible. Our aim is to make the  intuitive base of this relation   
{\em rigorous}  by just adopting  the most  generic  assumptions about the offspring distribution.

More specifically,  our result summarizes  as follows: We provide a concentration criterion  for the distributions
over the non-negative integers about the expectation. For a GW-tree with offspring di\-stri\-bu\-tion that satisfies this criterion,  
the transition from non-reconstruction  to reconstruction  exhibits a  threshold behaviour at the 
cri\-ti\-cal point $d/\ln d$, where $d$  is the expected offspring.

Interestingly, the aforementioned concentration criterion  is much  weaker than the standard tail bounds
we  have  for many natural distributions, e.g.   ${\cal B}(n,d/n)$.
On the other hand, when the concentration of the offspring distribution is not sufficiently high to provide thresholds, we still get    
 upper and lower bounds for reconstruction and  non-reconstruction, respectively. These bounds are expressed in terms of  the tails of the offspring 
distribution.

Concluding, let us remark that the reconstruction threshold we get for the random colourings of GW-tree with offspring ${\cal B}(n,d/n)$,
allows to compute the corresponding threshold for the random colourings of $G(n,d/n)$ 
\cite{LocalWeak,GershMont,MonResTet}.  See Section \ref{sec:GWTree2Gnp} for more discussion.

\section{Definitions and Results }

For the sake of brevity, we define the colouring model and the reconstruction problem, first, in
terms of a fixed complete $\Delta$-ary $T$ of height $h$,  where $\Delta, h>0$ are  integers. 
Later we will extend these definitions w.r.t. GW trees.

The broadcasting  models on a tree $T$  are models where  information is sent from the
root over the edges to the leaves.  For some finite set of spins  (colours) ${S}=\{1,2,\ldots, k\}$, a configuration on $T$ is 
an element in $S^T$, i.e. it is an assignment of spins to the vertices of $T$. 
The spin of the root $r$ is chosen according to some initial distribution over $S$.  The information 
propagates along the edges of the tree as follows: There is a $k \times k$ stochastic  matrix 
$M$ such that if the vertex $v$ is assigned spin $i$, then its child $u$ is assigned spin $j$  
with probability $M_{i,j}$.
The  {\em $k$-colouring} model we consider here corresponds to having $M$ such that 
\begin{displaymath}
M_{i,j}=\left \{
\begin{array}{lcl}
\frac{1}{k-1} &\quad & \textrm{for $i\neq j$}\\
0 && \textrm{otherwise.}
\end{array}
\right.
\end{displaymath}
We let $\mu$ be the  {\em uniform distribution} over the $k$-colourings of $T$.
We also refer to $\mu$ as the Gibbs distribution.
Fixing the  spin (colour assignment) at the root of $T$, the  configuration we get after the 
process has finished is distributed as in $\mu$ conditional the spin of the root.

The reconstruction problem can be cast very naturally in terms of the corresponding Gibbs
distribution. %{\color{red} delete sub $T$}
More specifically,  let $r(T)$ (or $r_T$) denote the root of the tree $T$. Also, let   $L_h(T)$ be 
 the set of vertices at distance $h$ from the root $r(T)$. 
Finally,  we let $\mu^i$ be the distribution $\mu$ conditional that  the spin at $r_T$ is  $i$. 
Reconstructibility is defined as follows: 

\begin{definition}\label{def:Reconstuction}
For any $i,j\in S$ let $||\mu^i -\mu^j||_{L_h}$ denote the total variation distance
of the projections of $\mu^i$ and $\mu^j$ on $L_{h}$.  We say that a model is {\em reconstructible} on 
a tree $T$ if there exists $i,j\in S$ for which
\begin{displaymath}
\lim_{h\to \infty}||\mu^i-\mu^j||_{L_h(T)}>0.
\end{displaymath}
When the above limit is zero for every $i,j$, then we say that the model has
{\em non-reconstruction}.
\end{definition}

\noindent
Non-reconstruction implies, also,  that {\em typical} colourings 
of the vertices at level $h$ of the tree have  a vanishing effect on the distribution 
of the colouring of $r(T)$, as $h$ grows.

For the colouring model on $\Delta$-ary trees it is well-known that  the reconstruction threshold is  $\Delta/ \ln \Delta$, 
 see \cite{TreeNonNaya,InfFlowTrees,SemerjianFreez,SlyRecon}. That is, for any given fixed
 $\epsilon>0$ and sufficiently large $\Delta$, i.e. $\Delta\geq \Delta(\epsilon)$, when 
$k\geq (1+\epsilon)\Delta/\ln \Delta$ we have non-reconstruction while for 
$k\leq (1-\epsilon)\Delta/\ln \Delta$ we have reconstruction.

Rather than considering a fixed tree, here, we consider a Galton Watson tree   (GW-trees) with some
 {\em general} offspring  distribution.  In particular, we let the following:
\begin{definition}\label{def:xi-dist}
Let $ {\xi}$ be a distribution over the non negative integers. We let ${\cal T}_{\xi}$ 
denote a Galton-Watson tree with offspring distribution $\xi$. Also, given some integer $h>0$,
we let ${\cal T}^h_{\xi}$ denote the restriction of  ${\cal T}_{\xi}$ to its first $h$ 
levels\footnote{In other words,  ${\cal T}^h_{\xi}$ is the induced subtree of ${\cal T}_{\xi}$ which contains 
all the vertices within graph distance $h$ from the root.}.
\end{definition}

For the sake of brevity any distribution $\xi$ on the non-negative integers is represented
as a stochastic vector. That is,  for  $Z$ distributed as in  $\xi$ it holds that
$\Pr[Z=i]=\xi(i)$  (or $\xi_i$), for any integer $i\geq 0$. The  notion of reconstruction/non-reconstruction  
from Definition \ref{def:Reconstuction},  extends as follows for Galton-Watson trees:
\begin{definition}\label{def:GWReconstuction}
We say that a model is  
{\em reconstructible} on  ${\cal T}_{\xi}$ if there exists $i,j\in S$ for which
\begin{displaymath}
\lim_{h\to \infty}\mathbb{E}||\mu^i-\mu^j||_{L_h}>0,
\end{displaymath}
where the expectation is w.r.t. the instances of the tree.
When the above limit is zero for every $i,j\in S$, then we say that the model has
{\em non-reconstruction}.
\end{definition}

So as to have a threshold behavior for reconstruction, it is natural to have a certain  kind of parametrization for
the offspring distribution $\xi$. This parametrization   allows to adjust  the expectation from low to high. In what 
follows we assume that we deal with such distribution.

\begin{definition}
Consider ${\cal T}_{\xi}$ for some offspring distribution $\xi$ with expected offspring $d_{\xi}$.
For the $k$-colouring model on ${\cal T}_{\xi}$ we have a {\em reconstruction threshold}
$\theta$ for some function $\theta:\mathbb{R}^+\to\mathbb{R}^+$,  if the following holds: 
For any $\alpha>0$ and $d_{\xi}>d_{\xi}(\alpha)$, 
we have non-reconstruction when $k\geq (1+\alpha)\theta(d_{\xi})$, while 
we have reconstruction when $k\leq (1-\alpha)\theta(d_{\xi})$. 
\end{definition}

\noindent
One of the main results of this work is  to show that we have a threshold behaviour
for the re\-con\-struc\-tion/non-reconstruction transition for the $k$-colourings of ${\cal T}_{\xi}$ 
when $\xi$ is  {\em well concentrated}. The notion of well concentration
is defined as follows:
\begin{definition}\label{def:WellConDist}
A  distribution $\xi$ over the positive  integers with expectation $d_{\xi}$ is
defined to be  ``well concentrated'' if the  following is true:  
There is an absolute constant $c>0$ such that  for any fixed $\gamma>0$,   
$d_{\xi}>d_{\xi}(\gamma)$  and any ${x}\geq (1+\gamma)d_{\xi}$ it holds that
\begin{eqnarray}\label{eq:uniq-condition-threshold}
\sum_{j\geq x}\xi_j \leq x^{-c} \qquad and \qquad \sum_{j\leq (1-\gamma)d_{\xi}}\xi_j \leq (d_{\xi})^{-c}.
\end{eqnarray}
\end{definition}

\noindent
The quantity $c$ is independent of  the distribution $\xi$. We do not compute the exact
value of $c$ but it is   implicit from our derivations.

The following theorem is one of the main results in our work.
\begin{theorem}\label{thrm:Threshold}
Let $\xi$ be a  well concentrated distribution over the non-negative integers. 
%Also, let ${\cal T}_{\xi}$ be a GW-tree with offspring distribution $\xi$.  
Then,   the colouring model on ${\cal T}_{\xi}$ has reconstruction threshold  $d_{\xi}/\ln d_{\xi}$,
where $d_{\xi}$ is the expected offspring.
\end{theorem}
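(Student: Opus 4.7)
The plan is to treat the two directions of the threshold separately, in each case reducing the analysis on $\mathcal{T}_\xi$ to the well-understood analysis on regular trees of degree close to $d_\xi$ via a degree-truncation argument that leans on the concentration hypothesis.

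For the \textbf{non-reconstruction} direction, suppose $k \geq (1+\alpha) d_\xi/\ln d_\xi$. I would follow the standard approach of tracking a recursive quantity $x_h := \mathbb{E}\,\|\mu^i-\mu^j\|_{L_h}$ (or a second-moment surrogate such as the $x$-satisfiability probability used in \cite{SlyRecon}) along the tree. On a fixed tree with a root of degree $\Delta$, the contraction step reads roughly $x_{h+1} \le \sum_{s=1}^{\Delta} f(x_h)$ for a suitable $f$ satisfying $f(x) \le \tfrac{1}{(k-1)^2} x(1+o(1))$ for small $x$, so that the per-step contraction rate is of order $\Delta/(k-1)^2$ after one step, and $\Delta \log(1/x_h)/(k-1)$ after iterating (this is precisely what yields the $d/\ln d$ threshold on $\Delta$-ary trees). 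For $\mathcal{T}_\xi$, averaging over the root-offspring gives a recursion of the form $x_{h+1} \le \mathbb{E}_{D\sim\xi}[D\cdot f(x_h)]$. The job is to show this still contracts. I would introduce a truncation level $D^* := d_\xi^{1+\delta}$ for small $\delta>0$: on the event $\{D\le D^*\}$ the bound $D\cdot f(x)\le D^* f(x)$ lets us mimic the regular-tree argument with effective degree $D^*$, which is still below $k\ln k$; on the event $\{D>D^*\}$ the concentration hypothesis $\sum_{j\ge x}\xi_j \le x^{-c}$ together with a crude bound $f\le 1$ makes the contribution polynomially small in $d_\xi$. Combining, $x_{h+1} = o(1)\cdot x_h + o(1)$, and iterating gives $x_h \to 0$.

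For the \textbf{reconstruction} direction, suppose $k \le (1-\alpha) d_\xi/\ln d_\xi$. Here the task is to exhibit information that survives to depth $h$. Again the natural route is to copy the regular-tree reconstruction analysis (a weighted census / magnetization estimator) but applied only to the subtree induced by vertices whose offspring lies in the window $[(1-\gamma)d_\xi,(1+\gamma)d_\xi]$. The second condition in Definition \ref{def:WellConDist}, $\sum_{j\le(1-\gamma)d_\xi}\xi_j\le d_\xi^{-c}$, guarantees that a random child falls in this good window with probability $1-o(1)$, so the good subtree branches with effective offspring arbitrarily close to $d_\xi$; a Galton--Watson comparison then shows the good subtree survives with positive probability and is essentially $\bar d$-ary for some $\bar d = (1-o(1))d_\xi$. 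Since $k < \bar d/\ln\bar d$ for $d_\xi$ large enough, the known reconstruction result on (effectively) $\bar d$-ary trees supplies an estimator whose success probability is bounded away from uniform on the good subtree, and hence on $\mathcal{T}_\xi$ in expectation, yielding $\liminf_h \mathbb{E}\|\mu^i-\mu^j\|_{L_h}>0$.

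The \textbf{main obstacle} is the non-reconstruction direction: the expectation-level recursion is sharp only for the bulk offspring, while a single very high-degree vertex acts as a ``reconstruction hub'' that can in principle destroy the contraction. Everything therefore rests on calibrating the truncation threshold $D^*$ relative to $k$ and the power $c$ in the tail bound, so that (i) the truncated offspring still satisfies the regular-tree non-reconstruction condition with an $\alpha/2$ safety margin, and (ii) the tail contribution $\sum_{D>D^*}\xi_D \cdot D$ is $o(1/\log d_\xi)$; this is exactly where one sees that a purely polynomial tail $x^{-c}$ suffices and no exponential concentration is needed, matching the folklore heuristic advertised in the introduction.
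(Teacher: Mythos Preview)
Your reconstruction sketch is essentially the paper's argument: the paper defines a vertex to be \emph{freezable} when its subtree contains a nearly-$\Delta_-$-ary subtree down to the leaves, shows by induction on $h$ that the root is freezable with probability at least $1-g$ uniformly in $h$, and then runs the standard freezing recursion on that subtree. Your ``good subtree'' is exactly this object, and the Galton--Watson comparison you describe is the content of Lemma~\ref{lemma:FreezRootBound}.

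The non-reconstruction sketch, however, has a real gap. First a concrete slip: with $D^{*}=d_{\xi}^{1+\delta}$ one has $D^{*}/\ln D^{*}\sim d_{\xi}^{1+\delta}/\bigl((1+\delta)\ln d_{\xi}\bigr)\gg k$, so the claim that the effective degree ``is still below $k\ln k$'' is false and on the truncated tree you are deep in the reconstruction regime; presumably you intended $D^{*}=(1+\delta)d_{\xi}$. More importantly, the one-step recursion you write down, $x_{h+1}\le\sum_{s=1}^{\Delta}f(x_{h})$ with $f(x)\lesssim x/(k-1)^{2}$, is the Kesten--Stigum contraction and delivers non-reconstruction only for $k\gtrsim\sqrt{\Delta}$, not $k\gtrsim\Delta/\ln\Delta$. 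Your sentence ``after iterating the rate becomes $\Delta\log(1/x_{h})/(k-1)$'' is precisely where the entire difficulty of the tight threshold lives, and nothing in the linearised recursion produces it; already on regular trees this requires the unbiasing-boundary analysis of \cite{TreeNonNaya} or the second-moment machinery of \cite{SlyRecon}, not a Taylor expansion of $f$ near zero. Averaging that missing argument over a random degree does not make it easier.

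The paper does \emph{not} run a scalar recursion on $x_{h}$. It instead (i) declares a vertex \emph{mixing} if it has at most $\Delta_{+}$ children of which at most $\Delta_{+}^{\delta}$ are non-mixing, (ii) shows (Proposition~\ref{prop:TinAhz}) that with high probability every root-to-leaf path in $\mathcal{T}_{\xi}^{h}$ carries a $(1-\zeta)$-fraction of mixing vertices --- this path-union bound is the only place the tail hypothesis on $\xi$ is used --- and (iii) on any tree with that deterministic structure, adapts the unbiasing-boundary coupling of \cite{TreeNonNaya} so that disagreement contracts by $2\Delta_{+}^{-\gamma}$ at each mixing vertex along the path (Proposition~\ref{prop:Delta-Start}). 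High-degree vertices are thus handled not by a tail term in an expectation recursion but by being \emph{sparse along every path}, so they cost only a constant fraction of the exponent. Your truncation-plus-expectation scheme does not see this path structure, and that is what you would have to supply to close the gap.
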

The above theorem follows as a corollary of a more general and more technical result,
 Theorem \ref{thrm:ReconThr}.  This  theorem is  more general  as it covers non-threshold cases, too. 
Given Theorem \ref{thrm:ReconThr}, we provide a proof of  Theorem \ref{thrm:Threshold} in 
Section \ref{sec:thrm:Threshold}.

It is not hard to show that   ${\cal B}(n,d/n)$ is well concentrated. This
follows trivially by just using standard Chernoff bounds (e.g. \cite{RandAlgBook}).
Then,  Theorem \ref{thrm:Threshold} implies   the following corollary.
\begin{corollary}\label{Cor:PoissonGWRes}
Consider  ${\cal T}_{\xi}$ where $\xi$ is the distribution ${\cal B}(n,d/n)$.  
Then,   the colouring model on ${\cal T}_{\xi}$, has  reconstruction
threshold $d/\ln d$.
\end{corollary}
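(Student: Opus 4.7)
The corollary asks for essentially nothing beyond verifying one hypothesis. Since Theorem \ref{thrm:Threshold} already establishes that any well concentrated offspring distribution $\xi$ produces a reconstruction threshold at $d_{\xi}/\ln d_{\xi}$, the entire task reduces to checking that $\xi = {\cal B}(n,d/n)$ (for which the expected offspring is $d_{\xi} = d$) satisfies Definition \ref{def:WellConDist}. Thus my plan is: reduce the statement to verifying the two tail inequalities in (\ref{eq:uniq-condition-threshold}) for a binomial random variable, discharge each one using a standard Chernoff bound, and then quote Theorem \ref{thrm:Threshold}.

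For the lower tail, a textbook Chernoff estimate yields $\Pr[Z \leq (1-\gamma)d] \leq \exp(-\gamma^2 d/2)$ for $Z\sim{\cal B}(n,d/n)$, which for any fixed $c>0$ is dominated by $d^{-c}$ once $d$ is large enough in terms of $\gamma$ and $c$. This immediately delivers the right-hand inequality of (\ref{eq:uniq-condition-threshold}).

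For the upper tail I would argue in two regimes, since the required bound $x^{-c}$ is expressed in $x$ rather than $d$. In the moderate-deviation regime $(1+\gamma)d\leq x\leq K d$, where $K=K(\gamma)$ is a fixed constant, the Chernoff bound $\Pr[Z\geq x]\leq \exp(-\gamma^2 d/3)$ is super-polynomially small in $d$, hence also in $x$ since $x=\Theta(d)$. In the large-deviation regime $x\geq K d$ with $K>\eul$, the elementary bound $\Pr[Z\geq x]\leq\binom{n}{x}(d/n)^x\leq (\eul d/x)^x\leq (\eul/K)^x$ decays exponentially in $x$ and is therefore far smaller than $x^{-c}$. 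Combining the two regimes yields the left-hand inequality of (\ref{eq:uniq-condition-threshold}) with a single absolute constant $c$.

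The only subtlety worth flagging is a cosmetic one: because the definition of well concentration demands polynomial decay in $x$ (not merely in $d$), one cannot get away with a single Chernoff inequality and must split at a constant multiple of $d$, as above. Once both tail bounds are in hand, the conclusion is one line: the distribution ${\cal B}(n,d/n)$ is well concentrated, and Theorem \ref{thrm:Threshold} applied to $d_{\xi} = d$ gives the reconstruction threshold $d/\ln d$.
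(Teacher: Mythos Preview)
Your proposal is correct and follows the same approach as the paper: the paper simply asserts that ${\cal B}(n,d/n)$ is well concentrated ``trivially by just using standard Chernoff bounds'' and then invokes Theorem~\ref{thrm:Threshold}. Your write-up merely fleshes out what the paper leaves implicit, including the (necessary) two-regime split for the upper tail to get decay in $x$ rather than in $d$; this is a detail the paper omits but that any complete verification would need.
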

As a matter of fact, it is elementary to verify that  ${\cal B}(n,d/n)$ is, by no means, the less
well concentrated offspring distribution we can have. That is, a distribution with less heavy tails than ${\cal B}(n,d/n)$
can be well concentrated.

\subsection{From Galton-Watson trees to Random Graphs}\label{sec:GWTree2Gnp}

The non-reconstruction phenomenon in rCSP seems to be central  in algorithmic problems. 
In particular, it has been related to the {\em efficiency} of local algorithms which search for 
satisfying solutions. That is,  when we have non-reconstruction, usually there is an efficient 
(simple) local algorithm which finds satisfying assignments efficiently e.g. \cite{BetterKSat,ColorAlg}. On the other hand, in the reconstruction regime
there is no efficient algorithm which finds solutions.  For this reason, the transition from 
non-reconstruction to reconstruction on rCSPs has been attributed the   name ``algorithmic 
barrier" for rCSP\footnote{We should mention that this observation is empirical as there is no corresponding (rigorous) computational 
hardness result.
}, e.g. see \cite{ACO}.

The ingenious, however,  mathematically non-rigorous {\em Cavity Method}, 
introduced by physicists  \cite{cavity,1RSBPaper},  makes  very impressive predictions 
about  the most fundamental properties of  rCSP.
One of the most interesting parts of these predictions involves the Gibbs distribution
and its spatial mixing properties, e.g. the reconstruction problem. 
The Cavity Method predicts that the spatial mixing properties of the Gibbs distribution over
the colouring of $G(n,d/n)$ can be studied by means of the Gibbs distribution of the $k$-colourings
over a Galton-Watson tree with offspring distribution ${\cal B}(n,d/n)$.
That is, choose some vertex $v$ in $G(n,d/n)$ and some fixed radius neighborhood around $v$. 
The  projection of Gibbs distribution  on this neighborhood
 is, somehow, ``similar" to the corresponding Gibbs distribution over the Galton-Watson tree. 
The above line of arguments, led to  conjecture that  the colouring model on a random graph 
$G(n,d/n)$   has the same reconstruction threshold  as that of the GW tree with offspring ${\cal B}(n,d/n)$.

All the above consideration from Cavity method have been studied on a rigorous basis in \cite{LocalWeak,GershMont,MonResTet}.
We have a  quite accurate picture of the relation between the local projection of Gibbs distribution
on $G(n,d/n)$ and the Gibbs distribution on Galton-Watson trees. In particular, we have mathematically 
rigorous arguments which 
imply that indeed the reconstruction thresholds for $G(n,d/n)$ and GW-tree coincide as far as the
colouring model is concerned
\footnote{For more details on the convergence between the  distribution on the GW-tree and $G(n,d/n)$,
 see \cite{LocalWeak}.}.
That is,  Corollary \ref{Cor:PoissonGWRes} %together with the aforementioned rigorous works,
implies  that, indeed, the reconstruction threshold for the colouring model on $G(n,d/n)$  is $d/\ln d$.

\section{High Level Description}\label{Sec:Approach}

In this section, we give a  high level overview of how do we derive  upper and lower 
bounds for reconstruction and non-reconstruction, respectively. 
Consider an instance of ${\cal T}^h_{\xi}$ for some distribution $\xi$ over the non-negative integers
 and some integer $h>0$.

\begin{remark}
For a set of vertices $\Lambda$ in the tree, we use the 
term ``{\em random colouring of $\Lambda$}''  to indicate  the following 
way of colouring $\Lambda$:
Take a random colouring of the tree and keep only the colouring of the vertices in $\Lambda$.
Also, when we refer to ``{\em typical colourings of vertex set $\Lambda$}'', 
we imply that they are typical w.r.t. the aforementioned distribution.
 \end{remark}

\noindent
Depending on the  tails of $\xi$ we choose appropriate quantities  $\Delta_+$ and $\Delta_-$ such that
 $\Delta_- \leq d_{\xi} \leq \Delta_+$.  Given these two quantities  we show that we have  non-reconstruction  
for  $k\geq (1+\alpha)\Delta_+/\ln \Delta_+$ and we have reconstruction for  $k\leq (1-\alpha)\Delta_-/\ln \Delta_-$, 
for the colouring model on ${\cal T}^h_{\xi}$, where $\alpha>0$ is fixed. 
We show (non)reconstruction by arguing about the  structure of ${\cal T}^h_{\xi}$.

\paragraph{Non Reconstruction.}
First, we focus on  non-reconstruction. Given $\Delta_+$, we  define a set of structural specifications such  
that if ${\cal T}^h_{\xi}$ satisfies them, then we have non-reconstruction for $k\geq (1+\alpha)\Delta_{+}/\ln \Delta_+$. 
We should  consider  $\Delta_+$ to be a  parameter for the specifications.

In particular, given $\Delta_+$,  we  introduce the notion of {\em mixing}  vertex. Roughly speaking, 
a vertex $v\in {\cal T}^h_{\xi}$ is mixing if the following is true:  A typical $k$-colouring 
of the vertices at level $h$   (e.g. Remark 1) does not bias the colouring of $v$ by too 
much when $k\geq (1+\alpha)\Delta_+/\ln \Delta_+$.  A vertex is biased if it is  forced  to 
choose from a relatively  small  set of colours. Perhaps a simple example of a vertex $u$ 
{\em not} being mixing is when the subtree rooted at  $u$ has minimum degree much larger
than  $\Delta_+$.

An  inductive definition  of a mixing vertex, roughly, is as follows: 
A non leaf vertex $v$ is mixing  if the number of its  children is at most $\Delta_+$ while no more
than   $o(\Delta_+)$ of its children  are non-mixing vertices.
 We consider the leaves of the tree  to be mixing vertices, by default.

Furthermore, our specifications require that the mixing vertices are {\em sufficiently many} and {\em well spread} in the tree.
To be more  specific, we want  the following:  For every path from the root of ${\cal T}^h_{\xi}$ to 
the vertices  at level $h$   a sufficiently large fraction  of the  vertices is  mixing. 
Additionally,  we would like that the number of vertices at level $h$ should not deviate significantly from
their expectation.

Then, we  argue  that  non-reconstruction holds for the colouring  model on any, {\em arbitrary}, instance of 
${\cal T}^h_{\xi}$ which satisfies the aforementioned specifications when $k\geq (1+\alpha)\Delta_+/\ln \Delta_+$.
The choice of $\Delta_+$ is the smallest possible that guarantees   that ${\cal T}^h_{\xi}$ satisfies the  structural specifications
 with probability that tends to 1 as $h\to\infty$.

 For showing non-reconstruction, given a fixed tree of the desired structure, we use an idea introduced 
in \cite{BasicIdeaPaper}.   The authors there show non-reconstruction by upper bounding  appropriately the second moment 
of  a quantity called ``magnetization of the root".
This approach has turned out to be  quite popular for showing non-reconstruction bounds
for various models on fixed trees e.g. \cite{TreeNonNaya,SlyRecon,HCTreeNonRecon,BasicIdeaPaper}.
Additionally to \cite{BasicIdeaPaper}, our approach builds on the very elegant combinatorial formalization
from \cite{TreeNonNaya}, which 
uses the notion of {\em unbiasing boundary} to deal with the magnetization of the root.

The approach in \cite{TreeNonNaya} shows non-reconstruction by arguing that the typical colourings
of the vertices at level $h$ do not bias the colouring of the vertices in the largest part of the underlying 
(regular) tree.
The additional element here is that the trees we consider are highly non-regular. So as to get a similar effect 
from the colorings at level $h$, we need to argue about the subtree structure of each vertex in the tree. 
At this point we use the specification requirement. 
In other words, the setting we develop here with the mixing vertices somehow allows to  apply the idea of   unbiasing boundaries 
to control   the magnetization of the root of the non-regular trees we deal with.

\remove{
Given  structural specification of the previous paragraph
The second step is to show non-reconstruction given that the underlying tree has the aforementioned structural properties. 
That is, we show that  non-reconstruction holds for any, {\em arbitrary}, instance of ${\cal T}^h_{\xi}$ which has these properties.   The non-reconstruction bound follows by showing that ${\cal T}^h_{\xi}$ has these structural properties with probability that 
tends to 1 as $h\to \infty$. For this, we need to choose $\Delta_+$ appropriately.
%is taken such that the typical instances of ${\cal T}^h_{\xi}$
%has the desired structural properties with probability tending to 1 as the height $h$ tends to
%infinity.

Before proceeding, let us be more detailed  about the  mixing vertices.
This will give an improved picture of how a desirable tree structure looks like.
So as to decide whether a vertex  is mixing, we should check the subtree 
that includes all its descendants.  
A non leaf vertex $v$ is mixing if the following (inductively defined) criterion is satisfied: 
Its  number of children is at most $\Delta_+$ and at most  $o(\Delta_+)$ of them are non-mixing vertices.
%\footnote{The quantity $o(\Delta_+)$ should be of smaller order than linear in $\Delta_+$.}. 
The rest ones should be mixing. We consider leaves to be mixing vertices, by default.
}

\remove{
\begin{figure}
\begin{minipage}{0.5\textwidth}
	\centering
		\includegraphics[height=3.4cm]{./GoodBadTree}
		\caption{``Non reconstruction''.}
	\label{fig:Mixing}
\end{minipage}	
\begin{minipage}{0.5\textwidth}
	\centering
		\includegraphics[height=3.4cm]{./FreezableTree}
		\caption{``Reconstruction''.}
	\label{fig:Freeze}
\end{minipage}
\end{figure}

In Figure \ref{fig:Mixing} the ``good subtrees" contains exactly the mixing vertices. 
The ``bad subtrees" contains the rest.  The above restriction implies  that a vertex in the
good part can have only a limited number of neighbours in the bad subtrees. 
The root of ${\cal T}^h_{\xi}$ is not necessarily a mixing vertex.
}

\paragraph{Reconstruction}
As opposed to non-reconstruction, the reconstruction  bound  is  well known in the special case 
where the offspring distribution  is  ${\cal B}(n,d/n)$, e.g. \cite{MolloyFr,SemerjianFreez}.  
Our approach deviates  from both \cite{MolloyFr,SemerjianFreez} in  that 
it applies to GW-trees with a general offspring distributions, while 
it focuses 
on  the structural  properties of the underlying  tree, i.e. as we do for the
non-reconstruction bound. 

We are  based on the following observation.
Consider some   fixed tree $T$ of height  $h$ and  some integer $k>0$. Take a random $k$-colouring of the 
vertices at level $h$ of that tree. Consider the probability that the colouring at the root of the tree `freezes" 
by that random  $k$-colouring.  The assignment  at the root gets frozen when the colouring
of the vertices at level $h$ specifies uniquely the colouring at the root.  A sufficient condition for reconstruction is that
the probability that the colouring of the root gets frozen is bounded away from zero for any $h>0$.
 The reconstruction bound  for a $\Delta$-ary tree  follows exactly from this argument, i.e. for 
 $k\leq(1-\alpha) \Delta/\ln \Delta$,  the colouring of the root friezes with probability bounded away
 from zero for any $h$, see  \cite{SemerjianFreez,InfFlowTrees}.

Somehow,  the above   arguments imply  that  if  ${\cal T}^{h}_{\xi}$ has a  $(\Delta_-)$-ary
subtree, with the same root as ${\cal T}^h_{\xi}$,  then   we have reconstruction for $k\leq (1-\alpha)\Delta_-/\ln \Delta_-$.
The structural specification we need for reconstruction is that ${\cal T}^{h}_{\xi}$ has
such a subtree with probability that is bounded away from zero for any $h>0$.
Our  choice of   $\Delta_-$ is  the largest possible  that guarantees exactly this
specification for   ${\cal T}^h_{\xi}$.

\begin{remark}
To be more precise, for non-reconstruction the subtree of ${\cal T}^h_{\xi}$ we consider 
is not exactly $\Delta_-$-ary.  The number of children for each non-leaf vertex is  very close to $\Delta_-$.
\end{remark}

\section{Upper and Lower Bounds}\label{sec:UpLoBounds}

\noindent
We start our analysis by focusing on  the  upper and the lower bounds for reconstruction and non-reconstruction,
respectively.
Consider ${\cal T}^h_{\xi}$ and the $k$-colouring model on this tree.   We  define
appropriate quantities  $\Delta_-$ and $\Delta_+$ which depend (mainly) on the  statistics of the 
offspring distribution $\xi$. As far as $\Delta_+$ is concerned,  we have the following:

\begin{definition}\label{def:qDelta+}
Consider  a distribution $\xi$ over the non negative integers with expectation $d_{\xi}$.
Given some fixed $\delta\in (0,1/10)$, we let  $\Delta_+=\Delta_+(\delta) \geq d_{\xi}$ be the minimum integer such 
that the following holds: There is  $q\in [0,3/4)$  and  $\beta\geq 4$, independent of $d_{\xi}$, 
such that 
\begin{eqnarray}
 q\geq \sum_{i>\Delta_+}\xi_i+\Pr\left[{\cal B}(\Delta_+,q)\geq (\Delta_+)^{\delta}\right] \label{eq:non-reconA}
\end{eqnarray}
and 
\begin{eqnarray}\label{eq:}
 \sum_{t> \Delta_+}t\cdot \xi_t  \leq    \exp\left(-2\beta \ln d_{\xi} \right), \qquad \textrm{}\qquad
 \Pr\left[{\cal B}(\Delta_+,q)>(\Delta_+)^{\delta}\right] \leq   \exp\left(-2\beta \ln d_{\xi} \right). \label{eq:non-reconB}
\end{eqnarray}
\end{definition}

\noindent
Given $\xi$ we choose  $\Delta_+$ as described above. 
Then 
we use $\Delta_+$ as a parameter to specify a set of   structural specifications for  trees
(roughly described in Section \ref{Sec:Approach}).
For any instance of ${\cal T}_{\xi}$ which satisfies these specification 
we have non-reconstruction for any $k\geq (1+\alpha)\Delta_+/\ln \Delta_+$.
The relations between  $\Delta_+$ and $\xi$ as specified  in  (\ref{eq:non-reconA}) and (\ref{eq:non-reconB})  
are, essentially, a  list of  requirements  which guarantee that $\Delta_+$ is as close to
$d_{\xi}$ as possible while at the same time  ${\cal T}^{h}_{\xi}$ satisfies the necessary 
structural specifications  with probability that tends to 1 as $h$ grows.

To illustrate the  intuition behind the relations in Definition \ref{def:qDelta+}, perhaps, it worths
focusing on  (\ref{eq:non-reconA}). 
As we  mentioned before,  the specification requires the tree 
has sufficiently many and well-spread mixing vertices. Then, it is  natural to require that
the probability of a vertex in ${\cal T}^h_{\xi}$ to be mixing is sufficiently large  
regardless of the level of the vertex in the tree.
 The requirement in (\ref{eq:non-reconA}) guarantees   that this probability is appropriately
bounded. 

To be more specific, a vertex $v$ is mixing if the number of its children is at most   $\Delta_+$, while at 
most $\Delta^{\delta}$ of them are allowed to me non-mixing ($\delta$  is as in Definition \ref{def:qDelta+}). 
Let $q$ be an upper bound for the probability of
each child  of $v$ to be non-mixing\footnote{The probability of a vertex  being non-mixing depends only on
the subtree rooted at this vertex.}. Using elementary arguments, we get that  the r.h.s. of (\ref{eq:non-reconA}) is an
upper bound for $v$ to be non-mixing. Moreover, if (\ref{eq:non-reconA}) holds, then  clearly  $q$ is an upper bound for
$v$ to be non-mixing, too.  That is, if some vertex at some level $l$ of the tree is non-mixing with probability at most $q$, then
(\ref{eq:non-reconA}) guarantees that for any vertex at level $l-1$ the probability of it being non-mixing has the
same upper bound $q$. This implies that 
 regardless of its  level at the tree, each vertex $v$ is mixing with probability at least $1-q$.
 The range of $q$ we consider in Definition \ref{def:qDelta+} guarantees that the  mixing vertices
 are as specified by the requirements.
For further details is Section \ref{sec:prop:TinAhz}.

As far as $\Delta_-$ is concerned, we have the following.

\begin{definition}\label{def:gDelta-}
Let $\xi$ be a distribution over the non negative integers. 
Given some $\delta\in (0,1/10)$, we let  $\Delta_-=\Delta_-(\delta)\leq d_{\xi}$ be the maximum integer such 
that the following holds: There is   $g\in [0, 3/4)$  such that 
\begin{eqnarray}
g\geq \sum_{i<\Delta_-}\xi_i+\sum_{i\geq \Delta_-}\xi_i\cdot \Pr\left[{\cal B}(i,1-g)<(\Delta_-)-(\Delta_-)^{\delta}\right]. 
\label{eq:recon}
\end{eqnarray}
\end{definition}

\noindent
The arguments  for reconstruction are based on showing that
with sufficiently large probability the following holds for  ${\cal T}^h_{\xi}$: The root of 
 ${\cal T}^h_{\xi}$ has a subtree of height $h$ such that
each non leaf vertex has sufficiently many children, e.g. approximately $\Delta_-$ many.
We will see in Section \ref{sec:thrm:ReconThrB}, that the  condition in
(\ref{eq:recon})  guarantees that the root of  ${\cal T}^h_{\xi}$ has such a subtree 
with probability bounded away from zero, regardless of the height $h$.
Clearly, this is the structural requirement for reconstruction, we described in Section \ref{Sec:Approach}.

The following theorem is the main technical result of our work.
The trees considered in Theorem \ref{thrm:ReconThr} do not 
necessarily have well concentrated offspring distribution $\xi$.

\begin{theorem}\label{thrm:ReconThr}
Let  some fixed $\alpha>0$.  Consider  an instance of ${\cal T}^h_{\xi}$ such that the expected
offspring $d_{\xi}$ is sufficiently large. 
Set $\delta =\min \{ \alpha/2,1/10\}$, i.e. the variable that specifies both $\Delta_+$ and $\Delta_-$.

For $\mu$, the Gibbs distribution over the  $k$-colourings of ${\cal T}^h_{\xi}$
the following is true:
\begin{description}
\item[non-reconstruction:]
For $k=(1+\alpha)\Delta_+/ \ln \Delta_+$  and any $i,j\in [k]$ it holds that
\begin{eqnarray}
\mathbb{E}||\mu^i-\mu^j||_{L_h}
\leq 8k^2(2\Delta_+)^{-0.45 \delta h}. \nonumber
\end{eqnarray}
\item [reconstruction:]
For $k=(1-\alpha)\Delta_-/ \ln \Delta_-$  there are $i,j\in [k]$ such that
\begin{eqnarray}
\mathbb{E}||\mu^i-\mu^j||_{L_h}\geq \frac{1}{4}\left(1- \frac{2}{\log k}\right). \nonumber
\end{eqnarray}
\end{description}
Both of the expectations above are taken
 w.r.t. the tree instances.
\end{theorem}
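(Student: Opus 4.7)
The plan is to treat the two parts separately, following the scheme of Section \ref{Sec:Approach}: for each direction, reduce to a deterministic (non)reconstruction statement for any tree satisfying a structural specification tuned by $\Delta_+$ or $\Delta_-$, and then prove the probabilistic statement that ${\cal T}^h_{\xi}$ meets the specification with the required probability.

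For the non-reconstruction bound, I would first make the notion of a \emph{mixing} vertex inductively precise: leaves are mixing; an internal vertex $v$ is mixing iff it has at most $\Delta_+$ children and at most $(\Delta_+)^{\delta}$ of them fail to be mixing. The key probabilistic step is that, under Definition \ref{def:qDelta+}, each vertex of ${\cal T}^h_{\xi}$ is mixing with probability at least $1-q$ regardless of its level: condition (\ref{eq:non-reconA}) is exactly what is needed to propagate this bound up the tree by induction from the leaves, since the right-hand side upper-bounds the probability of non-mixing assuming each child is non-mixing with probability at most $q$. The tail bounds (\ref{eq:non-reconB}) then let me upgrade pointwise mixing into structural specifications that hold with probability $1 - o(1)$: every root-to-leaf path carries a large fraction of mixing vertices, and $|L_h|$ is within a polynomial factor of its expectation. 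On any tree satisfying these specifications, I would adapt the second-moment/magnetization-of-the-root approach of \cite{BasicIdeaPaper} together with the unbiasing-boundary combinatorics of \cite{TreeNonNaya}: at each mixing vertex the contribution of a typical boundary colouring to the magnetization contracts by a factor essentially $1/((k-1)(1-o(1)))$, and multiplying these contractions over the mixing fraction of each path yields geometric decay at rate $\approx (2\Delta_+)^{-0.45\delta}$ per level, which gives the stated $8k^2(2\Delta_+)^{-0.45\delta h}$ bound after a union bound and a cleanup of the non-mixing portion.

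For the reconstruction bound, I would show that with probability bounded away from zero as $h \to \infty$, the root of ${\cal T}^h_{\xi}$ is the root of a subtree of height $h$ in which every internal vertex has at least roughly $\Delta_-$ children (see Remark after the reconstruction step in Section \ref{Sec:Approach}). Let $p_h$ be the probability that the root admits such a subtree of height $h$; condition (\ref{eq:recon}) guarantees by induction that $1 - p_h \leq g < 3/4$ for every $h$, because a root fails to have the subtree either because its offspring is $< \Delta_-$ (first term of (\ref{eq:recon})) or because given an offspring $i \geq \Delta_-$ fewer than $\Delta_- - (\Delta_-)^{\delta}$ of its children recursively have the property (the $\Bin(i,1-g)$ term). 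On any such nearly-$\Delta_-$-ary subtree embedded in ${\cal T}^h_{\xi}$, the standard freezing argument for $\Delta_-$-ary trees from \cite{SemerjianFreez,InfFlowTrees} applies when $k \leq (1-\alpha)\Delta_-/\ln\Delta_-$: a uniformly random colouring of its leaves determines the colour of the root with probability bounded away from zero, and this event gives two indices $i,j$ with $\|\mu^i-\mu^j\|_{L_h}$ close to $1$. Combining with the positive lower bound on $p_h$ and tracking the $1 - 2/\log k$ loss from the freezing event produces the stated inequality.

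The main obstacle will be the deterministic non-reconstruction step: the unbiasing-boundary argument in \cite{TreeNonNaya} was tailored to a regular $\Delta$-ary tree where every vertex contributes the same contraction factor, whereas here the tree is highly irregular and only a fraction of the vertices on each root-to-leaf path are mixing. The delicate part is to show that the non-mixing subtrees, which can in principle bias their root arbitrarily, contribute only a bounded multiplicative overhead rather than destroying the geometric decay; this is exactly where the strong tail bound (\ref{eq:non-reconB}) (linear moments and binomial tail of order $\exp(-2\beta \ln d_{\xi})$) is needed, since it lets us charge the non-mixing subtrees against a very small probability budget while still producing the exponent $0.45\delta$ in the final bound.
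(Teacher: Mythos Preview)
Your overall architecture matches the paper's: both directions proceed by (i) establishing that ${\cal T}^h_\xi$ satisfies a structural specification with the required probability (via the inductive arguments you describe, which are exactly Lemma \ref{lemma:FreezRootBound} and the content of Section \ref{sec:prop:TinAhz}), and (ii) proving a deterministic (non)reconstruction statement on any tree meeting the specification. The reconstruction half is essentially complete as you state it.

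For non-reconstruction, however, your description of the contraction mechanism has a genuine gap. The factor $1/((k-1)(1-o(1)))$ you quote is the second eigenvalue of $M$; propagating it level by level and then union-bounding over $L_h$ only reaches the Kesten--Stigum regime, not $k>\Delta_+/\ln\Delta_+$. What the unbiasing-boundary machinery actually delivers (Lemma \ref{lemma:IndBiasCond}) is that at a mixing vertex with non-biasing boundary every colour has marginal at most $\Delta_+^{-\gamma}$, so a single leaf disagreement percolates one step \emph{toward the root} with probability at most $2\Delta_+^{-\gamma}$ (Proposition \ref{prop:Delta-Start}). The paper then combines this with a second ingredient you have not mentioned: Proposition \ref{prop:reduction} reduces the averaged magnetization to $\mathbb{G}_{c,k}(T)=\|\mu^{X_L}-\mu^{Z^q_L}\|_{r_T}$ for two \emph{random} boundaries $X_L,Z^q_L$ generated by colourings with prescribed root colours, and a top-down coupling gives $\mathbb{E}[H(X_L,Z^q_L)]\le(\ln\Delta_+/(1+\alpha))^h$. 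The final estimate is the product of this Hamming bound with the per-disagreement factor $\Delta^*_{\zeta,h}=(2\Delta_+^{-\gamma})^{(3/4-\zeta)h}$ (Proposition \ref{prop:UpcouplingVsHamming}), not a union bound over all of $L_h$; this dual coupling is exactly what makes the arithmetic close at the threshold. Relatedly, ``$|L_h|$ within a polynomial factor of its expectation'' is not a structural hypothesis in the paper: only the raw expectation $d_\xi^h$ is used, and it is absorbed by the doubly-exponentially small probability (Proposition \ref{prop:prob-UXL}) that a random boundary is biasing.
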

The proof of Theorem \ref{thrm:ReconThr} appears in two sections. In Section \ref{sec:thrm:ReconThrA}
we present the proof for the non-reconstruction part.  In Section \ref{sec:thrm:ReconThrB} we present the 
proof for the reconstruction part.

Given Theorem \ref{thrm:ReconThr},  it is elementary to show that Theorem \ref{thrm:Threshold} 
holds. I.e. given that the offspring distribution is well concentrated (Definition \ref{def:WellConDist}), 
we to show that $\Delta_-$ and $\Delta_+$ are sufficiently  close to each other.  The derivations are simple and they 
are presented in full detail in Section  \ref{sec:thrm:Threshold}.\\

\noindent
{\bf Notation.}
For any tree $T$ we let $r(T)$ or $r_T$ denote its root. Let $L_h(T)$ denote the set of 
vertices at graph distance  $h$ from $r(T)$.   
For every vertex $v \in T$,  we define $\tilde{T}_v$  the subtree of $T$ as follows:
Delete the edge between $v$ and its parent in $T$. Then $\tilde{T}_v$ is the connected
component that contains $v$. We use the 
convention that $r(\tilde{T}_v)=v$.

We use capital letter of the Latin alphabet to indicate random variables
which are colourings of the tree $T$, e.g. $X$, $Y$, etc.  We use small letter of 
the greek alphabet to indicate fixed colourings, e.g. $\sigma,\tau$, etc. 
We use the notation $\sigma_{\Lambda}$ or $X(\Lambda)$ do  indicate that the vertices 
in $\Lambda$ have a colour assignment  specified by the colouring $\sigma$ or $X$,
respectively.

Given a tree $T$, we let $\mu$ denote the Gibbs distribution for its $k$-colourings.
Usually we consider $\mu$ under certain boundary conditions, i.e. given some $
\Lambda\subset T$,
and some $k$-colouring of $T$, $\sigma$, we need to consider the Gibbs distribution
where the vertices in $\Lambda$ have fixed colouring $\sigma_{\Lambda}$. For this case
we denote the Gibbs distribution $\mu^{\sigma_{\Lambda}}$.  
For $\Xi\subseteq T$ we let $\mu_{\Xi}$ denote the {\em marginal} of
the Gibbs distribution for the vertices in $\Xi$.  We denote marginals over
the vertex set $\Xi$ of a Gibbs distribution with boundary $\sigma_{\Lambda}$ 
in the natural way,  i.e.  $\mu^{\sigma_{\Lambda}}_{\Xi}$.

\section{Proof of Theorem \ref{thrm:ReconThr} - Non Reconstruction}\label{sec:thrm:ReconThrA}

First, consider a fixed tree $T$ of height $h$ and we let $L=L_h(T)$.
From \cite{Beating2ndEigen} we have that 
\begin{eqnarray}\label{eq:DualRecProb}
||\mu^{i}-\mu||_{r_T}\leq k  \cdot \sum_{\sigma(L)\in[k]^L} \mu_L(\sigma_L) \cdot ||\mu^{\sigma(L)} -\mu||_{r_T}.
\end{eqnarray}
Furthermore,   from the definition of the total variation distance we have that
\begin{eqnarray}
  \sum_{\sigma(L)\in[k]^L} \mu_L(\sigma_L) \cdot ||\mu^{\sigma(L)} -\mu||_{r_T}  &=&
 \frac12 \sum_{\sigma(L)\in[k]^L} \mu_L(\sigma_L) \cdot\sum_{c\in [k]} \left |\mu^{\sigma(L)}_{r_T}(c) -1/k\right|  \nonumber \\
&=&\frac12\sum_{c\in [k]}  \sum_{\sigma(L)\in[k]^L} \mu_L(\sigma_L) \cdot \left |\mu^{\sigma(L)}_{r_T}(c) -1/k\right|.  \label{eq:Red2IndColours}
\end{eqnarray}

\noindent
The quantity $\left |\mu^{\sigma(L)}_{r(T)}(c) -1/k\right|$, is usually called   {\em magnetization of the root $r(T)$}, 
e.g. see  \cite{MagnetizationRef}.  The inner sum 
is  the average magnetization at the root, w.r.t. boundaries at the set $L$. 
We bound this average magnetization by using the following standard result.

\begin{proposition}\label{prop:reduction}
Consider a fixed tree $T$ of height $h$  and some integer $k>0$.   For every $c\in [k]$ the following 
is true:  Let $X$ be a random $k$-colouring  of  $T$ conditional  that $X(r_T)=c$.  It holds that
\begin{eqnarray}\label{eq:thrm:reduction}
  \sum_{\sigma(L)\in[k]^L} \mu_L(\sigma(L)) \cdot
\left |\mu^{\sigma(L)}_{r_T}(c) -1/k\right|  \leq   
 \sqrt{\frac{1}{k} \cdot 
 \left|\left| \mu^{X_L}(\cdot)-\mu^{Z^q_L}(\cdot) \right|\right|_{\{r_T\}}  
},
\end{eqnarray}
where $Z^q$ is random colouring of $T$ conditional that $Z^q(r_T)=q$,
where $q$   maximizes the r.h.s. of (\ref{eq:thrm:reduction}).
\end{proposition}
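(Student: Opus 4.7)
The plan is to combine Cauchy--Schwarz with a Bayes-rule identity in order to reduce the left-hand sum to a second moment, and then use an averaging argument over the possible root colours to select the auxiliary colour $q$.

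First I would set $p_\sigma := \mu^\sigma_{r_T}(c) - 1/k$ and apply Cauchy--Schwarz (equivalently, Jensen's inequality for $\sqrt{\cdot}$) to obtain
\begin{displaymath}
\sum_{\sigma(L)\in [k]^L} \mu_L(\sigma) \, |p_\sigma| \; \leq \; \sqrt{\sum_{\sigma(L)\in [k]^L} \mu_L(\sigma) \, p_\sigma^2}.
\end{displaymath}
Since $\sum_\sigma \mu_L(\sigma) p_\sigma = \mu_{r_T}(c) - 1/k = 0$ (by symmetry of the uniform Gibbs measure across colours), expanding one factor of $p_\sigma$ inside the square collapses the second moment to $\sum_\sigma \mu_L(\sigma) \, \mu^\sigma_{r_T}(c) \, p_\sigma$. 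Now the Bayes identity $\mu_L(\sigma)\, \mu^\sigma_{r_T}(c) = \mu(r_T = c, L = \sigma) = (1/k)\, \mu^c_L(\sigma)$ converts this into $(1/k) \, \mathbb{E}_{X_L \sim \mu^c_L}[p_{X_L}] = (1/k)\bigl(\mathbb{E}_{X_L}[\mu^{X_L}_{r_T}(c)] - 1/k\bigr)$.

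Next I would select $q$ by a pigeonhole/symmetry argument: averaging the quantity $\mathbb{E}_{\tau \sim \mu^{c'}_L}[\mu^\tau_{r_T}(c)]$ over $c' \in [k]$ recovers $\mu_{r_T}(c) = 1/k$, so some colour $c'$ satisfies $\mathbb{E}_{\tau \sim \mu^{c'}_L}[\mu^\tau_{r_T}(c)] \leq 1/k$; let $q$ be (any) such colour. With this $q$, drawing $Z^q$ independently from $X$,
\begin{displaymath}
\mathbb{E}_{X_L}[\mu^{X_L}_{r_T}(c)] - 1/k \; \leq \; \mathbb{E}_{X_L,\, Z^q_L}\bigl[\mu^{X_L}_{r_T}(c) - \mu^{Z^q_L}_{r_T}(c)\bigr] \; \leq \; \mathbb{E}\bigl[\|\mu^{X_L} - \mu^{Z^q_L}\|_{\{r_T\}}\bigr],
\end{displaymath}
the final step using that for any two probability measures $\nu_1,\nu_2$ on $[k]$, $|\nu_1(c) - \nu_2(c)| \leq \|\nu_1-\nu_2\|_{TV}$ since the singleton $\{c\}$ is a valid test event. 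Chaining this with the Cauchy--Schwarz bound yields the claim for this particular $q$; replacing it by the maximiser of the right-hand side only enlarges the bound, so the stated inequality still holds.

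The main obstacle is essentially bookkeeping: one must keep track of the two different boundary measures $\mu_L$ and $\mu^c_L$, and it is the Bayes identity $\mu_L(\sigma)\,\mu^\sigma_{r_T}(c) = (1/k)\mu^c_L(\sigma)$ that cleanly produces the explicit factor $1/k$ outside the TV distance. The one genuinely model-specific step is the averaging that delivers $q$ with $\mathbb{E}[\mu^{Z^q_L}_{r_T}(c)] \leq 1/k$; it relies on the uniformity of $\mu$ at the root, a property guaranteed here by the symmetry of the $k$-colouring transition matrix $M$.
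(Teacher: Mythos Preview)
Your proposal is correct and follows essentially the same route as the paper: Cauchy--Schwarz on the magnetization, the Bayes/Radon--Nikodym identity $\mu_L(\sigma)\mu^{\sigma}_{r_T}(c)=\tfrac{1}{k}\mu^c_L(\sigma)$ to extract the factor $1/k$ and pass to the conditional boundary law, and then an averaging over root colours to introduce $Z^q$. The only cosmetic difference is in the last step: the paper first inserts the unconditioned boundary $Z$ (so that $\mu^{Z_L}_{r_T}=\mu_{r_T}$), writes it as the convex combination $\tfrac{1}{k}\sum_q \mu^{Z^q_L}$, applies the triangle inequality and takes the maximum over $q$, whereas you pick $q$ directly by pigeonhole and then upgrade to the maximiser---these two manoeuvres are equivalent.
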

Our  proof of Proposition \ref{prop:reduction}, which is very similar to the proof of Lemma 1 in 
\cite{BasicIdeaPaper},  appears in Section \ref{sec:prop:reduction}.

The quantity on the r.h.s. of (\ref{eq:thrm:reduction}) is
a deterministic one, i.e. it depends only the  tree $T, c$ and $k$. 
We let 
$$\mathbb{G}_{c,k}(T)= \left|\left| \mu^{X_L}(\cdot)-\mu^{Z^q_L}(\cdot) \right|\right|_{\{r_T\}}  .$$

\noindent
Consider ${\cal T}^h_{\xi}$ as in the statement of Theorem \ref{thrm:ReconThr}.
The quantity  $\mathbb{G}_{c,k}({\cal T}^h_{\xi})$  is a random variable.  
In the light of  (\ref{eq:Red2IndColours}), (\ref{eq:DualRecProb}) and Proposition \ref{prop:reduction},
it suffices to  show that  $\mathbb{E}\left[\mathbb{G}_{c,k}({\cal T}^h_{\xi}) \right]$ tends to 
zero with $h$ sufficiently fast, for any $c\in [k]$ .

\begin{definition}[Mixing Root]\label{def:mixing-root}
Let  $\Delta_+$ and $\delta$ be as in the statement of Theorem \ref{thrm:ReconThr}.
For a tree $T$ of height $h$, its  root  is called mixing if the following holds:
When  $h=0$, then $r(T)$ is  mixing, by default. When $h>0$, $r(T)$ is  mixing  if and only if 
{\tt deg}$ (r_T) \leq  \Delta_+$ and there are at  most $(\Delta_+)^{\delta}$ many  vertices $v$ children 
of $r(T)$ such that $\tilde{T}_v$ does not have a mixing root.
\end{definition}

\begin{definition}
Given $\zeta\in [0,1]$ and some integer $t>0$,  we let ${\cal A}_{t,\zeta}$ denote the set of trees $T$ of height at most $t$ such that the following holds:  
Every  path $\cal P$ of length $h$  from $r(T)$ to $L_t(T)$  contains at least $(1-\zeta)t$  vertices $v$  
such that $\tilde{T}_v$ has a mixing root.  %There is no restrictions for the shorter paths.
\end{definition}

\noindent
Before presenting our next  result, we need to do the following remad. In Definition \ref{def:qDelta+},
given $\xi$ and $\delta$, among others the following inequality should hold for $\Delta_+$, 
 \begin{eqnarray}
 \sum_{t\geq \Delta_+}t\cdot \xi_t< \exp\left(-2\beta \ln d_{\xi}\right), 
\nonumber
 \end{eqnarray}
where $\beta\geq 4$. 
Given $\Delta_+$ and $\xi$ the exact value of the parameter $\beta$ is already specified. That is, 
when we define $\Delta_+$ and $\xi$, the value of  $\beta$ is implicit.

\begin{proposition}\label{prop:TinAhz}
Assume that the distribution $\xi$, $\delta$, $\Delta_+$ are as defined in the statement of Theorem 
\ref{thrm:ReconThr}. Let ${\cal C}=\beta \ln d_{\xi}$.
Also, let $\zeta \in(0,1)$ and $\theta=\theta(\zeta)>1$  be such that $(1-\zeta)\theta<1$ and $\beta(1-\theta)<-1$.  
Then, for every $h\geq 1$ it holds that
\begin{eqnarray}
\Pr[{\cal T}^h_ {\xi}\in {\cal A}_{h, \zeta }]\geq 
1-\exp \left[ -(1-\theta(1-\zeta) ){\cal C}\cdot h\right].  \nonumber
\end{eqnarray}
\end{proposition}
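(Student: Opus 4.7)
The plan consists of four steps. The heart of the argument is a decoupling trick that replaces the dependent events ``$v_i$ is the root of a non-mixing subtree'' along a root-to-leaf path by an upper-bounding family of mutually independent events, after which standard binomial tail and union bounds take over.

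\textbf{Step 1 (uniform non-mixing probability).} By induction on the height of $\tilde T_v$, I would show that $\Pr[v\text{ non-mixing}]\le q$ for every vertex $v$ of $\mathcal T_\xi$, where $q$ is the fixed point from Definition \ref{def:qDelta+}. Leaves are mixing by convention; for an internal $v$, if every child is non-mixing with probability at most $q$, then
\[
\Pr[v\text{ non-mixing}] \;\le\; \sum_{t>\Delta_+}\xi_t \;+\; \Pr[\mathrm{Bin}(\Delta_+,q)\ge \Delta_+^{\delta}] \;\le\; q,
\]
by (\ref{eq:non-reconA}). Invoking (\ref{eq:non-reconB}) to bound each of these two summands by $e^{-2\mathcal C}$ then sharpens the uniform estimate to $\Pr[v\text{ non-mixing}]\le 2e^{-2\mathcal C}$.

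\textbf{Step 2 (decoupling along a path).} For a fixed root-to-leaf path $v_0,v_1,\ldots,v_h$ I would introduce, for $i<h$, the event
\[
\tilde B_i \;=\; \{\deg(v_i)>\Delta_+\}\;\cup\;\{\text{at least }\Delta_+^{\delta}-1\text{ non-path children of }v_i\text{ are non-mixing}\}.
\]
If $v_i$ is non-mixing then it has either degree exceeding $\Delta_+$ or at least $\Delta_+^\delta$ non-mixing children; at most one of those children (namely $v_{i+1}$) lies on the path, so the remaining $\ge\Delta_+^\delta-1$ non-mixing children are off the path. Hence $\{v_i\text{ non-mixing}\}\subseteq \tilde B_i$. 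The key point is that $\tilde B_0,\ldots,\tilde B_{h-1}$ are \emph{mutually independent}, because each $\tilde B_i$ is measurable with respect to $\deg(v_i)$ and the subtrees rooted at non-path children of $v_i$, and these involve disjoint portions of the tree for different $i$. By Step 1 and (\ref{eq:non-reconB}), together with an elementary comparison of binomial tails at thresholds $\Delta_+^\delta$ and $\Delta_+^\delta-1$, one obtains $\Pr[\tilde B_i]\le C\,e^{-2\mathcal C}$ for an absolute constant $C$.

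\textbf{Step 3 (Chernoff plus union bound over paths).} Independence and a binomial tail estimate yield
\[
\Pr\Big[\sum_{i<h}\mathbf 1[\tilde B_i]\ge \zeta h\Big]\;\le\;\binom{h}{\lceil \zeta h\rceil}(Ce^{-2\mathcal C})^{\zeta h}\;\le\;2^{h}(Ce^{-2\mathcal C})^{\zeta h},
\]
and since the number of non-mixing vertices on the path is dominated by $\sum_i \mathbf 1[\tilde B_i]$, the same bound controls the ``bad path'' event. I would then union-bound over the random paths of $\mathcal T^h_\xi$ via the many-to-one identity for Galton-Watson trees: $\mathbb E[\sum_{w\in L_h}F(\text{path to }w)]=d_\xi^h\,\mathbb E[F(\text{spine})]$, where the spine has i.i.d.\ size-biased degrees $\hat\xi(t)=t\xi(t)/d_\xi$ and off-spine subtrees remain i.i.d.\ $\xi$-GW. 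The independence structure of the $\tilde B_i$ transfers verbatim to the spine, and the tail bounds only improve ($\Pr[\hat D>\Delta_+]\le e^{-2\mathcal C}/d_\xi$ by (\ref{eq:non-reconB})). Thus
\[
\Pr[\mathcal T^h_\xi\notin \mathcal A_{h,\zeta}]\;\le\;\mathbb E[\#\text{bad paths}]\;\le\;d_\xi^h\cdot 2^{h}(Ce^{-2\mathcal C})^{\zeta h}\;=\;\exp\!\big(h\ln d_\xi\,(1-2\zeta\beta)+O(h)\big).
\]

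\textbf{Step 4 (matching the target exponent).} It remains to verify $1-2\zeta\beta\le -(1-\theta(1-\zeta))\beta$ modulo lower-order terms. Rearranging, this becomes $1+\beta\bigl(1-2\zeta-\theta(1-\zeta)\bigr)\le 0$; combining $\theta>1+1/\beta$ with $\theta(1-\zeta)<1$ yields the strict bound $1+\beta(\cdots)<\zeta(1-\beta)\le 0$ (recall $\beta\ge 4$). The strict slack, multiplied by $\ln d_\xi$, easily absorbs the $O(h)$ additive constants for $d_\xi$ large enough, giving the claimed $\exp\bigl(-(1-\theta(1-\zeta))\mathcal C h\bigr)$ bound.

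The main obstacle is Step 2: constructing a decoupling event $\tilde B_i$ that is \emph{simultaneously} an upper bound for $\{v_i\text{ non-mixing}\}$ and independent across the levels of the path. Definition \ref{def:mixing-root} is well-suited to this manoeuvre because ``more than $\Delta_+^\delta$ non-mixing children'' is a many-children criterion: deleting the single path child only shifts the threshold by one, and the binomial tail bound in (\ref{eq:non-reconB}) is robust to that shift.
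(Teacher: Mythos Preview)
Your argument is correct and takes a genuinely different route from the paper. The paper's proof of Proposition~\ref{prop:TinAhz} sets up a two--index recursion for $Q_{h,i}=\Pr[\mathcal T^h_\xi\notin\mathcal A_{h,\zeta}]$ with $i=(1-\zeta)h$, by conditioning on the root degree and splitting into the cases $t\le\Delta_+^\delta$, $\Delta_+^\delta<t\le\Delta_+$, and $t>\Delta_+$; this yields
\[
Q_{h,i}\;\le\;2d_\xi\,Q_{h-1,i-1}\;+\;Q_{h-1,i}\Bigl(2d_\xi\Pr[\mathcal B(\Delta_+,q)\ge\Delta_+^\delta]+\textstyle\sum_{t>\Delta_+}t\,\xi_t\Bigr),
\]
and a separate induction lemma then solves this recursion to produce the claimed exponential decay. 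By contrast, you bypass the recursion entirely: you decouple the non-mixing indicators along a root--to--leaf path into independent dominating events $\tilde B_i$, bound their common probability via~(\ref{eq:non-reconB}), and then use the many--to--one (spine) identity for Galton--Watson trees to convert the union over random paths into a factor $d_\xi^h$ times a single-spine probability. Your approach is more conceptual and imports standard branching--process machinery (size--biasing, spine decomposition) that the paper does not invoke; the paper's approach is more elementary and self-contained but requires tracking a two-parameter array. One small remark: your independence claim in Step~2 is cleanest when stated \emph{directly} under the spine measure rather than for a ``fixed'' path in the original tree (where conditioning on the path's existence tilts the degrees); you effectively acknowledge this in Step~3, but it would read more smoothly to introduce the spine first. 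The exponent comparison in Step~4 is correct, and the strict slack $\zeta(\beta-1)\ln d_\xi$ indeed absorbs the $O(h)$ constants once $d_\xi$ is large, as Theorem~\ref{thrm:ReconThr} assumes.
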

The proof of Proposition \ref{prop:TinAhz} appears in Section \ref{sec:prop:TinAhz}.

\begin{theorem}\label{thrm:Basic}
Let $\xi, \delta, \Delta_+$  and $\alpha$ be as in the statement of Theorem \ref{thrm:ReconThr}.
Also, let $\zeta\in (0,1)$ and let the integer $h\geq 1$.
For $k=(1+\alpha)\Delta_+/\ln \Delta_+$, it holds that 
\begin{eqnarray}
 \mathbb{E} \left [\left . 
\mathbb{G}\left({\cal T}^h_{\xi}\right) \right| {\cal T}^h_{ {\xi}}\in {\cal A}_{h,\zeta}
 \right]
\leq
 \frac{4 (2\Delta_+)^{-0.9(3/4-\zeta)\delta h}}{\Pr[{\cal T}^h_{ {\xi}}\in {\cal A}_{h,\zeta}]}.
   \nonumber
\end{eqnarray}
\end{theorem}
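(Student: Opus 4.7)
The plan is to reduce the statement to a deterministic upper bound on $\mathbb{G}(T)$ that holds for every tree $T\in\mathcal{A}_{h,\zeta}$. Specifically, if one proves that $\mathbb{G}(T)\leq 4(2\Delta_+)^{-0.9(3/4-\zeta)\delta h}$ for every $T\in\mathcal{A}_{h,\zeta}$, then $\mathbb{E}[\mathbb{G}({\cal T}^h_\xi)\cdot\mathbf{1}_{\{{\cal T}^h_\xi\in\mathcal{A}_{h,\zeta}\}}]\leq 4(2\Delta_+)^{-0.9(3/4-\zeta)\delta h}$, and the stated conditional-expectation bound follows at once from the identity $\mathbb{E}[X\mid A]=\mathbb{E}[X\cdot\mathbf{1}_A]/\Pr[A]$. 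All of the work is thus concentrated in obtaining the deterministic estimate on $\mathbb{G}(T)$.

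To obtain the deterministic estimate, I would adapt the unbiasing-boundary formalism of \cite{TreeNonNaya} (itself built on the second-moment approach of \cite{BasicIdeaPaper}) to the present irregular setting. For every vertex $v$ of $T$ I would introduce a local magnetization $M_v$ that measures how much the two conditional distributions $\mu^{X_{L_v}}$ and $\mu^{Z^q_{L_v}}$ on the sub-tree $\tilde{T}_v$ differ at $v$, so that $\mathbb{G}(T)=M_{r_T}$. The central technical step will then be a per-vertex contraction at a mixing vertex $v$: because $v$ has at most $\Delta_+$ children of which at most $(\Delta_+)^{\delta}$ are non-mixing, the choice $k=(1+\alpha)\Delta_+/\ln\Delta_+$ should deliver a bound of the form $M_v\leq (2\Delta_+)^{-\eta\delta}\cdot\max_{u} M_u+\epsilon$, with $\eta$ a constant close to $1$, the maximum taken over children $u$ of $v$, and error $\epsilon$ coming from the at most $(\Delta_+)^{\delta}$ non-mixing children (for which one can only use the trivial bound $M_u\leq 1$).

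I would then iterate this contraction from the leaves to the root. In a tree $T\in\mathcal{A}_{h,\zeta}$, every root-to-level-$h$ path contains at least $(1-\zeta)h$ vertices whose associated sub-trees have mixing roots, so the contraction applies at each of these, while at the remaining (at most $\zeta h$) vertices I would use only the trivial bound $M_v\leq 1$. Composing these inequalities along every such path produces the claimed exponential decay. The specific constant $0.9(3/4-\zeta)$ in the exponent will come from: (i) the per-vertex contraction weight, coupled with the bound $q<3/4$ from Definition \ref{def:qDelta+} that governs how the child magnetizations aggregate; (ii) the loss of $\zeta h$ non-contracting steps per path; and (iii) a small slack factor $0.9$ to absorb the sub-polynomial error terms coming from the $(\Delta_+)^{\delta}$ non-mixing children at each mixing step.

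The main obstacle will be the per-vertex contraction estimate at a mixing vertex. Even though the degree at $v$ is only bounded by $\Delta_+$, one must prove that a typical random boundary leaves ``free'' a linear-in-$k$ number of colours at $v$, so that the magnetization there is indeed a strict contraction of the child magnetizations. This requires carrying out a second-moment calculation in the spirit of \cite{BasicIdeaPaper} but in a setting where (unlike the regular tree case) the children of $v$ are \emph{not} identically distributed and the number of children varies, and then showing that the few $(\Delta_+)^{\delta}$ non-mixing children contribute only an error of order $(\Delta_+)^{\delta-1}$ rather than destroying the contraction. This is precisely the step where the assumption $k\geq(1+\alpha)\Delta_+/\ln\Delta_+$ is essential and where the numerical constants in the exponent are determined.
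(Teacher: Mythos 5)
There is a genuine gap, and it sits at the very first reduction you propose. You claim the theorem would follow from a \emph{deterministic} bound $\mathbb{G}(T)\leq 4(2\Delta_+)^{-0.9(3/4-\zeta)\delta h}$ valid for every $T\in\mathcal{A}_{h,\zeta}$. No such bound holds, and the paper does not prove one. Membership in $\mathcal{A}_{h,\zeta}$ only constrains the degree of the $(1-\zeta)h$ ``mixing'' vertices on each path; the remaining $\zeta h$ vertices can have arbitrarily large degree, so $|L_h(T)|$ is unbounded over $T\in\mathcal{A}_{h,\zeta}$. The paper's core estimate (Proposition \ref{prop:UpcouplingVsHamming}) takes the form
\begin{eqnarray}
 \mathbb{E} \left[\mathbb{G}_{c,k}\left({\cal T}^h_{\xi}\right) \left| \cdot \right. \right ]&\leq &
\frac{1}{\Pr\left[\cdot\right]}
\Bigl(\textrm{tiny} \cdot \mathbb{E}\bigl[|L_h|\bigr]
  +\Delta^*_{\zeta,h}\cdot \mathbb{E}[H(X_L,Z^q_L)]  \Bigr), \nonumber
\end{eqnarray}
where the first ``tiny'' factor comes from union-bounding the biasing-boundary failure probability over all of $L_h(T)$; for a fixed tree with too many leaves this term alone makes the bound vacuous. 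The argument is saved only because the \emph{expectations over tree instances} are controlled: $\mathbb{E}[|L_h({\cal T}^h_\xi)|]=d_\xi^h$, which the stretched-exponential prefactor dominates. This averaging over trees is not a dispensable convenience — it is the mechanism.

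Beyond that, the recursion you sketch, $M_v\leq(2\Delta_+)^{-\eta\delta}\max_u M_u+\epsilon$ with a nonzero additive $\epsilon$, does not iterate to give exponential decay in $h$: an additive error at every level accumulates to order $\epsilon$ at the root, not to $\lambda^h$. The actual contraction in the paper is not per-vertex in this sense. It is Proposition \ref{prop:Delta-Start}: a \emph{single} boundary disagreement at a leaf, under a non-biasing boundary along the whole path, contracts the root's distribution by $(2\Delta_+^{-\gamma})^{(3/4-\zeta)h}$. The full bound is then obtained by a triangle-inequality telescoping over boundary configurations that differ in one leaf at a time, and the number of such steps is controlled not by a worst-case argument but by $\mathbb{E}[H(X_L,Z^q_L)]$ — which requires choosing a specific coupling of $X$ and $Z^q$ so that a disagreement propagates to a child with probability $1/k$, giving $\mathbb{E}[H]\leq\bigl(\tfrac{\ln\Delta_+}{1+\alpha}\bigr)^h$. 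This coupling step, again an expectation over both colourings and tree instances, is absent from your proposal but essential. Finally, a minor mislocation: the second-moment idea of \cite{BasicIdeaPaper} appears only once in the paper, in Proposition \ref{prop:reduction}, to pass from $\|\mu^i-\mu^j\|$ to the quantity $\mathbb{G}_{c,k}$; it plays no role inside the proof of Theorem \ref{thrm:Basic} itself, whereas you place it at the heart of the contraction.
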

The proof of Theorem \ref{thrm:Basic} appears in Section \ref{sec:thrm:Basic}.

Set  $\zeta=1/4$, and $\theta=1.3$, applying Proposition \ref{prop:TinAhz} 
we get that 
\begin{eqnarray}
\Pr[{\cal T}^h_ {\xi}\notin {\cal A}_{h, \zeta }]\leq d^{-0.1h}_{\xi}.\label{eq:from:prop:TinAhz}
\end{eqnarray}
For the same  values of $\zeta,\theta$ as above,   (\ref{eq:from:prop:TinAhz}) with  Theorem \ref{thrm:Basic} 
gives that 
\begin{eqnarray}
 \mathbb{E} \left [\left . \mathbb{G}({\cal T}^h_{\xi})  \right| {\cal T}^h_{ {\xi}}\in {\cal A}_{h,\zeta} \right]\leq
8 (2\Delta_+)^{-0.45 \delta h}.
\label{eq:thrm:Basic}
\end{eqnarray}
Since we always have $0\leq \mathbb{G}(T)\leq 1$, for $\zeta$ and $\theta$ as above,  we get that
\begin{eqnarray}
 \mathbb{E} \left [\mathbb{G}({\cal T}^h_{\xi})  \right]&\leq & \mathbb{E} \left [\left . \mathbb{G}({\cal T}^h_{\xi})  \right| {\cal T}^h_{ {\xi}}\in {\cal A}_{h,1/4} \right]+\Pr\left[{\cal T}^h_{ {\xi}}\notin {\cal A}_{h,1/4}\right ] %\nonumber \\
\leq 16 (2\Delta_+)^{-0.45 \delta h}, \nonumber
\end{eqnarray}
where the last inequality follows from  (\ref{eq:from:prop:TinAhz}) and (\ref{eq:thrm:Basic}).
The theorem follows.

\section{Proof of Theorem \ref{thrm:Basic}}\label{sec:thrm:Basic}

Consider first the quantity $\mathbb{G}_{c,k}(T)$,  for  some fixed tree $T$.
Then, it holds that 
\begin{eqnarray}
\mathbb{G}_{c,k}(T)&=& \left|\left| \mu^{X_L}(\cdot)-\mu^{Z^q_L}(\cdot) \right|\right|_{r_T} . 
\end{eqnarray}

\noindent
An important remark from  Proposition \ref{prop:reduction} is that it allows to use any kind
of correlation between the $X,Z^q$. For this reason we assume that $(X,Z^q)$ is distributed 
as in  $\nu^T_{c,q}$. We are going to specify this distribution soon. 
First we get the following result.

\begin{proposition}\label{prop:UpcouplingVsHamming}
Let $\xi, \delta, \Delta_+$  and $\alpha$ be as in the statement of Theorem \ref{thrm:Basic}.
Also  let $0\leq \gamma\leq \delta$.
Then for $k=(1+\alpha)\Delta_+/\ln \Delta_+$, it hold that 
\begin{eqnarray}
 \mathbb{E} \left[\mathbb{G}_{c,k}\left({\cal T}^h_{\xi}\right) \left| {\cal T}^h_{\xi} \in {\cal A}_{h,\zeta} \right. \right ]&\leq &
\frac{1}{\Pr\left[{\cal T}^h_{\xi} \in {\cal A}_{h,\zeta}\right]} 
\left(2 \exp\left(-\frac18(\Delta_+)^{\frac{h/4-1}{2}\delta+\frac{7}{8}\frac{\alpha}{1+\alpha}} \right)
\cdot \mathbb{E}\left[\left|L_h\left({\cal T}^h_{\xi}\right)\right|\right] + \right  . \nonumber \\
&&  \left .  +2(2(\Delta_+)^{-\gamma})^{\left(3/4-\zeta\right)h}\cdot \mathbb{E}[H(X_L,Z^q_L)]  \right).  
\label{eq:prop:UpcouplingVsHamming}
\end{eqnarray}
\end{proposition}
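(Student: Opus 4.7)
The plan is to construct a recursive coupling $\nu^T_{c,q}$ of the two root-conditioned processes and then bound $\mathbb{G}_{c,k}(T)$ by a two-region analysis: the top $h_0:=\lceil 3h/4\rceil$ levels of the tree, where disagreements contract through the mixing vertices, and the bottom $\lfloor h/4\rfloor$ levels, where a concentration argument produces the $\mathbb{E}[|L_h(\cT^h_\xi)|]$ term.

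First, I would define $\nu^T_{c,q}$ top-down by setting $X(r_T)=c$ and $Z^q(r_T)=q$ and, at every non-leaf vertex $v$, coupling the colour of each child via the maximal coupling conditional on $(X(v), Z^q(v))$. This preserves the required marginals $\mu^c$ and $\mu^q$ and yields a joint law under which $H(X_L, Z^q_L)$ tracks the number of live disagreements at level $h$. Since $\|\mu^{X_L}_{r_T}-\mu^{Z^q_L}_{r_T}\|_{TV}$ vanishes whenever $X_L=Z^q_L$, the quantity $\mathbb{G}_{c,k}(T)$ is controlled by the coupling's behaviour on the event that disagreements survive all the way down to the leaves.

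Next, I would split on a bad event $\cB=\cB(T)$ that captures the possibility that disagreements, once they enter the bottom $\lfloor h/4\rfloor$ levels of $T$, fail to be damped by the subtree coupling. A union bound over the level-$h_0$ vertices combined with a Chernoff estimate for the coupled colourings of their $\le\Delta_+$-sized subtrees should yield $\Pr_\nu[\cB]\le 2\exp\bigl(-\tfrac{1}{8}(\Delta_+)^{(h/4-1)\delta/2+7\alpha/(8(1+\alpha))}\bigr)\cdot |L_h(T)|$, which after averaging over $T$ gives the first term of (\ref{eq:prop:UpcouplingVsHamming}). The exponent $(h/4-1)\delta/2$ arises because the bad event compounds the per-vertex failure probability across the $h/4$ levels in the bottom part, with the extra $7\alpha/(8(1+\alpha))$ coming from the effective single-step disagreement probability, which is bounded away from $1/2$ since $k=(1+\alpha)\Delta_+/\ln\Delta_+$.

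On $\cB^c$, I would propagate the disagreement upward from level $h_0$ to $r_T$ by an unbiasing-boundary argument at each mixing vertex. The key contraction is that, at a mixing vertex, at most $(\Delta_+)^\delta$ of its $\le\Delta_+$ children may fail to be mixing, which, combined with $k=(1+\alpha)\Delta_+/\ln\Delta_+$ free colours, yields a per-mixing-vertex contraction factor of at most $2(\Delta_+)^{-\gamma}$ for any $\gamma\le\delta$. Since $T\in\cA_{h,\zeta}$ forces each root-to-leaf path to contain $\ge(1-\zeta)h$ mixing vertices, and only $\lfloor h/4\rfloor$ of them can lie in the bottom part, at least $(3/4-\zeta)h$ mixing vertices survive in the upper part of the path. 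Composing contractions along this chain and using the Hamming distance as the per-leaf input gives the factor $(2(\Delta_+)^{-\gamma})^{(3/4-\zeta)h}\cdot\mathbb{E}[H(X_L,Z^q_L)]$; dividing by $\Pr[\cT^h_\xi\in\cA_{h,\zeta}]$ then produces the conditional bound in the statement.

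The main obstacle is establishing the per-vertex contraction of $2(\Delta_+)^{-\gamma}$ at a mixing vertex in a form that composes cleanly across the $(3/4-\zeta)h$ levels, adapted to the non-regular offspring setting of Galton-Watson trees. This requires a careful analogue of the unbiasing-boundary formalism of \cite{TreeNonNaya}: I must show that at a mixing vertex, a typical boundary colouring leaves enough colours free at each child so that even allowing the $(\Delta_+)^\delta$ non-mixing children to contribute arbitrarily, the effective disagreement-propagation probability across the vertex is at most $2(\Delta_+)^{-\gamma}$. Closing this loop while simultaneously verifying that the bad event $\cB$ has exactly the claimed exponent is the technical heart of the proposition.
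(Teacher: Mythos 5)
Your proposal captures the right decomposition — Jensen to pull the expectation inside the TV, a bad event whose probability scales with $|L_h|$, a contraction factor $(2\Delta_+^{-\gamma})^{(3/4-\zeta)h}$ through the mixing vertices in the top three‑quarters of the tree, and the Hamming distance $\mathbb{E}[H(X_L,Z^q_L)]$ as the per-leaf input. That is indeed the paper's picture. However, two points in your framing are off in ways that matter technically.

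First, the bad event $\cB$ is not about ``disagreements failing to be damped by the subtree coupling in the bottom levels.'' In the paper it is entirely a property of the \emph{random boundary} rather than of the coupling: for a disagreeing leaf $v$, the bad event is that $X_L\notin\cU_v$ or $Z^q_L\notin\cU_v$, i.e.\ that the random boundary \emph{biases} the root of some mixing subtree $\tilde T_u$ with $u$ on the path from $v$ to $r_T$ at depth $\le 3h/4$ (Definition of $\cU_w$ and Proposition~\ref{prop:prob-UXL}). The contraction factor $2\Delta_+^{-\gamma}$ per mixing vertex in Proposition~\ref{prop:Delta-Start} is only available \emph{on the complement} of that biasing event (via Lemma~\ref{lemma:IndBiasCond}), so if you define $\cB$ in terms of coupling behaviour rather than boundary bias you will not be able to justify the contraction on $\cB^c$. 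Relatedly, the exponent $\frac78\frac{\alpha}{1+\alpha}$ does not come from a one-step disagreement probability being ``bounded away from $1/2$''; it is the lower bound $\mathbb{E}[Y]\ge(\Delta_+)^{\frac78\frac{\alpha}{1+\alpha}}$ on the number of unused colours at the bottom level of a mixing subtree, which feeds the Chernoff bound in Proposition~\ref{prop:BiasProb}. The $|L_h|$ factor arises from summing the bad-event probability over all leaves, not from a union bound over level-$h_0$ vertices.

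Second, the paper does not propagate the single coupling $\nu^T_{c,q}$'s disagreements up the tree. Given $X_L=\sigma_L$ and $Z^q_L=\tau_L$ with Hamming distance $m$, it builds an interpolating chain $Z_0,\dots,Z_{2m}$ of boundary conditions that flip one leaf at a time (through ``free'' states), applies the triangle inequality $\|\mu^{\sigma_L}-\mu^{\tau_L}\|\le\sum_i\|\mu^{Z_i}-\mu^{Z_{i+1}}\|$, and bounds each term — a single-leaf disagreement — by a \emph{separate} step-wise coupling of $\mu^{Z_i}$ and $\mu^{Z_{i+1}}$ along the path to the root (Proposition~\ref{prop:Delta-Start}). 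That telescoping is what cleanly turns the per-leaf bound into a factor of $H(\sigma_L,\tau_L)$ and hence $\mathbb{E}[H(X_L,Z^q_L)]$ under $\nu^T_{c,q}$. Your plan of directly tracking disagreements of $\nu^T_{c,q}$ from the leaves up would need this telescoping step made explicit, since otherwise you cannot isolate the contribution of each disagreeing leaf independently.

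So the overall strategy is right, but you should reframe $\cB$ as a biasing-boundary event (so that Lemma~\ref{lemma:IndBiasCond} applies on $\cB^c$), recover the $|L_h|$ factor from a union bound over leaves with Proposition~\ref{prop:prob-UXL}, and insert the explicit single-disagreement telescoping to obtain the Hamming-distance factor.
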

For the above proposition we remark the following: On the r.h.s. of (\ref{eq:prop:UpcouplingVsHamming}) the rightmost expectation term is w.r.t. both the joint distribution 
of $X,Z^q$ and the distribution over the tree ${\cal T}^h_{\xi}$. The rest expectations are w.r.t. the distributions over trees only, i.e.  ${\cal T}^h_{\xi}$.  
The proof of Proposition \ref{prop:UpcouplingVsHamming} appears in Section \ref{sec:prop:UpcouplingVsHamming}.

For showing the theorem we bound appropriately the two expectations on the r.h.s. of  (\ref{eq:prop:UpcouplingVsHamming}).
It is elementary   that 
\begin{eqnarray}
\mathbb{E}\left[\left|L_h\left({\cal T}^h_{\xi}\right)\right|\right]=\left(d_{\xi}\right)^h. \label{eq:Bound1BasicThrm}
\end{eqnarray}

\noindent
For bounding $\mathbb{E}\left [H(X_L, Z^q_L)\right]$ we need to specify a coupling between the random variables $X$ and $Z^q$
which minimizes their expected Hamming distance. Observe that the expected
hamming distance is both w.r.t. the coupling and the randomness of the trees. 

The coupling of $X$ and $Z^q$ we use, can be defined inductively  as follows: We colour the vertices from
the root down to the leaves. For a vertex $v$ whose father $w$ is such that $X(w)=Z^q(w)$ we  
couple $X(v)$ and $Z^q(v)$ identically,  i.e. $X(v)=Z^q(v)$. On the other hand, when $X(w)\neq Z^q(w)$ we  set  $X(v)=Z^q(v)$ unless $X(v)=Z^q(w)$, then we set $Z^q(v)=X(w)$.

Let   $w$ be a vertex in the tree and let $u$ be a child of $w$.
Then, for the coupling above, it holds that
$$
\Pr\left[X(u)\neq Z^q(u)|X(w)\neq Z^q(w)\right]=k^{-1}.
$$
In  ${\cal T}^h_{\xi}$,   the expected number of children per (non-leaf) vertex is
$d_{\xi}$. Then, it is elementary to show that for a disagreeing vertex, the 
expected number of  disagreeing children is $d_{\xi}/k\leq \frac{\ln \Delta_+}{1+\alpha}$, 
since $\Delta_+>d_{\xi}$. 
Furthermore, it holds that 
\begin{eqnarray}\label{eq:Bound2BasicThrm}
\mathbb{E}[H(X_L,Y_L)]\leq \left( \frac{\ln \Delta_+}{(1+\alpha)} \right)^h.
\end{eqnarray}
Observe that the above expectation is w.r.t. {\em both} tree instances and random colourings. 

The theorem follows be combining (\ref{eq:Bound2BasicThrm}),  (\ref{eq:Bound1BasicThrm})  and
 Proposition \ref{prop:UpcouplingVsHamming}.

\section{Proof of Proposition \ref{prop:UpcouplingVsHamming}}
\label{sec:prop:UpcouplingVsHamming}

The previous setting allows to use  ideas based on the notion of 
biasing-unbiasing boundary  (introduced in \cite{TreeNonNaya}) to
prove Proposition \ref{prop:UpcouplingVsHamming}.
To be more precise, the definition of biasing non-biasing boundaries we use 
here  is slightly different than that \cite{TreeNonNaya}, but the approach is
 similar.

\begin{definition}[Non-Biasing Boundary]\label{def:BiasBound}
For  $\alpha,\gamma,\delta,\Delta_+$  as in the statement of Proposition \ref{prop:UpcouplingVsHamming},
we let $k=(1+\alpha)\Delta_+ /\ln \Delta_+$, and let some integer $t\geq 1$.
Consider a tree  $H$  of height $t$ such that $r(H)$ is  mixing.
For  a $k$-colouring of $H$ $\sigma$ we say that $\sigma_{L}$ does not bias the root if the following holds:
\begin{itemize}
 \item if $t=1$, then $\sigma(L_t(G))$ uses all but at least $(\Delta_+)^{\gamma}$ many colours.
 \item if $t>1$, then the following holds: We let  $v_1, \ldots, v_s$ 
are the children of the root of $H$, where  $s\leq \Delta_+$. Also, let 
  $\mathbb{S}\subseteq \{\tilde{H}_{v_1}, \tilde{H}_{v_2},\ldots, \tilde{H}_{v_s}\}$ 
contain  only the subtrees whose roots are mixing.
Then, there are at most $\Delta^{\delta}_+$ many {subtrees $\tilde{H}_{v_i}\in \mathbb{S}$}
such that  $\sigma(L_{t-1}(\tilde{H}_{v_i}))$ biases the root $r(\tilde{H}_{v_i})$.   
\end{itemize}
Also, we let ${\cal U}(T)$ denote the set of all boundary conditions on $L$ which are not biasing.
\end{definition}

\noindent
Note the notion of non-biasing boundary condition makes sense only for trees with mixing roots.

%\begin{definition}[Biasing Boundary]\label{def:BiasBound}
\begin{lemma}\label{lemma:IndBiasCond}
Let $\gamma,\alpha,\Delta_+$ be as in the statement of Proposition \ref{prop:UpcouplingVsHamming}.
Let $k=(1+\alpha)\frac{\Delta_+}{\ln \Delta_+}$, also let some integer $t\geq 1$.
Consider  a fixed tree $T$ of height $t$ and let  $L=L_t(T)$.
For $\sigma$, a  $k$-colouring of $T$, such that $\sigma_L$ is  biasing for the root of $T$
the following is true:
There is at least one $c\in [k]$ such that for $X$, a random $k$-colouring of $T$,
it holds that 
\begin{eqnarray}\nonumber
\Pr[X_{r(T)}=c|X_L=\sigma_L]\geq (\Delta_+)^{-\gamma}.
\end{eqnarray}
\end{lemma}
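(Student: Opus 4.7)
The plan is to prove the lemma by induction on the height $t$ of $T$, using the standard Gibbs tree recursion for the marginal of the root.

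For the base case $t=1$, the children of $r(T)$ are precisely the leaves in $L$. Biasing of $\sigma_L$ (Definition \ref{def:BiasBound}) means that fewer than $(\Delta_+)^{\gamma}$ colours in $[k]$ are absent from $\sigma_L$, so the set $A$ of colours available at $r(T)$ satisfies $1\le |A|<(\Delta_+)^{\gamma}$, nonempty because $\sigma$ is itself a proper $k$-colouring and so $\sigma(r_T)\in A$. Conditional on $X_L=\sigma_L$, the marginal of $X_{r(T)}$ is uniform on $A$, so every $c\in A$ satisfies $\Pr[X_{r(T)}=c\mid X_L=\sigma_L]=1/|A|>(\Delta_+)^{-\gamma}$, which settles the base case.

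For the inductive step at height $t>1$, write $v_1,\dots,v_s$ for the children of $r(T)$, put $L_i=L_{t-1}(\tilde T_{v_i})$, and let $\eta^i_c=\Pr[X_{v_i}=c\mid X_{L_i}=\sigma_{L_i}]$. Conditional independence of the subtree colourings given the boundary, together with the proper-colouring constraint $X_{r(T)}\neq X_{v_i}$, yields the Gibbs tree recursion
\[
\Pr[X_{r(T)}=c\mid X_L=\sigma_L] \;=\; \frac{\prod_{i=1}^{s}(1-\eta^i_c)}{\sum_{c'\in [k]}\prod_{i=1}^{s}(1-\eta^i_{c'})}.
\]
By Definition \ref{def:BiasBound}, strictly more than $(\Delta_+)^{\delta}$ of the mixing subtrees $\tilde T_{v_i}$ carry a biasing boundary at $v_i$; applying the inductive hypothesis to each of these yields, for each such $i$, a colour $c_i\in [k]$ with $\eta^i_{c_i}\ge (\Delta_+)^{-\gamma}$.

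The final step is to promote this child-level concentration into a root-level lower bound via the recursion. The natural route is to pigeonhole the labels $\{c_i\}$ over the $k$ colours so as to isolate a common colour $c^\star$ appearing as $c_i$ for a substantial subset of the biased children; on this subset $1-\eta^i_{c^\star}\le 1-(\Delta_+)^{-\gamma}$, so $\prod_i(1-\eta^i_{c^\star})$ contracts relative to $\prod_i(1-\eta^i_c)$ for colours $c$ not pigeonhole-targeted, and the normalizer $Z=\sum_{c'}\prod_i(1-\eta^i_{c'})$ ends up dominated by the factor for some colour whose marginal then exceeds $(\Delta_+)^{-\gamma}$. The main obstacle is precisely this last partition-function estimate: carrying out the calculation requires carefully exploiting the hypothesis $\gamma\le\delta$ from Proposition \ref{prop:UpcouplingVsHamming} together with the structural constraint that $r(T)$ is mixing (so that $s\le\Delta_+$ bounds the number of factors in the product), without which the contraction in the numerator need not survive normalization.
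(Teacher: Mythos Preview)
Your base case $t=1$ is fine, but the inductive step cannot be completed as sketched, and in fact the paper's own argument runs in the \emph{opposite direction}: it assumes $\sigma_L$ is \emph{non}-biasing (at most $\Delta_+^{\delta}$ mixing subtrees carry a biasing boundary, plus at most $\Delta_+^{\delta}$ non-mixing subtrees) and proves $\Pr[X_{r(H)}=c\mid X_L=\sigma_L]<\Delta_+^{-\gamma}$ for \emph{every} $c$. This is done by lower-bounding the partition function $\sum_{c'}\prod_i(1-\eta^i_{c'})$: one discards the at most $2\Delta_+^{\gamma+\delta}$ colours favoured by some child, and on the remaining set $U$ (where every $\eta^i_{c'}<\Delta_+^{-\gamma}$) applies the arithmetic--geometric mean inequality to obtain a bound of order $|U|\exp(-s/|U|)$ with $s\le\Delta_+$ and $|U|\approx k$. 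It is this implication---non-biasing boundary $\Rightarrow$ all root marginals below $\Delta_+^{-\gamma}$---that is actually invoked in Propositions~\ref{prop:Delta-Start} and~\ref{prop:BiasProb}.

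Your direct route has a genuine gap, because child-level bias need not propagate upward. If many children favour a common colour $c^\star$, then $\prod_i(1-\eta^i_{c^\star})$ is \emph{small}, so the root's marginal at $c^\star$ is \emph{low}, not high; and nothing prevents the remaining colours from sharing the mass almost uniformly, so no single colour is forced above $\Delta_+^{-\gamma}$. Concretely, take a height-$2$ tree whose mixing root has exactly $\lfloor\Delta_+^{\delta}\rfloor+1$ children, each frozen (via $k-1$ suitably coloured leaves) at a distinct colour. Every child subtree is biased, so the boundary is combinatorially biasing for the root, yet the root is uniform over the remaining $k-\Delta_+^{\delta}-1\approx k$ colours and its maximum marginal is of order $1/k\ll\Delta_+^{-\gamma}$. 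Hence the pigeonhole/partition-function estimate you flag as the ``main obstacle'' is not merely a technicality---it fails in general---and you should instead establish the implication in the direction the paper's proof actually carries out.
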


\remove{
\begin{definition}[Biasing Boundary]\label{def:BiasBound}
Let $\gamma,\alpha,\Delta_+$ be as in the statement of Proposition \ref{prop:UpcouplingVsHamming}.
Let $k=(1+\alpha)\frac{\Delta_+}{\ln \Delta_+}$, also let some integer $l\geq 1$.
For a fixed tree $T$ of height $l$, we let $\sigma$, $X$ be a fixed $k$-colouring
and a random $k$-colouring of $T$, respectively. For $L=L_h(T)$,  we say that
$\sigma_L$ biases  $r(T)$ if there is $c\in [k]$  such that 
\begin{eqnarray}\nonumber
\Pr[X_{r(T)}=c|X_L=\sigma_L]\geq (\Delta_+)^{-\gamma}.
\end{eqnarray}
Also, we let ${\cal U}(T)$ denote the set of all boundary conditions which are not biasing.
\end{definition}

\noindent
The following useful lemma provides a sufficient condition for a boundary not  to be biasing.
%The lemma below is a kind of extension of Lemma 3.2  from  \cite{TreeNonNaya} to current setting.

\begin{lemma}\label{lemma:IndBiasCond}
For  $\alpha,\gamma,\delta,\Delta_+$  as in the statement of Proposition \ref{prop:UpcouplingVsHamming},
we let $k=(1+\alpha)\Delta_+ /\ln \Delta_+$, and some integer $t\geq 1$.
Consider a tree  $H$  of height $t$ such that $r(H)$ is  mixing, while the vertices $v_1, \ldots, v_s$ 
are the children of the root of $H$, for some $s\leq \Delta_+$.
We let    $\mathbb{S}\subseteq \{\tilde{H}_{v_1}, \tilde{H}_{v_2},\ldots, \tilde{H}_{v_s}\}$ be
contain  only the subtrees whose roots are mixing.
A $k$-colouring of $H$ $\sigma$ does not bias the root if the following holds:
There are at most $\Delta^{\delta}$ many {subtrees $\tilde{H}_{v_i}\in \mathbb{S}$}
such that  $\sigma(L_{t-1}(\tilde{H}_{v_i}))$ biases the root $r(\tilde{H}_{v_i})$.   Then $\sigma(L_{h}(H))$ 
does not bias $r(H)$.
\end{lemma}
}
The proof of Lemma \ref{lemma:IndBiasCond} appears in Section \ref{sec:lemma:IndBiasCond}.

\begin{definition}
Let  $\alpha,\gamma,\delta,\Delta_+,h$  be as in the statement of Proposition \ref{prop:UpcouplingVsHamming}.
Consider a tree  $T$  of height $h$ and let  $L=L_h(T)$.  For every vertex $w\in L$ we define the set of 
boundaries ${\cal U}_{w}\subseteq [k]^{L}$ as follows:   Let ${\cal P}$ denote the path that 
connects $r_T$ and $w$ and we let 
$$
{\cal M}=\left\{v\in {\cal P}: dist(r_T,v)\leq \frac34h, \; \tilde{T}_v \textrm{ has mixing root} \right\}.
$$
Then ${\cal U}_{w}$ contains the boundary conditions on $L$ which do not bias the root of any of the subtrees  $\tilde{T}_v$ where $v \in{\cal M}$.
\end{definition}

\begin{proposition}\label{prop:Delta-Start}
Let  $\alpha,\gamma,\delta,\Delta_+,h,\zeta$  be as in the statement of Proposition \ref{prop:UpcouplingVsHamming}.
Let some fixed tree $T\in {\cal A}_{h,\zeta}$ and  let $L=L_h(T)$.
Consider $\sigma, \tau$ to be two $k$-colourings of  $T$
such that $H(\sigma_L, \tau_L)=1$.
Furthermore, assume that $\sigma(w)\neq \tau(w)$ for some $w\in L$,  while  both $\sigma_L, \tau_L \in {\cal U}_{w}$. 
Then it holds that
\begin{eqnarray}
 ||\mu^{\sigma_L}-\mu^{\tau_L}||_{r(T)}\leq \Delta^*_{\zeta,h}=
(2\Delta^{-\gamma}_+)^{\left(3/4-\zeta\right)h}. \nonumber
\end{eqnarray}
\end{proposition}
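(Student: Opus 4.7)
The plan is to walk along the path $\cP$ from $w$ up to $r_T$, tracking $D_v := \|\mu^{\sigma_L}-\mu^{\tau_L}\|_{v}$ at each vertex $v \in \cP$. The base case is $D_w = 1$, since $\sigma(w) \neq \tau(w)$ are fixed colours. I will establish two per-step bounds: the universal $D_v \leq D_u$ whenever $v$ is the parent on $\cP$ of $u$, and the strengthened $D_v \leq 2\Delta_+^{-\gamma}\,D_u$ whenever additionally $v \in \cM$. The assumption $T \in \cA_{h,\zeta}$ forces at least $(1-\zeta)h$ vertices on $\cP$ to be mixing-rooted; restricting to the top $3h/4$ levels (which defines $\cM$) loses at most $h/4$ of them, leaving $|\cP \cap \cM| \geq (3/4-\zeta)h$ contracting steps. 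Composing the step bounds then gives $D_{r_T} \leq (2\Delta_+^{-\gamma})^{(3/4-\zeta)h} = \Delta^*_{\zeta,h}$, as required.

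At any interior vertex $v$ on $\cP$ with children $v_1,\dots,v_s$, and $u = v_{i^*}$ the unique child on $\cP$, the $k$-colouring tree recursion yields
\[
\mu^{\sigma_L}_{\{v\}}(c) \;\propto\; \prod_{j=1}^{s}\bigl(1-\mu^{\sigma_L}_{\{v_j\}}(c)\bigr),
\]
and similarly for $\tau$. Because $\sigma_L$ and $\tau_L$ agree on every leaf of $L$ except $w$, the sibling marginals $\mu^{\bullet}_{\{v_j\}}$ coincide between $\sigma$ and $\tau$ for every $j \neq i^*$, so only the $u$-factor changes across the two recursions. The universal bound $D_v \leq D_u$ is then the data-processing inequality for the deterministic map from sibling marginals to $\mu_{\{v\}}$. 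For the contraction when $v \in \cM$, the subtree $\tilde T_v$ has a mixing root and, because $\sigma_L,\tau_L \in \cU_w$, both restricted boundaries are non-biasing at $r(\tilde T_v)$. Combined with the mixing property this forces at most $2\Delta_+^{\delta}$ children of $v$ to be \emph{exceptional} (non-mixing-rooted, or mixing-rooted with a biasing restricted boundary); every remaining \emph{good} child $v_j$ carries mixing-rooted non-biasing restricted boundaries under both $\sigma_L$ and $\tau_L$, and by the reasoning behind Lemma~\ref{lemma:IndBiasCond} applied inductively satisfies $\mu^{\bullet}_{\{v_j\}}(c) < \Delta_+^{-\gamma}$ for every colour $c$. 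Writing $g(c) := \prod_{j \neq i^*}(1-\mu^{\sigma_L}_{\{v_j\}}(c))$ (identical for $\sigma$ and $\tau$), $p^{\bullet}(c) := \mu^{\bullet}_{\{u\}}(c)$ and $Z^{\bullet} := \sum_{c'}(1-p^{\bullet}(c'))g(c')$, the identity $\mu^{\bullet}_{\{v\}}(c) = (1-p^{\bullet}(c))g(c)/Z^{\bullet}$, together with $|Z^{\sigma} - Z^{\tau}| \leq 2 g_{\max} D_u$ and $Z^{\bullet} \geq (1-\Delta_+^{-\gamma})\sum_{c'}g(c')$, yields after a direct computation the contraction $D_v \leq 2\Delta_+^{-\gamma}\,D_u$.

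The main obstacle is precisely the contraction step. The delicate point is that the $\leq 2\Delta_+^{\delta}$ exceptional children can drive $(1-\mu^{\bullet}_{\{v_j\}}(c))$ close to $0$ for a small collection of colours, threatening to blow up the ratio $g_{\max}/g_{\min}$ and amplify the single perturbation at $u$. One must exploit that $2\Delta_+^{\delta}$ is sub-polynomially small compared to $k \approx \Delta_+/\ln\Delta_+$, so that the colours singled out by exceptional children form a negligible fraction and their contribution (which is identical on the $\sigma$- and $\tau$-sides) is absorbed into the constant factor $2$ of $2\Delta_+^{-\gamma}$, while the uniform $\Delta_+^{-\gamma}$ bound on good children's marginals keeps $g$ flat enough on the remaining colours for a genuine $\Delta_+^{-\gamma}$ contraction per step to survive.
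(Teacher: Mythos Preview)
Your approach is correct in outline but differs from the paper's. The paper argues via an explicit \emph{coupling}: it samples $X\sim\mu^{\sigma_L}$ and $Y\sim\mu^{\tau_L}$, couples them greedily along the path $w=v_0,\dots,v_h=r_T$, and bounds the one–step survival probability $\Pr[X(v_{j+1})\neq Y(v_{j+1})\mid X(v_j)\neq Y(v_j)]$. When $v_{j+1}\in\cM$ this probability is at most $2\Delta_+^{-\gamma}$, because disagreement survives only if $X(v_{j+1})$ hits $Y(v_j)$'s colour or vice versa, and each such conditional marginal is controlled (after fixing the parent $v_{j+2}$) by the Lemma~\ref{lemma:IndBiasCond} calculation. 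You instead run an analytic recursion on $D_v=\|\mu^{\sigma_L}-\mu^{\tau_L}\|_v$ and extract the same $2\Delta_+^{-\gamma}$ factor as a Dobrushin contraction coefficient for the kernel $c'\mapsto K(c',\cdot)\propto\mathbb{1}[\cdot\neq c']\,g(\cdot)$. Both routes land on the identical product bound; the coupling is slightly more direct because the contraction reduces to a single colour-hitting probability, whereas your route must control the whole ratio $g_{\max}/Z^{\bullet}$.

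There is one soft spot in your write–up. You assert $Z^{\bullet}\geq(1-\Delta_+^{-\gamma})\sum_{c'}g(c')$, apparently from $\max_c p^{\bullet}(c)<\Delta_+^{-\gamma}$. But $p^{\bullet}=\mu^{\bullet}_{\{u\}}$ is the marginal of the \emph{path child} $u$, and nothing prevents $u$ from being one of your exceptional children; in that case $p^{\bullet}(c_0)$ may be close to $1$ and the inequality fails as stated. The fix is to bound $g$ directly: $Z^{\bullet}=G-\sum_c p^{\bullet}(c)g(c)\geq G-g_{\max}$, and then observe that $g(c)/G$ is exactly the root marginal in the Gibbs measure on $\tilde T_v$ with the subtree $\tilde T_u$ \emph{deleted}. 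Deleting one child preserves both ``mixing root'' and ``non-biasing boundary'' at $v$ (the counts in Definitions~\ref{def:mixing-root} and~\ref{def:BiasBound} can only drop), so the Lemma~\ref{lemma:IndBiasCond} computation applied at $v$ on the reduced tree gives $g_{\max}/G<\Delta_+^{-\gamma}$, hence $Z^{\bullet}\geq(1-\Delta_+^{-\gamma})G$ and $D_v\leq\frac{2g_{\max}}{Z^{\bullet}}D_u\leq 2\Delta_+^{-\gamma}D_u$ as you want. This is essentially what the paper is doing when it says the bound ``follows by using arguments very similar to those for Lemma~\ref{lemma:IndBiasCond}'' after conditioning on both neighbours of $v_{j+1}$.
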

The proof of Proposition \ref{prop:Delta-Start} appears in Section \ref{sec:prop:Delta-Start}.

\begin{proposition}\label{prop:prob-UXL} 
Let  $\alpha,\gamma,\delta,\Delta_+,h,\zeta$  be as in the statement of Proposition \ref{prop:UpcouplingVsHamming}.
Consider a fixed tree $T\in {\cal A}_{h,\zeta}$. 
Let $X$ be a random $k$-colouring of $T$.  For $k=(1+\alpha)\Delta_+/\ln \Delta_+$
and any $w\in L_h(T)$ it holds that
\begin{eqnarray}
 \Pr\left[ X_L\notin {\cal U}_{w} \right]\leq
2 \exp\left(-\frac18(\Delta_+)^{\frac{h/4-1}{2}\delta+\frac{7}{8}\frac{\alpha}{1+\alpha}} \right).
%{\color{red} 2\exp\left( -\frac18(\Delta_+)^{\frac{h/4-1}{2}\delta+\frac{7}{8}\frac{\alpha}{1+\alpha}}\right).} 
\nonumber
\end{eqnarray}
\end{proposition}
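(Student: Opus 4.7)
\noindent My plan is to reduce the bound to a recursion on "biasing probabilities" for mixing-rooted subtrees. First I would apply a union bound over $v \in {\cal M}$; since $|{\cal M}|\le 3h/4+1$ and the event that $X_L$ biases the root of $\tilde T_v$ depends only on $X|_{\tilde T_v}$, it suffices to control
\[
p_t := \sup\{\Pr[X_L \text{ biases } r(H)] : H \text{ a height-}t \text{ mixing-rooted tree}\}
\]
at $t=h/4$: every $v\in{\cal M}$ lies at distance $\le 3h/4$ from $r_T$, so $\tilde T_v$ has height $\ge h/4$. Because the final bound on $p_{h/4}$ will be doubly exponentially small in $h$, the polynomial-in-$h$ overhead from the union bound is harmless.

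\noindent The recursion I would establish is
\[
p_t \le \Pr\bigl[\mathrm{Bin}(\Delta_+, p_{t-1}) > \Delta_+^{\delta}\bigr] \le (e\,p_{t-1}\,\Delta_+^{1-\delta})^{\Delta_+^{\delta}} \qquad (t\ge 2),
\]
obtained as follows. Conditional on $X(r(H))$, the restrictions $X|_{\tilde H_{v_i}}$ for distinct children $v_i$ of $r(H)$ are independent uniform Gibbs colourings of those subtrees, so the events ``$X_L$ biases $r(\tilde H_{v_i})$'' are conditionally independent, each of probability at most $p_{t-1}$ by colour symmetry. By the inductive clause of Definition~\ref{def:BiasBound}, $X_L$ biases $r(H)$ iff more than $\Delta_+^{\delta}$ of the (at most $\Delta_+$) mixing children are themselves biased, and Chernoff's binomial tail bound closes the recursion. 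For the base case $t=1$, the children of $r(H)$ are leaves that are independent and uniform over $[k]\setminus\{X(r(H))\}$ given the root's colour; biasing means fewer than $\Delta_+^{\gamma}$ colours are absent among them. The indicators of colour-absence are negatively associated with mean $\Omega(\Delta_+^{\alpha/(1+\alpha)}/\ln\Delta_+)$ on plugging in $k=(1+\alpha)\Delta_+/\ln\Delta_+$, and since $\gamma\le\delta\le\alpha/(1+\alpha)$ the threshold $\Delta_+^{\gamma}$ sits far below this mean. The lower-tail Chernoff bound for negatively-associated sums then yields $p_1 \le \exp\bigl(-\tfrac18\Delta_+^{7\alpha/(8(1+\alpha))}\bigr)$, delivering the $\alpha$-dependent term in the target exponent.

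\noindent Finally I iterate the recursion. Taking logs gives $-\log p_t \ge \Delta_+^{\delta}(-\log p_{t-1}-O(\log\Delta_+))$, and once $-\log p_1$ dominates the additive $O(\log\Delta_+)$ loss (which happens immediately, since the base-case bound is super-polynomial in $\Delta_+$), each step inflates the exponent of $\Delta_+$ by an additive amount, which after absorbing per-step losses is $\delta/2$. Hence $-\log p_t \ge \tfrac18\,\Delta_+^{(t-1)\delta/2 + 7\alpha/(8(1+\alpha))}$ for every $t\ge 1$; setting $t=h/4$ and absorbing the union-bound factor $|{\cal M}|\le 3h/4+1$ into the leading $2$ gives the proposition. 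The main technical obstacle is the precise book-keeping in the iteration: verifying that the Chernoff loss per step reduces the effective amplification from $\Delta_+^{\delta}$ to exactly $\Delta_+^{\delta/2}$, and that the base-case concentration yields precisely the exponent $7\alpha/(8(1+\alpha))$ (the $7/8$ coming from the $\tfrac12(1-\epsilon)$-mean deviation in the Chernoff bound together with the $(\ln\Delta_+)^{-1}$ prefactor in the mean).
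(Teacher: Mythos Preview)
Your proposal is correct and follows essentially the same route as the paper: a union bound over $v\in{\cal M}$ reduces the claim to bounding the biasing probability of a mixing-rooted height-$t$ subtree (the paper packages this as a separate Proposition~\ref{prop:BiasProb}), and that bound is proved by induction on $t$ with a balls-in-bins/Chernoff base case and a binomial-tail inductive step exploiting the colour-symmetry independence (the paper's Lemma~\ref{lemma:FactorRandBoundPro}). One cosmetic remark: the factor $7/8$ in the exponent arises purely from absorbing the $(\ln\Delta_+)^{-1}$ prefactor of $\mathbb{E}[Y]$ into the power of $\Delta_+$, while the $1/8$ out front is the Chernoff constant---your attribution in the last line slightly conflates these, but it does not affect the argument.
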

The proof of Proposition \ref{prop:prob-UXL} appears in Section \ref{sec:prop:prob-UXL}.\\

\begin{propositionproof}{\ref{prop:UpcouplingVsHamming}}
First, consider some fixed tree $T\in {\cal A}_{h,\zeta}$ and we let $L=L_h(T)$.
Usually we fix a colouring of $L$ and we call it (the colouring) boundary condition. 
We also use the term ``free" boundary to indicate the absence of any boundary condition on $L$
or some of its vertices.

Consider   two colourings of the leaves $\sigma(L)$ and $\tau(L)$.
We let $m$ be the Hamming distance between $\sigma(L)$ and $\tau(L)$, i.e. $m=H(\sigma_L, \tau_L)$.
Let $v_1,\ldots, v_m$ be the vertices in $L$ for which $\sigma_L$ and $\tau_L$ disagree.  
Consider the sequence of boundary conditions $Z_0, \ldots, Z_{2m}\in [k]^{L}$ such that $\sigma_L=Z_1$, $\tau_L=Z_{2m}$  while  the rest of the members are as follows:  
For $i \leq m$, we get $Z_i$ from $Z_{i-1}$ be substituting  the assignment of $v_i$ from $\sigma(v_i)$ to ``free".
Also, for $i\geq m$ we get $Z_{i+1}$ from $Z_{i}$ by substituting $Z(v_{i-m})$ from ``free"
to $\tau(v_{i-m})$.  It is direct that $H(Z_i,Z_{i+1})=1$. 

It holds that
\begin{eqnarray}
 ||\mu^{\sigma_L}-\mu^{\tau_L}||_{r(T)}\leq \sum^{2m-1}_{i=0}||\mu^{Z_i}-\mu^{Z_{i+1}}||_{r(T)}. \label{eq:TrianZ_i}
\end{eqnarray}

\noindent
Also, it is not hard to see that for every $w\in L$ the following is true:
if $\sigma_L\in {\cal U}_{w}$, then $Z_i\in {\cal U}_{w}$ for every $i=1,\ldots, m$. 
Similarly,  if $\tau_L\in {\cal U}_{w}$,  then $Z_i\in {\cal U}_{w}$ for every $i=m,\ldots, 2m$.

Let the event  $\mathbb{U}^{\sigma,\tau}_{v_i}=$ ``$ \sigma_L, \notin {\cal U}_{v_i} \bigcup \tau_L\notin {\cal U}_{v_i}$".
Then it holds that 
\begin{eqnarray}\label{eq:EachTermTriangl}
||\mu^{Z_i}-\mu^{Z_{i+1}}||_{r(T)}\leq \mathbb{I}_{\{ \mathbb{U}_{v_i} \}}+
\left(1-\mathbb{I}_{\{\mathbb{U}_{v_i}\}}\right) \Delta^*_{\zeta,h},
\end{eqnarray}
where $ \Delta^*_{\zeta,h}$ is defined in the statement of Proposition \ref{prop:Delta-Start}.
In words, the above inequality states the following: if at least one of the $\sigma_L,\tau_L$
are not in ${\cal U}_{v_i}$, then  the l.h.s. of (\ref{eq:EachTermTriangl}) is at most 1. On the 
other hand, if both $\sigma_L,\tau_L \in {\cal U}_{v_i}$ then the total variation distance on 
the l.h.s. can be upper bounded by using Proposition \ref{prop:Delta-Start}.

Plugging (\ref{eq:EachTermTriangl}) into (\ref{eq:TrianZ_i}) we have that
\begin{eqnarray}\label{eq:DetermBound}
 ||\mu^{\sigma_L}-\mu^{\tau_L}||_{r(T)} \leq 2\cdot 
\sum_{v\in L_h(T)}\mathbb{I}_{\{\sigma_v\neq\tau_v\}}\cdot 
\left[\mathbb{I}_{\{\mathbb{U}_{v}\}}+\left (1-\mathbb{I}_{\{\mathbb{U}_{v}\}}\right) \cdot \Delta^*_{\zeta,h}\right].
\end{eqnarray}

\noindent
Now, we consider the quantity $\mathbb{G}_{c,k}(T)$, i.e. $\mathbb{G}_{c,k}(T)=||\mu^{X_L}-\mu^{Z^q_L}||_{r(T)} $.
For bounding $\mathbb{G}_{c,k}(T)$ we are going to use (\ref{eq:DetermBound}). That is

\begin{eqnarray}
\mathbb{G}_{c,k}(T)&=& ||\mu^{X_L}-\mu^{Z^q_L}||_{r(T)} %\nonumber\\
%&\leq & 
\leq \sum_{\sigma_L, \tau_L\in [k]^L} 
\Pr\left[X_L=\sigma_L, Z^q_L=\tau_L \right]
\cdot
 ||\mu^{\sigma_L}-\mu^{\tau_L}||_{r(T)}
\nonumber \\
&\leq&2\cdot \sum_{\sigma_L, \tau_L\in [k]^L}
\Pr\left[X_L=\sigma_L, Z^q_L=\tau_L \right]
\cdot
\sum_{v\in L_h(T)}\mathbb{I}_{\{\sigma_v\neq\tau_v\}}\cdot
\left(\mathbb{I}_{\{\mathbb{U}^{\sigma,\tau}_{v}\}}+\left(1-\mathbb{I}_{\{\mathbb{U}^{\sigma,\tau}_{v}\}}\right) \Delta^*_{\zeta,h}\right)
\quad\mbox{[from (\ref{eq:DetermBound})]}\nonumber \\
&\leq & 2\cdot \sum_{v\in L_h(T)}  \left( \Pr\left [X(v)\neq Z^q(v), \mathbb{U}^{X_L,Z^q_L}_{v} \right] + \Pr\left [X(v)\neq Z^q(v) \right]\cdot \Delta^*_{\zeta,h}\right)
\nonumber\\
&\leq &  
2\cdot
\sum_{v\in L_h(T)}
\Pr \left[ \mathbb{U}^{X_L, Z^q_L}_{v} \right] 
+
2\cdot \sum_{v\in L_h(T)}\Pr\left[ X(v)\neq Z^q(v) \right ]\cdot\Delta^*_{\zeta,h}.\nonumber
\end{eqnarray}
Due to symmetry it holds that $\Pr\left[X(L)\notin {\cal U}_{v}\right]=\Pr\left[Z^q(L)\notin {\cal U}_{v}\right]$.
Using this observation and a union bound, the above  inequality implies that
\begin{eqnarray}
\mathbb{G}_{c,k}(T)
&\leq & 4 \sum_{v\in L}\Pr\left[X(L)\notin {\cal U}_{{\cal P}_v}\right]+
\Delta^*_{\zeta,h} \sum_{v\in L}\Pr\left[X(v)\neq Z^q(v) \right]  \nonumber\\
&\leq& 
2 \exp\left(-\frac18(\Delta_+)^{\frac{h/4-1}{2}\delta+\frac{7}{8}\frac{\alpha}{1+\alpha}} \right)
%{\color{red} 4\exp\left( -(\Delta_+)^{\frac{h-2}{4}\delta+\frac{7}{4}\frac{\alpha}{1+\alpha}}\right)} 
\cdot |L_h(T)| +2\Delta^*_{\zeta,h}\cdot \mathbb{E}_{\nu_{c,q}}[H(X_L, Z^q_L)], \nonumber
\end{eqnarray}
where in the last inequality we used Proposition \ref{prop:prob-UXL} to bound $\Pr\left[X(L)\notin {\cal U}_{{\cal P}_v}\right]$ . 
$\mathbb{E}_{\nu_{c,q}}[H(X(L),Z^q(L))]$ is the expected Hamming distance between 
$X_L$ and $Z^q_L$ and depends only on the joint distribution of $X, Z^q$,  which is denoted as $\nu_{c,q}$. 

The proposition follows by averaging over ${\cal T}^h_{\xi}$, conditional that we have
a tree in ${\cal A}_{h,\zeta}$, that is
\begin{eqnarray}
 \mathbb{E} \left[\mathbb{G}_{c,k}\left({\cal T}^h_{\xi}\right) \left| {\cal T}^h_{\xi} \in {\cal A}_{h,\zeta} \right. \right ]&\leq &
\frac{1}{\Pr\left[{\cal T}^h_{\xi} \in {\cal A}_{h,\zeta}\right]} 
\left(2 \exp\left(-\frac18(\Delta_+)^{\frac{h/4-1}{2}\delta+\frac{7}{8}\frac{\alpha}{1+\alpha}} \right)
\cdot \mathbb{E}\left[\left|L_h\left({\cal T}^h_{\xi}\right)\right|\right] + \right  . \nonumber \\
&&  \left .  +2(2\Delta_+^{-\gamma})^{\left(3/4-\zeta\right)h}\cdot \mathbb{E}[H(X_L,Z^q_L)]  \right). 
\nonumber
\end{eqnarray}
The rightmost expectation term  is w.r.t. both $\nu_{c,q}$ and the distribution of
random trees ${\cal T}^h_{\xi}$. 
In the above derivations we used the following, easy to derive, inequality
\begin{eqnarray}
 \mathbb{E} \left[f\left({\cal T}^h_{\xi}\right) \left| {\cal T}^h_{\xi} \in {\cal A}_{h,\zeta} \right. \right ]\leq
{ \mathbb{E} \left[f\left({\cal T}^h_{\xi}\right)\right ]}/{\Pr\left[ {\cal T}^h_{\xi} \in {\cal A}_{h,\zeta} \right]},\nonumber
\end{eqnarray}
where $f$ is any non-negative functions on the support of  the distribution ${\cal T}^h_{\xi}$.
The proposition follows.
\end{propositionproof}

\section{Proof of Proposition \ref{prop:Delta-Start}}\label{sec:prop:Delta-Start}

%\begin{proof}
For showing Proposition \ref{prop:Delta-Start} we use coupling.  
The  coupling is standard and it has been used in different contexts, e.g.   \cite{old-GnpSampling, MyMCMC}.

Not at that we have  exactly one disagreement only on some  vertex $w\in L$
in the tree $T$. So as to bound 
$ ||\mu^{\sigma_L}-\mu^{\tau_L}||_{r(T)}$ we take two $k$-colourings of $T$, 
$X$ and $Y$ distributed as in $\mu^{\sigma_L}, \mu^{\tau_L}$ respectively.
We are going to couple $X,Y$ and use the fact that 
\begin{eqnarray}\label{eq:CouplLem}
||\mu^{\sigma_L}-\mu^{\tau_L}||_{r(T)}\leq \Pr[X(r_T)\neq Y(r_T)].
\end{eqnarray}
The coupling of the two random variables is done in a step-wise fashion
moving away from the disagreeing  vertex $w$.  In particular what is of our
interest is the vertices on the path ${\cal P}$ that connects $w$ with $r_T$, i.e.
${\cal P}=v_0,v_1,\ldots v_h$ where $v_0=w$ and $v_h=r_T$.
We couple $X, Y$ by considering the pairs $(X(v_i), Y(v_i))$, for $i=1,\ldots, h$.

If for some $j\in [h]$ we have that $X(v_j)=Y(v_j)$, then we can couple the
remaining vertices in ${\cal P}$ identically, i.e. for every $i>j$ we have $X(v_i)=Y(v_i)$.
Clearly this holds due to the fact that the underlying graph is a tree. Once we have 
$X(v_j)=Y(v_j)$ there is no alternative path for the disagreement to propagate
to the pairs $X(v_i),Y(v_i)$ for any   $i>j$.

On the other hand, consider the case that $X(v_j) \neq Y(v_j)$, for some $h/4 \leq j \leq h$.
We need to bound the probability that $X(v_{j+1}) \neq Y(v_{j+1})$ in the coupling.
For this we consider two cases, depending on whether the tree $\tilde{T}_{v_{j+1}}$
has a mixing root or not. We show that it holds that 
\begin{eqnarray}\label{eq:DisPercProbs}
\Pr\left[X(v_{j+1}) \neq Y(v_{j+1})| X(v_{j}) \neq Y(v_{j}) \right]\leq
\left\{
\begin{array}{lcl}
2\Delta^{-\gamma}_+ &\quad& \textrm{if $\tilde{T}_{v_{j+1}}$ has mixing root} \\ \vspace{-.3cm} \\
1 &&\textrm{otherwise}.
\end{array}
\right .
\end{eqnarray}
Once we show that indeed the above bounds hold,  it is a matter of  straightforward calculations
to show that  the proposition. In particular,  we use  (\ref{eq:CouplLem}) and the trivial bound  that 
\begin{eqnarray}\nonumber
||\mu^{\sigma_L}-\mu^{\tau_L}||_{r(T)}\leq \Pr[X(r_T)\neq Y(r_T)] \leq \prod_{i=h/4}^h \Pr\left[X(v_{i}) \neq Y(v_{i})| X(v_{i-1}) \neq Y(v_{i-1}) \right].
\end{eqnarray}
The probabilities on the r.h.s.  are substituted by the bounds we have  in (\ref{eq:DisPercProbs}).
The theorem then follows by observing that our assumption that $T\in {\cal A}_{h,\zeta}$  implies
that among the vertices in $\{v_{h/4},\ldots, v_{h}\}$ there are at least $(3/4-\zeta)h$ vertices which 
are mixing roots at their subtree.

Thus, it remains to show the bound in (\ref{eq:DisPercProbs}). In particular, it suffices to show the bound 
re\-gar\-ding the case where the $\tilde{T}_{v_{j+1}}$ has  mixing root, as the other one is 
trivial. For this case assume that $X(v_j)=c, Y(v_j)=q$ for two different $c,q\in [k]$. In 
this situation we have disagreement between $X(v_{j+1}), Y(v_{j+1})$ if either 
$X(v_{j+1})=q$ or $Y(v_{j+1})=c$ or both.  Otherwise, i.e. conditional that 
$X(v_{j+1})\neq q$ and $Y(v_{j+1})\neq c$, there is a coupling such that with probability 1, we have
  $X(v_{j+1})= Y(v_{j+1})$.
Then it becomes apparent that
\begin{eqnarray}\nonumber
\Pr\left[X(v_{j+1}) \neq Y(v_{j+1})| X(v_j)=c, Y(v_j)=q\right] &\leq&  \\ 
&& \hspace{-6cm} \leq\max\left\{
\Pr\left[X(v_{j+1})=q|X(v_{j})=c\right],\Pr\left[Y(v_{j+1})=c|Y(v_{j})=q\right]
 \right\}. \nonumber
\end{eqnarray}
The  result follows almost directly. W.l.o.g. consider the term 
$\Pr\left[X(v_{j+1})=q|X(v_{j})=c\right]$. Clearly there is a $c'\in [k]$
such that
$$
\Pr\left[X(v_{j+1})=q|X(v_{j})=c\right]\leq \Pr\left[X(v_{j+1})=q|X(v_{j})=c, X(v_{j+2})=c'\right].
$$
The above holds because $\Pr\left[X(v_{j+1})=q|X(v_{j})=c \right]$ can be written as a convex
combination of boundaries on $v_{j+2}$. 

We have assumed that $\tilde{T}_{v_{j+1}}$ has  mixing root, while $\sigma_L \in {\cal U}_{w}$.
Then it is elementary to verify that 
$\Pr\left[X(v_{j+1})=q|X(v_{j})=c, X(v_{j+2})=c'\right]\leq 2\Delta^{-\gamma}_+$.
Essentially, this bound follows by using arguments very similar to those for Lemma \ref{lemma:IndBiasCond}.
We omit the derivations. The proposition follows. \hfill $\Box$\\
%\end{proof}

\section{Proof of Proposition \ref{prop:prob-UXL}}\label{sec:prop:prob-UXL}

So as to show Proposition \ref{prop:prob-UXL} we use the following result.

\begin{proposition}\label{prop:BiasProb}
Let  $\alpha,\gamma,\delta,\Delta_+,\zeta$  be as in the statement of Proposition \ref{prop:prob-UXL}.
Let $k=(1+\alpha){\Delta_+}/{\ln \Delta_+}$.
Consider some tree $H$,  of height $t>0$, which has mixing root.  For  $Z$, a random $k$-colouring of $H$, the following is true 
\begin{eqnarray}\label{eq:ProBiasRandBound}
 \Pr\left[Z_{L_h(H)} \notin {\cal U}(H)\right]\leq 
\exp\left(-\frac18(\Delta_+)^{\frac{t-1}{2}\delta+\frac{7}{4}\frac{\alpha}{1+\alpha}} \right),
\end{eqnarray}
 we remind the reader that  ${\cal U}(H)$ denote the set of all boundary conditions which are not biasing root.
\end{proposition}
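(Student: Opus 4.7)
The plan is to prove the bound by strong induction on the height $t$. At each step the argument exploits the recursive structure of $\mathcal{U}(H)$ in Definition \ref{def:BiasBound}, together with the fact that, conditional on the colour of $r(H)$, the restrictions of $Z$ to the subtrees rooted at the children of $r(H)$ are mutually independent random $k$-colourings of those subtrees.

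\textbf{Base case} ($t=1$). Since $r(H)$ is mixing it has degree $s\le \Delta_+$, and all its children are leaves. Conditional on $Z(r(H))=c$, these leaves are i.i.d.\ uniform on $[k]\setminus\{c\}$, and the number of colours missing from the leaves is a sum of (negatively associated) indicators $\mathbb{I}[c'\text{ missing}]$, $c'\neq c$, with expectation $(k-1)(1-1/(k-1))^s$. Since $k=(1+\alpha)\Delta_+/\ln\Delta_+$, this expectation is much larger than $(\Delta_+)^\gamma$ (using $\gamma\le\delta\le\alpha/2$), and a Chernoff-type tail for negatively associated sums yields $\exp(-\tfrac18(\Delta_+)^{7\alpha/(4(1+\alpha))})$. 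A small case split according to whether $s$ is close to $\Delta_+$ (where the concentration bound is used) or considerably smaller (where many colours are missing deterministically) makes the estimate uniform in $s$.

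\textbf{Inductive step} ($t>1$). Let $v_1,\ldots,v_s$ be the children of $r(H)$ and let $\mathbb{S}$ collect those $\tilde H_{v_i}$ with mixing root; since $r(H)$ is mixing we have $s\le \Delta_+$ and at most $(\Delta_+)^\delta$ of the subtrees lie outside $\mathbb{S}$. By Definition \ref{def:BiasBound}, $\{Z_{L_t(H)}\notin\mathcal{U}(H)\}$ holds iff more than $(\Delta_+)^\delta$ subtrees in $\mathbb{S}$ receive a biasing boundary. Conditional on $Z(r(H))$, the restrictions to distinct subtrees are independent, each restriction being a random $k$-colouring of $\tilde H_{v_i}$ conditioned on its root colour, so the inductive hypothesis applies and bounds the probability that any single subtree in $\mathbb{S}$ receives a biasing boundary by
\[
p_{t-1}:=\exp\!\left(-\tfrac18(\Delta_+)^{(t-2)\delta/2+7\alpha/(4(1+\alpha))}\right).
\]
A union bound over witness subsets of size $\lceil(\Delta_+)^\delta\rceil$ then gives
\[
\Pr\!\bigl[Z_{L_t(H)}\notin\mathcal{U}(H)\bigr]\le\binom{\Delta_+}{\lceil(\Delta_+)^\delta\rceil}\,p_{t-1}^{\lceil(\Delta_+)^\delta\rceil}\le\exp\!\left(O\!\bigl((\Delta_+)^\delta\ln\Delta_+\bigr)-\tfrac{(\Delta_+)^\delta}{8}(\Delta_+)^{(t-2)\delta/2+7\alpha/(4(1+\alpha))}\right).
\]
The dominant exponent is $t\delta/2+7\alpha/(4(1+\alpha))$, which exceeds the target $(t-1)\delta/2+7\alpha/(4(1+\alpha))$ by $\delta/2$; for $d_\xi$ large the resulting factor of $(\Delta_+)^{\delta/2}\ge 2$ absorbs both the logarithmic loss and the constant loss, closing the induction with the stated prefactor $\tfrac18$.

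\textbf{Main obstacle.} The delicate step is the base case: the target exponent $\tfrac{7\alpha}{4(1+\alpha)}$ has to emerge cleanly from the concentration analysis of missing colours, uniformly in $s\le \Delta_+$, and the interplay between the ``trivial'' regime (small $s$) and the ``concentration'' regime ($s$ near $\Delta_+$) must be handled carefully. The inductive step is comparatively robust because the target per-level increment $\delta/2$ leaves ample slack beyond the $\delta$ actually produced by the union bound at each level, so the $\tfrac18$ prefactor is preserved with room to spare.
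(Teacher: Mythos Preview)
Your approach is essentially the paper's: induction on $t$, base case via a Chernoff bound on the number of missing colours, inductive step via independence of the subtree biasing events followed by a binomial-tail estimate. The paper streamlines your base-case split by observing that $\Pr[Y\le\Delta_+^{\gamma}]$ is monotone in the root degree and simply taking $s=\Delta_+$; for the independence it invokes colour-permutation symmetry (Lemma~\ref{lemma:FactorRandBoundPro}) rather than conditioning on $Z(r(H))$, which is equivalent to your argument once one notes that the biasing event is invariant under colour permutations (you need this anyway, since after conditioning the subtree colourings are not \emph{unconditioned} uniform colourings, so the inductive hypothesis only applies via this symmetry).

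One numerical caveat: the exponent $\tfrac{7}{4}\cdot\tfrac{\alpha}{1+\alpha}$ in the displayed bound is a typo for $\tfrac{7}{8}\cdot\tfrac{\alpha}{1+\alpha}$; the latter is what the paper's own proof derives and what is quoted when the proposition is applied in the proof of Proposition~\ref{prop:prob-UXL}. In the base case one has $\mathbb{E}[Y]=(k-1)(1-\tfrac{1}{k-1})^s\lesssim \Delta_+^{\alpha/(1+\alpha)}/\ln\Delta_+$, so the Chernoff exponent cannot exceed $\tfrac{\alpha}{1+\alpha}$, and your claimed base-case output $\exp(-\tfrac18\Delta_+^{7\alpha/(4(1+\alpha))})$ is not attainable by this method; target $\tfrac{7}{8}$ instead and the rest of your sketch goes through unchanged.
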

The proof of Proposition \ref{prop:BiasProb} appears in Section \ref{sec:prop:BiasProb}.\\

\begin{propositionproof}{\ref{prop:prob-UXL}}
The  proposition follows by using Proposition \ref{prop:BiasProb} and a simple union bound.  
In particular,   let $L=L_h(T)$.  Also, let ${\cal P}$ denote the path that connects $r_T$ and $w\in L_h(T)$
while 
$$
{\cal M}=\left\{v\in {\cal P}: dist(r_T,v)\leq \frac34h, \; \tilde{T}_v \textrm{ has mixing root} \right\}.
$$
Clearly,  $X_L \notin {\cal U}_{w}$ if for some vertex $u\in {\cal M}$,  
it holds that $X(L\cap \tilde{T}_u)\notin {\cal U}(\tilde{T}_u)$, i.e the boundary   
$X(L\cap \tilde{T}_u)$ biases the root of the subtree $\tilde{T}_v$.
 That is,
\begin{eqnarray}
\Pr\left[X(L)\notin {\cal U}_{w}\right]&=&\Pr\left[\bigcup_{u\in {\cal M}} X_{L\cap  \tilde{T}_v} \notin {\cal U}(\tilde{T}_u)\right] %\nonumber\\ &\leq&
\leq \sum_{u\in {\cal M}}\Pr\left[ X_{L\cap \tilde{T}_v} \notin{\cal U}(\tilde{T}_u) \right] \qquad \qquad \mbox{[union bound]} \nonumber \\
&\leq&\sum^h_{t= (1/4)h}   \exp\left(-\frac18(\Delta_+)^{\frac{t-1}{2}\delta+\frac{7}{8}\frac{\alpha}{1+\alpha}} \right)
\leq2 \exp\left(-\frac18(\Delta_+)^{\frac{h/4-1}{2}\delta+\frac{7}{8}\frac{\alpha}{1+\alpha}} \right),   \nonumber 
\end{eqnarray}
in the last line, above, we used Proposition \ref{prop:BiasProb}. The proposition follows.
\end{propositionproof}

\section{Proof of Proposition \ref{prop:BiasProb}}\label{sec:prop:BiasProb}

Since we assumed that the tree $H$ has a mixing root, it holds that  $deg(r_H)=s\leq \Delta_+$.
We let $v_1,v_2,\ldots, v_s$ denote the children of $r_H$.   We remind the reader that the set 
 $\mathbb{S}\subseteq \{\tilde{H}_{v_1}, \tilde{H}_{v_2},\ldots, \tilde{H}_{v_s}\}$
contain  only the subtrees whose roots are mixing.

So as to prove Proposition \ref{prop:BiasProb} we need the following result.

\begin{lemma}\label{lemma:FactorRandBoundPro}
Let $X$ be a random $k$-colouring $H$. For $L_i=L_{h-1}(\tilde{H}_{v_i})$, let ${B}_i$ denote the event that in 
$\tilde{H}_{v_i}$, the boundary $X(L_i)$ does not bias $r(\tilde{H}_{v_i})$. For any $\Gamma\subseteq \{1,\ldots, s\}$
it holds that
\begin{eqnarray}
\Pr\left[\cap_{i\in \Gamma} {\cal B}_i\right] = \prod_{i\in \Gamma}\Pr[{\cal B}_i]=\left( \Pr[{\cal B}_i]\right)^{|\Gamma|}. \nonumber
\end{eqnarray}
\end{lemma}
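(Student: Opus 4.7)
The plan is to exploit the Markov property of the uniform $k$-colouring on the tree, together with the colour-symmetry of the model. First I note that each event ${\cal B}_i$ is a deterministic function of $X(L_i)$, hence of the restriction $X|_{\tilde{H}_{v_i}}$, so proving the factorisation reduces to decoupling these restrictions.

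I would then condition on the colour $X(r_H)$. Since deleting $r_H$ disconnects the subtrees $\tilde{H}_{v_1},\ldots,\tilde{H}_{v_s}$, the tree Markov property for the uniform $k$-colouring gives that, conditional on $X(r_H)=c$, the restrictions $X|_{\tilde{H}_{v_i}}$ are mutually independent, each distributed as a uniform $k$-colouring of $\tilde{H}_{v_i}$ with the extra constraint $X(v_i)\neq c$. In particular,
\[
\Pr\!\Bigl[\bigcap_{i\in\Gamma}{\cal B}_i \,\Big|\, X(r_H)=c\Bigr]=\prod_{i\in\Gamma}\Pr\!\bigl[{\cal B}_i\mid X(r_H)=c\bigr].
\]

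To remove the conditioning I would argue that $\Pr[{\cal B}_i\mid X(r_H)=c]$ does not depend on $c$. For any permutation $\pi$ of $[k]$, the inductive Definition \ref{def:BiasBound} (together with Lemma \ref{lemma:IndBiasCond}) shows that the predicate ``$X(L_i)$ does not bias $r(\tilde{H}_{v_i})$'' is invariant under applying $\pi$ globally to $X|_{\tilde{H}_{v_i}}$: both the leaf clause on the number of missing colours and the inductive clause on biased sub-boundaries are colour-equivariant. Since the conditional law of $X|_{\tilde{H}_{v_i}}$ given $X(r_H)=c$ is pushed forward to the law given $X(r_H)=c'$ under any permutation with $\pi(c)=c'$, the probability $\Pr[{\cal B}_i\mid X(r_H)=c]$ is constant in $c$ and therefore equals the unconditional $\Pr[{\cal B}_i]$. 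Averaging the displayed identity over $c$ yields $\Pr[\cap_{i\in \Gamma}{\cal B}_i]=\prod_{i\in \Gamma}\Pr[{\cal B}_i]$; the final expression $(\Pr[{\cal B}_i])^{|\Gamma|}$ then arises once the subtrees are identically distributed, which is precisely how the lemma is eventually invoked on the i.i.d.\ subtrees of ${\cal T}^h_{\xi}$.

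The only delicate step is the colour-symmetry verification in the third paragraph: one must commute the inductive definition of ``non-biasing'' with the global colour permutation $\pi$ at every level of the tree, checking that the leaf condition (``uses all but $(\Delta_+)^{\gamma}$ colours'') and the count of biasing children are genuinely permutation-invariant functionals of the boundary. Once that is in place, the remainder of the argument is just the tree Markov property and a short averaging over the root colour.
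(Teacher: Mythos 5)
Your argument is correct and matches the route the paper has in mind: the paper omits the proof but explicitly points to colour-permutation invariance of the biasing/non-biasing predicate, which you combine (as one must) with the tree Markov property of the uniform colouring conditioned on $X(r_H)$. The one point worth flagging is your reading of the final identity $\prod_{i\in\Gamma}\Pr[{\cal B}_i]=(\Pr[{\cal B}_i])^{|\Gamma|}$: the lemma is stated and applied to a \emph{fixed} tree $H$ (in the proof of Proposition~\ref{prop:BiasProb}), whose subtrees $\tilde H_{v_i}$ need not be isomorphic, so this equality is really an abuse of notation rather than a consequence of i.i.d.\ GW subtrees; indeed the application that follows only uses the factorisation together with the worst-case bound $\varrho=\max_i\Pr[{\cal B}_i^c]$ to stochastically dominate the count of biased children by ${\cal B}(\Delta_+,\varrho)$.
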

The proof of this lemma is straightforward so we omit it. Essentially, it follows from the fact that a biasing  (resp. non-biasing) 
boundary condition remains biasing (resp. non-biasing) if we repermute the colour classes.
A similar lemma appears in  \cite{TreeNonNaya}.\\

\begin{propositionproof}{\ref{prop:BiasProb}}
The proof is by induction on $t\geq 1$.   The induction basis is  $t=1$.  Then, $H$ is one level tree whose root is of degree at 
most $\Delta_+$.  Let $Y$ denote the number of different colours that do not appear in $X(L_1)$. 
It holds that
\begin{eqnarray}\label{eq:IndHypBiasBound}
 \Pr[X_{L_1(H)} \notin {\cal U}(H)] &\leq & \Pr[Y \leq  \Delta^{\gamma}_+].
\end{eqnarray}
Observe that $\Pr\left[Y\leq \Delta^{\gamma}_+\right]$  is an increasing function of the degree of 
$r(H)$. That is, the larger the degree of $r(H)$ the  more colours are expected to be used to colour 
the leaves of $H$. For this reason, we are going to upper bound the r.h.s. of (\ref{eq:IndHypBiasBound}) 
by assuming that $deg(r_H)=\Delta_+$, i.e. the maximum degree possible for a mixing root. It holds that
\begin{eqnarray}
  \mathbb{E}[Y] &=& (k-1)\left(1-\frac{1}{k-1}\right)^{\Delta_+} %\nonumber\\
%  &\geq &
\geq (k-1)\exp \left(-\frac{\Delta_+}{k-2} \right) \hspace*{2cm}\mbox{[as $1-x\geq e^{\frac{x}{1-x}}$ for $0<x<1/5$]}\nonumber\\
  &\geq &(k-1)\exp \left(-\left(1-\frac{\alpha}{1+\alpha}\right)\ln \Delta_+ - \frac{\ln \Delta_+}{k-2} \right)
 \geq  (\Delta_+)^{\frac{7}{8}\frac{\alpha}{1+\alpha}}.  \label{eq:E[Y]Lower}
\end{eqnarray}
Viewing the $k-1$ colours which are available for the leaves of $H$ as bins and each leaf of $H$ as a ball
which is thrown to a random bin, $Y$ corresponds to the number of empty bins. 
It is a standard result that we can apply Chernoff bounds for bounding the tails of $Y$,  e.g. see \cite{RandAlgBook}.
Then we get that
\begin{eqnarray}
 \Pr\left[Y < (\Delta_+)^{\gamma}\right]&\leq& \Pr\left[Y\leq \mathbb{E}[Y]/2\right]\leq \exp\left(-{\mathbb{E}[Y]}/{8}\right)
 \leq \exp\left(-{(\Delta_+)^{\frac{7}{8}\frac{\alpha}{1+\alpha}}}/{8}\right), \quad
\mbox{[as $\gamma\leq \min\left\{\alpha/2,1/10\right\}$]}\nonumber
\end{eqnarray}
where in the last inequality we use (\ref{eq:E[Y]Lower}).  We have proved the basis of our induction.

Assume, now,  that (\ref{eq:ProBiasRandBound}) is true for every tree of height $t-1$ which has 
mixing root.  It suffices to show that (\ref{eq:ProBiasRandBound}) is true for a tree $H$ of 
height $t$  with a mixing root.  
For such a tree $H$ let $L=L_t(H)$.  Consider also a random $k$-colouring $X$ for this tree.
Let $Z$, denote the number of  subtrees in $\mathbb{S}$ which are biased under the random 
colouring $X_L$,  i.e. the number of trees  $\tilde{H}_{v_i}\in \mathbb{S}$ such that $X(L\cap \tilde{H}_{v_i})$ 
is biasing for $r(\tilde{H}_{v_i})$. 
From Lemma \ref{lemma:IndBiasCond} we have the following
\begin{eqnarray}\label{eq:X(L)ProbBound}
\Pr\left[X_L \notin{\cal U}(H)\right] \leq  \Pr\left[Z>\Delta^{\delta}_+\right].
\end{eqnarray}
Let
\begin{eqnarray}\nonumber
\varrho=max_{\tilde{H}_v\in \mathbb{S}}\left\{\Pr[X(L\cap \tilde{H}_{v}) \notin {\cal U}(\tilde{H}_v) ]\right \},
\end{eqnarray}
where for the subtree $\tilde{H}_v$, the set  ${\cal U}(\tilde{H}_v)$  contains all the boundary conditions (at level $t-1$) 
$\tilde{H}_v$ which do not bias the root of $r(\tilde{H}_v)$.
From Lemma \ref{lemma:FactorRandBoundPro} we conclude that $Z$ is dominated by 
${\cal B}(\Delta_+, \varrho)$,
i.e. the binomial distribution with parameters $\Delta_+$ and $\varrho$.
Due to our assumptions it holds that $\Delta_+^{\delta}\gg \Delta_+\cdot \varrho$.
We have that
\begin{eqnarray}
\Pr\left[Z>\Delta^{\delta}\right] &\leq & \sum^{\Delta_+}_{j=\Delta_+^{\delta}} {\Delta_+ \choose j} \varrho^j\left(1-\varrho \right)^{\Delta_+-j} %\nonumber \\
%&\leq &
 \leq \Delta_+ {\Delta_+ \choose \Delta_+^{\delta}} \varrho^{\Delta_+^{\delta}}\left(1-\varrho \right)^{\Delta_+-\Delta_+^{\delta}} \nonumber \\ 
&\leq & \frac{\Delta_+}{(\Delta_+^{\delta}/e)^{\Delta_+^{\delta}}}  ( \Delta_+ \varrho)^{\Delta_+^{\delta}} \hspace*{4.35cm} \mbox{[as ${n \choose i}\leq \left( ne/i\right)^i$]} \nonumber \\
&\leq & (\Delta_+ \varrho)^{\Delta_+^{\delta}}\hspace*{5.9cm} \left[ \textrm{as } \frac{\Delta_+}{(\Delta_+^{\delta}/e)^{\Delta_+^{\delta}}}<1 \right] \nonumber \\
&\leq & \left(\Delta_+ \exp\left(-\frac18\Delta_+^{\frac{t-2}{2}\delta+\frac{7}{8}\frac{\alpha}{1+\alpha}}\right)\right)^{\Delta_+^{\delta}} \hspace*{3cm} \mbox{[by the induction hypothesis]}\nonumber \\
&\leq & \left(\exp\left(-\frac18\Delta_+^{\frac{t-3}{2}\delta+\frac{7}{8}\frac{\alpha}{1+\alpha}}\right)\right)^{\Delta_+^{\delta}} \leq 
\exp\left(-\frac18\Delta_+^{\frac{t-1}{2}\delta+\frac{7}{8}\frac{\alpha}{1+\alpha}}\right). \label{eq:NoOfBiasedTail} 
\end{eqnarray}
The proposition follows by plugging (\ref{eq:NoOfBiasedTail}) into (\ref{eq:X(L)ProbBound}).
\end{propositionproof}

\subsection{Proof of Lemma \ref{lemma:IndBiasCond}}\label{sec:lemma:IndBiasCond}

The proof is by induction on the height of the tree $t$.
The case where $t=1$ follows from Definition \ref{def:BiasBound}.

Consider some $t>1$ and assume that the assertion is true for any
tree of height less than $t$. We are going to show that the assertion is true
for trees of height $t$, as well.

Assume that $deg(r_H)=s$ for some integer $s$. Clearly $s\leq \Delta_+$
since we assume that $H$ has a mixing root.  
We let   $v_1,\ldots, v_s$ be  the children of the root. 
Also,  we let $L_i=L\cap \tilde{H}_{v_i}$, where $L=L_t(H)$.
That is $L_i$ denotes the vertices at level $t-1$ of the subtree $\tilde{H}_{v_i}$.
%That is $L_i=L\cap \tilde{H}_{v_i}$.

Let $X$ be a random $k$-colouring of $H$ such that $X_L=\sigma_L$
also, for $i=1,\ldots, s$, let $X_i=X(\tilde{H}_{v_i})$. 
A standard  recursive argument yields the following relation: For any $c\in [k]$ it holds that
\begin{eqnarray}
\Pr[X(r_H)=c]&=&\frac{\prod^s_{i=1} \Pr[X_i(v_i)\neq c]  }{\sum_{c'\in [k]}\prod^s_{i=1}Pr[X_i(v_i)\neq c']} \leq 
\frac{1}{\sum_{c'\in [k]}\prod^s_{i=1}\Pr[X_i(v_i)\neq c']}. \label{eq:recX=c}
\end{eqnarray}
We  show that $r(H)$ if  $\sigma_{L_h}$ is non-biasing then  the denominator in (\ref{eq:recX=c}) is sufficiently small.

Let $B\subset [k]$ denote the set of colours $c$ for which there is some $i$ such that 
$\Pr[X_i(v_i)=c]\geq \Delta_+^{-\gamma}$.
It is only $\Delta_+^{\gamma}$ many colours can have increased bias at the root of $\tilde{H}_{v_i}$
since $\sum_{c\in [k]}\Pr[X_i(v_i)=c]=1$.

We have assumed that  there are at most $\Delta_+^{\delta}$ trees $\tilde{H}_{v_i}$ whose root is 
mixing but the boundary biases the colour assignment of the root.
Furthermore,  there are $\Delta_+^{\delta}$ trees $\tilde{H}_{v_i}$ with non-mixing roots. 
That is, there can be at most $2\Delta_+^{\delta}$ trees $\tilde{H}_{v_i}$ whose roots are biased,
those whose root is  biased by the boundary condition and those which have non-mixing root.

Clearly, all the above imply that $|B|\leq 2\Delta_+^{\gamma+\delta}$. Letting $U=[k]\backslash B$,
we  rewrite (\ref{eq:recX=c}) as follows:
\begin{eqnarray}
 \Pr[X(r_H)=c] &\leq & \left (\sum_{c'\in U} \prod^s_{i=1}(1-Pr[X_i(v_i)=c'])\right )^{-1} \nonumber \\
 &\leq & \left (\sum_{c'\in U} \prod^s_{i=1} \exp \left(-\frac{Pr[X_i=c']}{1-Pr[X_i=c']}\right) \right)^{-1}
 \hspace*{2.25cm} \mbox{[as $1-x>e^{x/(1-x)}$ for $0<x<0.1$]}\nonumber \\
 &\leq & \left (|U| \sum_{c'\in U} \frac{1}{|U|} \exp \left(-\sum^s_{i=1}\frac{Pr[X_i(v_i)=c']}{1-Pr[X_i(v_i)=c']}\right) \right)^{-1}\nonumber \\
 &\leq & \left (|U|\prod_{c'\in U}  \exp \left(-\frac{1}{|U|}\sum^s_{i=1}\frac{Pr[X_i(v_i)=c']}{1-Pr[X_i(v_i)=c']}\right) \right)^{-1}\nonumber 
 \; \mbox{[ arithmetic-geometric mean ]}\\
 &\leq & \left (|U|\exp \left(-\frac{1}{|U|}\sum^s_{i=1}\sum_{c\in U}\frac{Pr[X_i(v_i)=c']}{1-Pr[X_i(v_i)=c']}\right)\right)^{-1}\nonumber\\
 &\leq & \left (|U|\exp \left(-\frac{1}{|U|}\sum^s_{i=1}\frac{Pr[X_i(v_i)\in U]}{1-\Delta_+^{-\gamma}} \right)\right)^{-1} 
 \hspace*{1.8cm} \mbox{[as $Pr[X_i(v_i)=c]<\Delta_+^{-\gamma}$ for $c\in U$]} \nonumber \\
 &\leq & \left (|U|\exp \left(-\frac{1}{1-\Delta_+^{-\gamma}}\frac{s}{|U|}  \right)\right)^{-1}.
 \hspace*{3.6cm} \mbox{[as $Pr[X_i\in U]\leq 1$]} \nonumber
 \end{eqnarray}
It is straightforward to show that $|U|\geq k\left(1-\Delta_+^{\frac{\gamma+\delta-1}{2}}\right)\geq
\left (1+\frac{9}{10}\alpha\right)\frac{\Delta_+}{\ln \Delta_+}$, since  $\gamma+\delta<1$.
Also it holds that  $\frac{1}{1-\Delta_+^{-\gamma}}\frac{s}{|U|}\leq \frac{\ln \Delta_+}{1+4\alpha/5}$, since $s\leq \Delta_+$. 
Thus, we get that 
\begin{eqnarray}
\Pr[X=c] \leq \frac{1}{(1+\alpha/2)\frac{\Delta_+}{\ln \Delta_+}\Delta^{-\frac{1}{1+4\alpha/5}}}\leq \Delta_+^{-\frac{3\alpha/5}{1+4\alpha/5}}< \Delta_+^{-\gamma},\nonumber
\end{eqnarray}
as $\gamma= \min\{\alpha/2, 1/10\}$. The lemma follows.

\section{Proof of Proposition \ref{prop:TinAhz}}\label{sec:prop:TinAhz}

For $i=(1-\zeta)h$ we let $Q_{h,i}=\Pr\left[{\cal T}^h_{\xi}\notin {\cal A}_{h, \zeta}\right]$. 
Also, we let $Q^t_{h,i}=\Pr\left[\left. {\cal T}^h_{\xi}\notin {\cal A}_{h, \zeta} \right| {\tt deg}(r(T^h_{\xi}))=t\right]$
Using a simple union bound we get the following: For $t\leq (\Delta_+)^{\delta}$ it holds that
\begin{eqnarray}
 Q^t_{h,i}\leq t\cdot Q_{h-1, i-1}. \label{eq:Q-thi-SmallT}
\end{eqnarray}
Intuitively, the above is implied by the following: If ${\tt deg}(r(T^h_{\xi})) \leq (\Delta_+)^{\delta}$, then, regardless of its children, the root 
$r(T^h_{\xi})$ is mixing.  Conditional that ${\tt deg}(r(T^h_{\xi})) \leq (\Delta_+)^{\delta}$ holds, 
so as to have ${\cal T}^h_{\xi}\notin {\cal A}_{h, \zeta}$, there should be a vertex $v$, child of
$r(T^h_{\xi})$ such that the following is true: The subtree $\tilde{T}_v$ has a path from its root
to its vertices of at level $h-1$ which contain less than $i-1$ mixing vertices.

Using similar arguments,  for $(\Delta_+)^{\delta}\leq t\leq \Delta_+$, we get the following lemma,
whose proof appear in Section  \ref{sec:lemma:Qthi-InterTSolve}.
\begin{lemma}\label{lemma:Qthi-InterTSolve}
For $(\Delta_+)^{\delta}< t\leq \Delta_+$, it holds that
$$
Q^t_{h,i}\leq 2t\left( Q_{h-1,i-1}+  Q_{h-1,i}\cdot \Pr\left [{\cal B}(\Delta_+, q)\geq (\Delta_+)^{\delta}\right] \right).
$$
\end{lemma}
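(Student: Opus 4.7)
The plan is to condition on the root degree $t$ of ${\cal T}^h_\xi$ and decompose the event $\{{\cal T}^h_\xi\notin{\cal A}_{h,\zeta}\}$ according to whether the root of ${\cal T}^h_\xi$ is itself a mixing vertex. Writing $v_1,\ldots,v_t$ for its children and $\tilde{T}_{v_j}$ for the corresponding subtrees, the definition of ${\cal A}_{h,\zeta}$ says the bad event is that some root-to-leaf path contains fewer than $i$ vertices whose subtree has mixing root. If the root is mixing it contributes $1$ to this count, so a bad path running through $v_j$ forces $\tilde{T}_{v_j}$ to contain a path with fewer than $i-1$ mixing vertices; if the root is non-mixing it contributes $0$ and the required threshold on $\tilde{T}_{v_j}$ is only ``fewer than~$i$''. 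In the first case, a union bound over the $t$ children, whose subtrees are i.i.d.\ copies of ${\cal T}^{h-1}_\xi$, yields the contribution $t\cdot Q_{h-1,i-1}$.

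The main obstacle is the second case, since the events ``root of ${\cal T}^h_\xi$ is non-mixing'' and ``some $\tilde{T}_{v_j}$ violates the threshold $i$'' are correlated: both depend on what the children look like. The key idea is to union-bound over $j$ and then, for each $j$, exploit the fact that $v_j$ is only a single child: the root is non-mixing only when strictly more than $(\Delta_+)^{\delta}$ children have non-mixing roots, which after removing $v_j$'s own contribution forces at least $(\Delta_+)^{\delta}$ of the remaining $t-1$ siblings to have non-mixing roots. That sibling event depends solely on $\{\tilde{T}_{v_k}\}_{k\neq j}$ and by the branching property of Galton--Watson trees is independent of $\tilde{T}_{v_j}$. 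This lets me factor
\begin{eqnarray*}
\Pr\bigl[\text{root non-mixing and }\tilde{T}_{v_j}\text{ is bad w.r.t. }i\bigr]\leq \Pr\bigl[{\cal B}(t-1,q)\geq (\Delta_+)^{\delta}\bigr]\cdot Q_{h-1,i},
\end{eqnarray*}
where $q$ is the inductive upper bound on the probability that any given subtree has a non-mixing root; the role of inequality~(\ref{eq:non-reconA}) in Definition~\ref{def:qDelta+} is precisely to ensure that the chosen value of $q$ is preserved from one level to the next.

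Summing over $j=1,\ldots,t$ and monotonically replacing $t-1$ by $\Delta_+$ in the binomial parameter gives the second contribution $t\cdot Q_{h-1,i}\cdot \Pr[{\cal B}(\Delta_+,q)\geq(\Delta_+)^{\delta}]$. Combining the two cases produces
\begin{eqnarray*}
Q^t_{h,i}\leq t\cdot Q_{h-1,i-1}+t\cdot Q_{h-1,i}\cdot \Pr\bigl[{\cal B}(\Delta_+,q)\geq (\Delta_+)^{\delta}\bigr],
\end{eqnarray*}
which sits inside the claimed bound with a factor of $2t$. The factor of $2$ should comfortably absorb cosmetic rounding issues such as the difference between strict and non-strict inequalities in the definition of non-mixing and the shift from $t-1$ to $\Delta_+$ in the binomial parameter. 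I expect the bulk of the work to lie in pinning down the independence argument in step two cleanly; everything else reduces to union bounds and the propagation of the inductive bound $q$ that is baked into Definition~\ref{def:qDelta+}.
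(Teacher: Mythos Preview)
Your argument is correct and in fact yields the bound with the sharper constant $t$ rather than $2t$. The overall structure matches the paper---split according to whether the root is mixing, then union bound over the $t$ children and invoke the inductive bound $q_{h-1}\le q$ (which the paper also proves inside this lemma via the fixed-point inequality~(\ref{eq:non-reconA}))---but the mechanics differ. The paper conditions on the \emph{exact} number $j$ of non-mixing children, introduces the conditional quantities $Q^M_{h-1,\cdot}$ and $Q^N_{h-1,\cdot}$ (conditioning further on whether each child's subtree has mixing root), applies the identity $(t-j)\binom{t}{j}=t\binom{t-1}{j}$ to pull out the factor $t$, and finally collapses back to $Q_{h-1,\cdot}$ via $q_{h-1}Q^N\le Q$ and $(1-q_{h-1})Q^M\le Q$; this double decomposition is what produces the honest factor $2$. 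Your route is more elementary: in the ``root mixing'' case you simply drop the conditioning via $\Pr[A\cap B]\le\Pr[B]$, and in the ``root non-mixing'' case you peel off the distinguished child $v_j$ and use independence of $\tilde T_{v_j}$ from its siblings to factor the probability directly. This avoids the $Q^M/Q^N$ machinery and the binomial identity altogether, at no cost in the final estimate. Both arguments rely on the same two monotonicity steps at the end (replacing $q_{h-1}$ by $q$ and $t-1$ by $\Delta_+$ in the binomial tail), so neither buys anything extra quantitatively; yours is just shorter.
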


Finally, using a simple union bound we get that for $t>\Delta_+$ it holds that
\begin{eqnarray}
 Q^t_{h,i}\leq t\cdot Q_{h-1, i}.
%\cdot \sum^n_{j=\Delta+1} j\cdot \xi_j. 
\label{eq:Q-thi-LargeT}
\end{eqnarray}
The above follows by a line of arguments similar to those we used for (\ref{eq:Q-thi-SmallT}) and
by noting that if ${\tt deg}(r(T^h_{\xi})) \geq \Delta_+$, then the root of $T^h_{\xi}$ is non-mixing.

We are bounding  $Q_{h,i}$ by using (\ref{eq:Q-thi-SmallT}), (\ref{eq:Q-thi-LargeT}) and Lemma 
\ref{lemma:Qthi-InterTSolve}. We have that
\begin{eqnarray}
 Q_{h,i}&=&\sum^n_{t=0}Q^t_{h,i}\xi_t \nonumber\\
&=&Q_{h-1,i-1}\cdot \sum^{(\Delta_+)^{\delta}}_{t=0}t\cdot \xi_t 
+2Q_{h-1,i-1} \cdot \sum^{\Delta_+}_{t=(\Delta_+)^{\delta}+1}t\cdot \xi_t+ \nonumber \\
&&+2Q_{h-1,i} \cdot \Pr\left [{\cal B}(\Delta_+, q)\geq (\Delta_+)^{\delta}\right] \cdot  \sum^{\Delta_+}_{t=(\Delta_+)^{\delta}+1}t\cdot \xi_t+ Q_{h-1,i}\cdot \sum_{t\geq (\Delta_+)+1}t\cdot \xi_t\nonumber\\
 &\leq & 2Q_{h-1,i-1}\sum^{\Delta_+}_{t=0}t\cdot \xi_t+Q_{h-1,i}\left(2\Pr\left [{\cal B}(\Delta_+, q)\geq (\Delta_+)^{\delta}\right]
 \sum^{\Delta_+}_{t=(\Delta_+)^{\delta}}t\cdot \xi_t+\sum_{t\geq (\Delta_+)+1}t\cdot \xi_t\right) \nonumber \\
 &\leq& 
 2d_{\xi}\cdot Q_{h-1,i-1} +  Q_{h-1,i}\left(2d_{\xi}\cdot \Pr\left [{\cal B}(\Delta_+, q)\geq (\Delta_+)^{\delta}\right]+\sum_{t\geq (\Delta_+)+1}t\cdot \xi_t\right).  \label{eq:BasicQih}
\end{eqnarray}

\noindent
The following lemma uses (\ref{eq:BasicQih}) to derive an upper bound on $Q_{h,i}$.

\begin{lemma}\label{lemma:RecTail}
Let $h,\beta, {\cal C}$ be as in the statement of Proposition \ref{prop:TinAhz}.
Also, let  $\lambda \in (0,1)$ and $\theta'>1$ be a fixed numbers such that 
 $\beta(1-\theta')<-1$ and $\lambda\theta'<1$. 
Then for  $i=\lambda h$ and  $Q_{h,i}$ that  satisfy the inequality in 
(\ref{eq:BasicQih}),  
it holds that 
\begin{eqnarray}
Q_{h,i} \leq \exp\left[-(1-\lambda \theta' )\cdot {\cal C}\cdot h  \right].  \label{eq:QhiUpBound}
\end{eqnarray}
\end{lemma}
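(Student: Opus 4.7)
My plan is to try the separable ansatz
\[
Q_{h,i} \;\leq\; \exp\!\bigl[\cC(\theta' i - h)\bigr] \qquad \text{for every } h \geq 0 \text{ and } i \geq 0,
\]
proved by induction on $h$. Specialising to $i = \lambda h$ recovers exactly the target bound $\exp[-(1 - \lambda \theta')\cC h]$, so this ansatz is tailored to the conclusion of the lemma. Notice also that the condition $\lambda \theta' < 1$ is not needed for the induction itself; it is used only at the very end to guarantee that the final bound is genuinely a decaying exponential in $h$.

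The base case $h = 0$ is immediate: $Q_{0,i}$ is a probability, hence at most $1$, while $e^{\cC \theta' i} \geq 1$ since $\cC, \theta' > 0$. For the inductive step I would plug the inductive hypothesis into the recursion (\ref{eq:BasicQih}), which has the form $Q_{h,i} \leq A\, Q_{h-1, i-1} + B\, Q_{h-1, i}$ with $A = 2 d_\xi$ and, using the two tail estimates (\ref{eq:non-reconB}) from Definition \ref{def:qDelta+},
\[
B \;\leq\; 2 d_\xi \exp(-2\beta \ln d_\xi) + \exp(-2\beta \ln d_\xi) \;\leq\; 3 d_\xi^{\,1 - 2\beta}.
\]
After factoring out the common term $\exp[\cC(\theta' i - h)]$, the inductive step reduces to the scalar contraction
\[
A\, e^{\cC(1 - \theta')} + B\, e^{\cC} \;\leq\; 1.
\]

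Using $e^{\cC} = d_\xi^{\beta}$, the left-hand side equals $2 d_\xi^{1 + \beta(1 - \theta')} + 3 d_\xi^{1 - \beta}$. The first summand is tamed by the hypothesis $\beta(1 - \theta') < -1$, which forces its exponent to be strictly negative, while the second is tamed by $\beta \geq 4$, which makes its exponent at most $-3$. Consequently, for $d_\xi$ sufficiently large both summands are $o(1)$ and their sum is bounded by $1$, closing the induction. The main conceptual step is really just identifying the separable ansatz with exponent $\theta' i - h$; once that is chosen, the two numerical hypotheses $\beta(1-\theta') < -1$ and $\beta \geq 4$ are exactly what is needed to turn the recursion into a contraction, and the computation is routine.
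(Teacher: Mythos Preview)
Your proposal is correct and is essentially the same argument as the paper's proof. The paper also proceeds by induction on $h$ and, after writing $\lambda^- = (i-1)/(h-1)$ and $\lambda^+ = i/(h-1)$, arrives at precisely the same scalar contraction $2d_\xi\,e^{\cC(1-\theta')} + B\,e^{\cC} \leq 1$; your separable ansatz $Q_{h,i}\leq \exp[\cC(\theta' i - h)]$ is just a cleaner way of encoding the paper's bound $\exp[-(1-\theta'\lambda)\cC h]$ without having to shuttle between $\lambda,\lambda^-,\lambda^+$, and it has the minor advantage that the induction hypothesis is stated uniformly in $(h,i)$ rather than for a moving ratio.
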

The proof of Lemma \ref{lemma:RecTail} appears in Section \ref{sec:lemma:RecTail}

The proposition follows by using the above lemma and setting $\lambda=(1-\zeta)$ and $\theta'=\theta$,
where $\zeta$ and $\theta$ are defined in the statement of Proposition \ref{prop:TinAhz}.

\subsection{Proof of Lemma \ref{lemma:Qthi-InterTSolve}}\label{sec:lemma:Qthi-InterTSolve}

Let $q_{h-1}$ be the probability for each child of $r({\cal T}^h_{\xi})$ to be non-mixing.
Conditional that $r({\cal T}^h_{\xi})$ has degree $t$, the number of non-mixing children
of $r({\cal T}^h_{\xi})$ is binomially distributed with parameters, $t$, $q_{h-1}$, i.e.  ${\cal B}(t,q_{h-1})$.
Letting $Q^M_{h,i}=Pr\left[\left. {\cal T}^h_{\xi} \notin {\cal A}_{h, \zeta}\right| r\left(T^h_{\xi}\right) \textrm{ is mixing}\right]$
and  $Q^N_{h,i}=Pr\left[\left. {\cal T}^h_{\xi}\notin {\cal A}_{h, \zeta} \right| r\left(T^h_{\xi}\right) \textrm{ is not mixing}\right]$,
it holds that

\begin{eqnarray}
  Q^t_{h,i} &\leq & \sum_{j=0}^{(\Delta_+)^{\delta}} {t \choose j} q^j_{h-1}(1-q_{h-1})^{t-j}
  \left[(t-j)Q^M_{h-1,i-1}+ jQ^N_{h-1,i-1} \right] + \nonumber\\
 &&+\sum^t_{j=(\Delta_+)^{\delta}+1} {t \choose j} q^j_{h-1}(1-q_{h-1})^{t-j}\left[(t-j)Q^M_{h-1,i}+ jQ^N_{h-1,i} \right]. \nonumber
\end{eqnarray}

\noindent
Using the standard equality that $(t-j){t \choose j}=t{t-1 \choose j}$, we get that
\begin{eqnarray}
  Q^t_{h,i} &\leq & t(1-q_{h-1})Q^M_{h-1,i-1}\sum_{j=0}^{(\Delta_+)^{\delta}} {t-1 \choose j} q^j_{h-1}(1-q_{h-1})^{t-1-j}
  \nonumber\\
  &&+tq_{h-1} Q^N_{h-1,i-1} \sum_{j=1}^{(\Delta_+)^{\delta}} {t-1 \choose j-1} q^{j-1}_{h-1}(1-q_{h-1})^{t-j} \nonumber\\
  &&+ t(1-q_{h-1})Q^M_{h-1,i}\sum^{t-1}_{j=(\Delta_+)^{\delta}+1} {t-1 \choose j} q^j_{h-1}(1-q_{h-1})^{t-1-j}  \nonumber\\
  &&+t q_{h-1} Q^N_{h-1,i} \sum^t_{j=(\Delta_+)^{\delta}+1} {t-1 \choose j-1} q^{j-1}_{h-1}(1-q_{h-1})^{t-j}. \nonumber
\end{eqnarray}
It is not hard to see that for any $h,i $ it holds that $q_{h} Q^N_{h,i}\leq Q_{h,i}$ and 
$(1-q_h)Q^M_{h,i}\leq Q_{h,i}$. Using these two inequalities we get that
\begin{eqnarray}
  Q^t_{h,i}&\leq & tQ_{h-1,i-1}\left(\Pr\left[{\cal B}(t-1, q_{h-1})\leq (\Delta_+)^{\delta}\right] + 
  \Pr\left [{\cal B}(t-1, q_{h-1})\leq (\Delta_+)^{\delta}-1\right]\right)   \nonumber \\
  &&+ tQ_{h-1,i}\left(\Pr\left[{\cal B}(t-1, q_{h-1})\geq  (\Delta_+)^{\delta}+1\right] + \Pr\left [{\cal B}(t-1, q_{h-1})\geq (\Delta_+)^{\delta}\right]\right) \nonumber\\
  &\leq & 2tQ_{h-1,i-1}+  2tQ_{h-1,i}\Pr\left [{\cal B}(t-1, q_{h-1})\geq (\Delta_+)^{\delta}\right]. 
  \label {eq:lemma:Qthi-InterTSolve:1978}
\end{eqnarray}
%The lemma follows by 
Note that that $\Pr\left [{\cal B}(t-1, q_{h-1})\geq (\Delta_+)^{\delta}\right]$ is increasing with
$t$. That is,  for $t\leq \Delta_+$ it holds that
\begin{eqnarray}
\Pr\left [{\cal B}(t-1, q_{h-1})\geq (\Delta_+)^{\delta}\right]&\leq &
\Pr\left [{\cal B}(\Delta_+, q_{h-1})\geq (\Delta_+)^{\delta}\right]. \label{eq:lemma:Qthi-InterTSolve:1821}
%\leq  \Pr\left [{\cal B}(\Delta_+, q)\geq (\Delta_+)^{\delta}\right], \nonumber
\end{eqnarray}
At this point we need to observe that the quantity $q$, defined in Definition \ref{def:qDelta+},
is an upper bound for $q_{h}$, for every $h$.   This follows by an inductive argument, i.e. induction
on $h$ the number of levels of ${\cal T}^h_{\xi}$.

Clearly, for $h=0$, the assertion is true. The tree with zero levels consists of only one vertex,  which is  a leaf.
By default the  leaves are mixing vertices, i.e. the probability of a leaf to be non-mixing is zero. 
Since $q\in [0,3/4)$, $q$ is an upper bound for the vertex to be non-mixing.

Given some $h>0$, assume that the assertion is true for ${\cal T}^{h'}_{\xi}$, for any $h'\leq h$ . 
We are going to
show that this is true for $T^{h}_{\xi}$.  Let  $\mathbf{N}$ be the number of
non-mixing children of the root of $T^{h}_{\xi}$. It holds that
\begin{eqnarray}\nonumber
\Pr[r({\cal T}^h_{\xi}) \textrm{ is non-mixing}]\leq 
\Pr[{\tt deg}(r({\cal T}^h_{\xi}))>\Delta_+ ]+\Pr[\mathbf{N}>(\Delta_+)^{\delta}| {\tt deg}(r({\cal T}^h_{\xi}))\leq \Delta_+ ].
\end{eqnarray}
Given  that   ${\tt deg}(r({\cal T}^h_{\xi}))=D$, for some integer $D\geq 0$,  $\mathbf{N}$ is a binomial variable with parameters
$D,q_{h-1}$. Due to our induction hypothesis it holds that $q_{h-1}<q$. Since  we have conditioned that
$D<\Delta_+$,  it is clear that $\mathbf{N}$ is dominated by a binomial variable with parameters $\Delta_+,q$, that is
\begin{eqnarray}
\Pr[r({\cal T}^h_{\xi}) \textrm{ is non-mixing}] &\leq &
\Pr[{\tt deg}(r({\cal T}^h_{\xi}))>\Delta_+ ]+\Pr[{\cal B}(\Delta_+,q)>(\Delta_+)^{\delta}]\nonumber \\
&\leq &
\sum_{i\geq \Delta_+}\xi_i +\Pr[{\cal B}(\Delta_+,q)>(\Delta_+)^{\delta}]\leq q,\nonumber 
\end{eqnarray}
where the last inequality follows from the definition of $q$, i.e. in  Definition \ref{def:qDelta+}.
The above inequality with   (\ref{eq:lemma:Qthi-InterTSolve:1821}) imply that 
$$
\Pr\left [{\cal B}(\Delta_+, q_{h-1})\geq (\Delta_+)^{\delta}\right]\leq 
 \Pr\left [{\cal B}(\Delta_+, q)\geq (\Delta_+)^{\delta}\right], 
$$
as   ${\cal B}(\Delta_+,q_{h-1})$ is stochastically dominated by ${\cal B}(\Delta_+,q)$, since, $q_{h-1}\leq q$, for any $h$.

The lemma follows by plugging the above inequality into (\ref{eq:lemma:Qthi-InterTSolve:1978}).

\subsection{Proof of Lemma \ref{lemma:RecTail}}\label{sec:lemma:RecTail}

We are going to use induction to prove the lemma. First we are going to show that if
(\ref{eq:QhiUpBound})  is true for some $h>1$ then it is also true for $h+1$. 
Let $\lambda=\frac{i}{h}$, $\lambda^-=\frac{i-1}{h-1}$ and $\lambda^+=\frac{i}{h-1}$. 
We rewrite (\ref{eq:BasicQih}) in terms of $\lambda$, $\lambda^+$ and $\lambda^-$ as 
follows:
\begin{eqnarray}\label{eq:BasicRecTail}
 Q_{\{h,\lambda h\}}\leq  2d\cdot Q_{\{h-1,\lambda^- (h-1)\}} +  Q_{\{h-1,\lambda^+(h-1)\}}\left(2d\Pr\left [{\cal B}(\Delta_+, q)\geq (\Delta_+)^{\delta}\right]+\sum_{t\geq (\Delta_+)+1}t\cdot \xi_t\right).
\end{eqnarray}
Using the induction hypothesis and noting that $\lambda^-=\lambda-\frac{1-\lambda}{h-1}$
we have that 
\begin{eqnarray}
 Q_{\{h-1,\lambda^-(h-1)\}}&\leq & \exp\left[ -(1-\theta  \lambda^-)(h-1) {\cal C} \right]\nonumber\\
 &\leq&  \exp\left[ -\left(1-\theta'  \left(\lambda-\frac{1-\lambda}{h-1}\right)\right)(h-1) {\cal C}\right] \nonumber\\
 &\leq&  \exp\left[ -\left(1-\theta' \lambda \right) (h-1){\cal C} \right] \cdot 
 \exp\left[ -\theta' \left({1-\lambda}\right){\cal C} \right] \nonumber \\
 &\leq&  \exp\left[ -\left(1-\theta'  \lambda \right)  h \; {\cal C}  \right] \cdot 
 \exp\left[ \left(1-\theta' \right){\cal C} \right]. \nonumber 
\end{eqnarray}
As far as $Q_{\{h-1,i\}}$ is regarded, we use the fact that $\lambda^+=\lambda+\frac{\lambda}{h-1}$
and we get that 
\begin{eqnarray}
 Q_{\{h-1,\lambda^+\cdot(h-1)\}}&\leq & \exp\left[ -(1-\theta'  \lambda^+) (h-1) {\cal C} \right] \nonumber\\
 &\leq & \exp\left[ -\left(1-\theta'  \lambda-\frac{\theta'\lambda}{h-1}\right)(h-1){\cal C}  \right] \nonumber\\
 &\leq & \exp\left[ -\left(1-\theta'  \lambda \right)(h-1){\cal C}  \right] \cdot 
 \exp\left[ \theta' \lambda{\cal C } \right]\nonumber\\
 &\leq & \exp\left[ -\left(1-\theta'  \lambda  \right)h{\cal C}  \right]
 \exp\left[ {\cal C} \right]\nonumber.
\end{eqnarray}
Substituting the bounds for $Q_{\{h-1,i-1\}},Q_{\{h-1,i\}}$ above into (\ref{eq:BasicRecTail})
we get that
\begin{eqnarray}\label{eq:QhahWithBounds}
 Q_{\{h,\lambda h\}} &\leq & \exp\left[ -\left(1-\theta'  \lambda \right)h {\cal C}  \right] \times \nonumber\\
 &&\times \left ( 2d \cdot  \exp\left[ \left(1-\theta' \right){\cal C} \right]+  \exp\left( {\cal C} \right) \left(2d\Pr\left [{\cal B}(\Delta_+, q)\geq (\Delta_+)^{\delta}\right]+\sum_{t\geq (\Delta_+)+1}t\cdot \xi_t\right)\right ).\nonumber
\end{eqnarray}
From  to our assumption that  $\beta(1-\theta')<-1$ it is direct that 
\begin{eqnarray}
 2d \cdot \exp\left[ \left(1-\theta' \right){\cal C} \right]=2d^{1+\beta(1-\theta')}\leq 1/5.  \nonumber
\end{eqnarray}
Also due to our assumptions about $\Delta_+,\delta$ we get that 
\begin{eqnarray}
  \exp\left( {\cal C} \right) \left(2d\Pr\left [{\cal B}(\Delta_+, q)\geq (\Delta_+)^{\delta}\right]+\sum_{t\geq \Delta_++1}t\cdot \xi_t\right)&\leq& \frac{2}{5}.\nonumber
\end{eqnarray}
Using the two bounds above (\ref{eq:QhahWithBounds}) writes as follows:
\begin{eqnarray}
 Q_{\{h, \lambda h\}} &\leq & \exp\left[ -\left(1-\theta'\cdot \lambda \right)h {\cal C}  \right].\nonumber
\end{eqnarray}
It remains to show the base of the induction, i.e the case $h=1$.
Since the leaves of the trees are, by default, mixing, for any fixed $\lambda \in (0,1)$ and $h=1$ it holds that
\begin{eqnarray}
 Q_{\{h,\lambda \cdot h\}}\leq \Pr[deg(r(T))\geq \Delta_+]=\sum_{t\geq \Delta_+}\xi_t\leq \exp\left[-2{\cal C}\right ]
 \leq \exp\left[ -\left(1-\theta'\cdot \lambda \right) {\cal C}  \right],  \nonumber
\end{eqnarray}
as $\lambda,\theta>0$ while $\lambda\cdot \theta'<1$.
The lemma follows.

\section{Proof of Proposition \ref{prop:reduction}}\label{sec:prop:reduction}

Given some   $\sigma_{L}\in [k]^L$ , we let the variable $Y=Y(\sigma_L)$   be such that
$Y=\mu^{\sigma_L}_{r_T}(c)-1/k$.  Let the colouring of the root $\tau_{r}=c$.
By definition, we have that
\begin{eqnarray}
\mathbb{E}_{\mu^{\tau_{r}}}[Y]&=&\sum_{\sigma_{L}\in [k]^L}\mu^{\tau_{r}}_{L}(\sigma_L)Y(\sigma_L)\nonumber\\
&=&\sum_{\sigma_{L}\in [k]^L}\mu^{\tau_{r}}_{L}(\sigma_L)(\mu^{\sigma_L}(c)-1/k)=\mu^{X(L)}(c)-1/k.\nonumber
\end{eqnarray}
Also, we have that 
\begin{eqnarray}
\mathbb{E}_{\mu^{\tau_{r}}}[Y]&=&
\sum_{\sigma_{L}\in [k]^L}\frac{\mu^{\tau_{r}}_{L}(\sigma_L)}{\mu_{L}(\sigma_L)}( \mu^{\sigma_L}(c)-1/k)
\cdot \mu_{L}(\sigma_L) \nonumber \\
&=& \sum_{\sigma_{L}\in [k]^L}\frac{\mu^{\sigma_L}_{r}(c)}{\mu_{r}(c)}( \mu^{\sigma_L}(c)-1/k)
\cdot \mu_{L}(\sigma_L). \nonumber 
\end{eqnarray}
%&&\sum_{\sigma_{L}\in [k]^L}\frac{\mu^{\sigma_L}_{r_T}(c)}{\mu_{r_T}(c)}|\mu^{\sigma_L}(c)-1/k|\mu_{L}(\sigma_L) \nonumber \\
That is, in order to compute the expectation above we calculate the Randon-Nikodym derivative.
The  derivation in the second line is just an application of Bayes' rule.  Letting $\frac{\mu^{\sigma_L}_{r}(c)}{\mu_{r}(c)}=r(\sigma_L)$
and noting that $\mu_r(c)=1/k$,  it is elementary to verify that 
\begin{eqnarray}\nonumber
k\cdot Y(\sigma_L)+1=r(\sigma_L).
\end{eqnarray}
Using the above equality we get that
\begin{eqnarray}
\mathbb{E}_{\mu^{\tau_{r}}}[Y]&=&k \sum_{\sigma_{L}\in [k]^L}( \mu^{\sigma_L}(c)-1/k)^2\mu(\sigma_L)+
\sum_{\sigma_{L}\in [k]^L}( \mu^{\sigma_L}(c)-1/k)\mu(\sigma_L).
\end{eqnarray}
It is direct to show that $\sum_{\sigma_{L}\in [k]^L}( \mu^{\sigma_L}(c)-1/k)\mu(\sigma_L)=0$.
Thus, we get that
\begin{eqnarray}
\mathbb{E}_{\mu^{\tau_{r}}}[Y]=\mathbb{E}[Y^2]=\mu^{X(L)}(c)-1/k.  \label{eq:updownSqBound}
\end{eqnarray}
where the second expectation is w.r.t. the unconditional Gibbs distribution. Observe that $\mathbb{E}_{\mu^{\tau_{r}}}[Y]\geq 0$.

Using the above equality and Cauchy-Schwarz inequality we get  the following: 
\begin{eqnarray}
\sum_{\sigma(L)\in[k]^L} \mu_L(\sigma_L) \cdot \left |\mu^{\sigma_L}_{r(T)}(c) -1/k\right|
&\leq& \sqrt{\sum_{\sigma(L)\in[k]^L} \mu_L(\sigma_L) \cdot \left |\mu^{\sigma_L}_{r(T)}(c) -1/k\right|^2}\qquad \mbox{[Cauchy-Schwarz]}\nonumber \\
&\leq& \sqrt{\frac{1}{k}\left|\mu^{X_L}_{r(T)}(c)-1/k\right|}.  \hspace*{3.375cm} \mbox{[from (\ref{eq:updownSqBound})]} \label{eq:HalfWay2Reduction}
\end{eqnarray}
Observe that in (\ref{eq:HalfWay2Reduction}) the quantity inside the absolute value is always non-negative
(e.g. from \ref{eq:updownSqBound}).
Also, it holds that 
\begin{eqnarray}
\left|\mu^{X_L}_{r(T)}(c) - 1/k\right| \leq   ||\mu^{X_L}(\cdot) - \mu(\cdot) ||_{r_T}=||\mu^{X_L}(\cdot) - \mu^{Z_L}(\cdot) ||_{r_T}. \label{eq:Reduction2ExtraRandomBoundary}
\end{eqnarray}
where $Z$ is a random $k$-colouring of $T$. The equality, above, holds since the  distributions $\mu_{r_T}$ and $\mu^{Z_L}_{r_T}$ are identical.  
For every $q\in [k]$ let $Z^q$ denote a random colouring of $T$ conditional
that $r(T)$ is coloured $q$.  By the definition of total variation distance we 
get the following:
\begin{eqnarray}
 ||\mu^{X_L}(\cdot) - \mu^{Z_L}(\cdot) ||_{r_T}&=&
\frac12 \sum_{c'\in [k]}\left|\mu^{X_L}_{r_T}(c')-\mu^{Z_L}_{r_T}(c') \right| %\nonumber \\
%&\leq& 
\leq \frac12 \sum_{c'\in [k]}
\left|\mu^{X_L}_{r_T}(c')-\frac{1}{k}\sum_{q\in [k]}\mu^{Z^q_L}_{r_T}(c') \right| \nonumber \\
&\leq&\frac{1}{k}\sum_{q\in [k]}\frac12 \sum_{c'\in [k]}
\left|\mu^{X_L}_{r_T}(c')-\mu^{Z^q_L}_{r_T}(c') \right| \nonumber \\
&\leq&\frac{1}{k}\sum_{q\in [k]}
\left|\left| \mu^{X_L}(\cdot)-\mu^{Z^q_L}(\cdot) \right|\right|. 
\label{eq:Redu2CondRandBound}
\end{eqnarray}

\noindent
Since the r.h.s. of (\ref{eq:Redu2CondRandBound}) is a convex combination,
it follows that 
\begin{eqnarray}
 ||\mu^{X_L}(\cdot) - \mu^{Z_L}(\cdot) ||_{r_T}&\leq&\max_{q\in [k]}
\left\{ \left|\left| \mu^{X_L}(\cdot)-\mu^{Z^q_L}(\cdot) \right|\right|  \right\}. \nonumber
\end{eqnarray}
The proposition follows by combining the above inequality, (\ref{eq:Reduction2ExtraRandomBoundary}) and (\ref{eq:HalfWay2Reduction}).

\section{Proof of Theorem \ref{thrm:ReconThr} - Reconstruction}\label{sec:thrm:ReconThrB}

Consider the following.

\begin{definition}[Freezable Root]\label{def:non-mixing-root}
Consider $\Delta_-$ and $\delta$ as in the statement of Theorem \ref{thrm:ReconThr}.
 For  a tree $T$ of height $t$, its root is freezable if the following holds:
If $t=1$, then $r(T)$ is of degree is at least $\Delta_-$. If $t>1$, $r(T)$ is  
freezable  if and only if  $deg (r_T)\geq  \Delta_-$ and there are at  least 
$\Delta_--(\Delta_-)^{\delta}$ 
many   vertices $v$ children  of $r(T)$ such that $\tilde{T}_v$ has a freezable root.
\end{definition}

\begin{definition}[Freezing Boundary]
Let  $T$ be a tree of height $t$, for some integer $t>0$, and let $L=L_t(T)$.  Let $\sigma$ be a $k$-colourings of $T$,
for some $k>0$.
 Then the  boun\-dary condition $\sigma_L$ {\em freezes}
the colouring $r_T$ if the following holds: There exists $c\in [k]$ such that $\mu^{\sigma_L}_{r_T}(c)=1$.
\end{definition}
That is, a freezing boundary condition forces a unique colouring assignment at the root $T$.

Let ${\cal F}_{h}$ denote the set of trees of height $h$ which have freezable root. 
Since the total variation  distance is always non-negative,   it holds 
that
\begin{eqnarray}
\mathbb{E}||\mu^i-\mu^j||_{L_h}\geq \Pr\left[{\cal T}^h_{\xi}\in {\cal F}_h\right]\cdot 
\mathbb{E}\left [||\mu^i-\mu^j||_{L_h} \left|  {\cal T}^h_{\xi}\in {\cal F}_h \right .\right] \label{eq:NonExpWRTFRZABLE}
\end{eqnarray}
The proof is going to be done in two steps. 
We are going to show that taking $k=(1-\alpha)\Delta_-/\ln\Delta_-$, 
both $\Pr\left[{\cal T}^h_{\xi}\in {\cal F}_h\right]$ and $
\mathbb{E}\left [||\mu^i-\mu^j||_{L_h} \left|  {\cal T}^h_{\xi}\in {\cal F}_h \right .\right] $ are bounded
away from zero, for any $h>0$. In particular we have the following:

\begin{lemma}\label{lemma:FreezRootBound}
Given $\xi,\delta,\Delta_-$ as in Theorem \ref{thrm:ReconThr} the following is true:
It holds that $\Pr\left[T^h_{\xi}\in {\cal F}_h\right]\geq 1-g$, where $g$ is from Definition \ref{def:gDelta-}.
\end{lemma}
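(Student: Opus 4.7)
The plan is to set $p_h := \Pr[{\cal T}^h_{\xi}\notin {\cal F}_h]$ and prove by induction on $h\geq 1$ that $p_h\leq g$, where $g$ is the constant from Definition \ref{def:gDelta-}. The conclusion of the lemma is then immediate, since $\Pr[{\cal T}^h_{\xi}\in {\cal F}_h]=1-p_h\geq 1-g$.

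For the base case $h=1$, Definition \ref{def:non-mixing-root} says $r({\cal T}^1_{\xi})$ is freezable if and only if its degree is at least $\Delta_-$. Hence $p_1=\sum_{i<\Delta_-}\xi_i$, which is bounded above by $g$ directly by (\ref{eq:recon}) (the binomial term on the right is non-negative).

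For the inductive step, I condition on the degree of $r({\cal T}^h_{\xi})$. If this degree is $i<\Delta_-$, then the root is automatically non-freezable. If the degree is $i\geq \Delta_-$, then by the offspring-independence of a Galton-Watson tree the subtrees $\tilde{\cal T}_{v_1},\ldots,\tilde{\cal T}_{v_i}$ rooted at the $i$ children are i.i.d. copies of ${\cal T}^{h-1}_{\xi}$, so the number $N_i$ of children whose subtree has a freezable root is distributed as ${\cal B}(i,1-p_{h-1})$. Definition \ref{def:non-mixing-root} then says $r({\cal T}^h_{\xi})$ fails to be freezable precisely when $N_i<\Delta_--(\Delta_-)^{\delta}$. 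Summing over degrees,
\begin{equation*}
p_h \;=\; \sum_{i<\Delta_-}\xi_i \;+\; \sum_{i\geq \Delta_-}\xi_i \cdot \Pr\!\left[{\cal B}(i,1-p_{h-1})<\Delta_--(\Delta_-)^{\delta}\right].
\end{equation*}

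By the inductive hypothesis $p_{h-1}\leq g$, so $1-p_{h-1}\geq 1-g$, and hence ${\cal B}(i,1-p_{h-1})$ stochastically dominates ${\cal B}(i,1-g)$. Each probability in the sum is therefore at most $\Pr[{\cal B}(i,1-g)<\Delta_--(\Delta_-)^{\delta}]$, and the right-hand side becomes exactly the expression bounded by $g$ in (\ref{eq:recon}). This closes the induction and proves the lemma. The main (and essentially only) subtlety is recognising that the stochastic dominance argument is what aligns the recurrence with the fixed-point inequality used to define $\Delta_-$; once noted, the rest is routine.
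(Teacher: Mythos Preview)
Your proof is correct and follows essentially the same approach as the paper: induction on $h$ with the base case given by the lower tail of $\xi$, and the inductive step obtained by conditioning on the root's degree, using the i.i.d.\ structure of the children's subtrees to model the count of freezable children as a binomial, and invoking the induction hypothesis via stochastic dominance to match the right-hand side of~(\ref{eq:recon}). The only cosmetic difference is that you write the recurrence for $p_h$ as an equality (which is indeed justified by Definition~\ref{def:non-mixing-root}) and make the stochastic-dominance step explicit, whereas the paper phrases the same step as a one-line inequality.
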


\begin{remark}
Given $\xi$ and $\Delta_-$, we choose $g$ to be the smallest number which satisfies (\ref{eq:recon}).
We should note  that the quantity $g$ does not depend on $h$, the height of the tree. 
\end{remark}

\begin{lemmaproof}{\ref{lemma:FreezRootBound}}
We are going to use induction to show that $\Pr\left[T^h_{\xi}\notin {\cal F}_h\right] <g$.
For $h=1$, we  use Definition \ref{def:non-mixing-root}, i.e. 
\begin{eqnarray}
\Pr\left[T^h_{\xi}\notin {\cal F}_h\right]=\Pr[{\tt deg}(r({\cal T}^h_{\xi}))<\Delta_-]=\sum_{i<\Delta_-} \xi_i\leq g, \nonumber
\end{eqnarray}
where the last inequality follows from the definition of the quantity $g$, i.e. from Definition \ref{def:gDelta-}.
Assume  now that $\gamma=\Pr\left[{\cal T}^{h-1}_{\xi}\notin {\cal F}_{h-1}\right]\leq g$ is true for some $h>1$.
We are going to show that it is also true that $\Pr\left[{\cal T}^h_{\xi}\notin {\cal F}_h\right]\leq g$.  
Let the ${\cal Y}_r$ denote the event that $r_T$ has less than  $(\Delta_-)-(\Delta_-)^{\delta}$ children which
$v$ such that $\tilde{T}_v$ does not have a freezable root. It holds that %From the law of total probability we get that 
\begin{eqnarray}
\Pr\left[T^h_{\xi}\notin {\cal F}_h\right]&\leq  &
\Pr\left[ {\tt deg}(r({\cal T}^h_{\xi})) <\Delta^-\right]+\Pr\left[ {\tt deg}(r({\cal T}^h_{\xi})) \geq \Delta^-\right]
\Pr[{\cal Y}_r|{\tt deg}(r({\cal T}^h_{\xi}))\geq \Delta^-] \nonumber \\
&\leq &\sum_{i<\Delta_-}\xi_i+ \sum_{i\geq \Delta_-} \Pr[{\cal Y}_r, {\tt deg}(r({\cal T}^h_{\xi}))=i] \nonumber \\
&\leq &\sum_{i<\Delta_-}\xi_i+ \sum_{i\geq \Delta_-} \xi_i \Pr\left[{\cal B}(i,1-\gamma)<(\Delta_-)-(\Delta_-)^{\delta}\right]   \nonumber\\
&\leq &\sum_{i<\Delta_-}\xi_i+ \sum_{i\geq \Delta_-} \xi_i \Pr\left[{\cal B}(i,1-g)<(\Delta_-)-(\Delta_-)^{\delta}\right] \leq g .
\qquad \mbox{[by Definition \ref{def:gDelta-}]} \nonumber
\end{eqnarray}
The lemma follows.
\end{lemmaproof}

\begin{lemma}
Let $\alpha, \delta, \Delta_-$ be as in Theorem \ref{thrm:ReconThr}.
For  $k=(1+\alpha)\Delta_-/\ln \Delta_-$ it holds that
\begin{eqnarray}
\mathbb{E}\left [||\mu^i-\mu^j||_{L_h} \left|  {\cal T}^h_{\xi}\in {\cal F}_h \right .\right]\geq \left(1-\frac{2}{\log k}\right). \nonumber
\end{eqnarray}
\end{lemma}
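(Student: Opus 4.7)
The plan is to reduce the TV lower bound to a statement about the probability that a random leaf coloring \emph{freezes} the root, and then iterate a coupon-collector style recursion over freezable trees.

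First I would translate TV to freezing. For any fixed tree $T$ of height $h$, let $S_c \subseteq [k]^{L_h(T)}$ denote the set of leaf colorings that force the root color to be $c$. By Bayes' rule and the color-permutation symmetry of the $k$-coloring Gibbs distribution,
\begin{eqnarray*}
\mu^c_L(\sigma_L) = k\cdot \mu_L(\sigma_L) \text{ for } \sigma_L\in S_c,
\qquad \mu^{c'}_L(\sigma_L) = 0 \text{ for } c'\neq c,\ \sigma_L\in S_c,
\end{eqnarray*}
and the same symmetry gives $\mu_L(S_c) = \Pr_\mu[\text{root frozen}]/k$. Hence $\mu^i_L(S_i)=\Pr_\mu[\text{root frozen}]$ while $\mu^j_L(S_i)=0$, so
$$||\mu^i-\mu^j||_{L_h} \ \geq\ \mu^i_L(S_i)-\mu^j_L(S_i)\ =\ \Pr_\mu[\text{root frozen}].$$
It suffices to prove that $\Pr_\mu[\text{root frozen}]\geq 1-2/\log k$ uniformly over $T\in {\cal F}_h$; averaging then gives the lemma.

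Next I would set up a recursion. Let $p_h = \inf\{\Pr_\mu[\text{root frozen}] : T\in {\cal F}_h\}$. By Definition \ref{def:non-mixing-root} every $T\in{\cal F}_h$ has at least $\Delta':=\Delta_- - (\Delta_-)^{\delta}$ children $v$ such that $\tilde T_v$ is itself a freezable tree of height $h-1$. Condition on the root color equalling some $c$ (by color symmetry this does not alter the freezing probability). The subtrees $\{\tilde T_v\}$ are then mutually independent: each good child $v$ receives $X(v)$ uniform on $[k]\setminus\{c\}$, and conditional on $X(v)=c'$, the subtree is colored as $\mu^{c'}$ on $\tilde T_v$. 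By color symmetry, the event ``$\tilde T_v$ freezes $v$'' has probability at least $p_{h-1}$ and is independent of $X(v)$; when it occurs, color $X(v)$ is forbidden for the root. Since the root is frozen iff every color in $[k]\setminus\{c\}$ equals the frozen color of some good child, a standard union-bound over the $k-1$ colors yields
\begin{equation*}
p_h \ \geq\ 1-(k-1)\left(1-\frac{p_{h-1}}{k-1}\right)^{\Delta'},\qquad p_0\geq 1,
\end{equation*}
with base case $p_1\geq 1-(k-1)(1-1/(k-1))^{\Delta_-}$ from the same coupon calculation with leaves iid uniform.

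Finally I would carry out the fixed-point analysis. Setting $k=(1-\alpha)\Delta_-/\ln \Delta_-$ and $\Delta'=(1-o(1))\Delta_-$, the bound $1-x\leq e^{-x}$ gives
\begin{equation*}
(k-1)\left(1-\frac{p_{h-1}}{k-1}\right)^{\Delta'}\ \leq\ (k-1)\exp\!\left(-\frac{p_{h-1}\Delta'}{k-1}\right)\ =\ (k-1)^{\,1-(1-o(1))p_{h-1}/(1-\alpha)}.
\end{equation*}
Inductively assuming $p_{h-1}\geq 1-2/\log k$, the exponent becomes at most $-\alpha/(2(1-\alpha))$ for $\Delta_-$ large, yielding $p_h\geq 1-(k-1)^{-\alpha/(2(1-\alpha))}\geq 1-2/\log k$; the base step $p_1\geq 1-2/\log k$ follows identically. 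The main delicate point I anticipate is rigorously justifying the discarding of the up-to-$(\Delta_-)^{\delta}$ ``bad'' children (whose subtrees are non-freezable) and of any extra children when $\deg(r_T)>\Delta_-$: this is valid because $\mu^{\sigma_L}_{r_T}(c)\propto \prod_v(1-\eta_v(c))$ is a product over children, so extra factors can only shrink the support of $\mu^{\sigma_L}_{r_T}$ and hence only help freezing. Verifying this monotonicity cleanly under the conditioning on the root color (so that the good children remain independent with the required product structure) is the step that requires the most care.
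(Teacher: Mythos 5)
Your argument is correct and follows the same overall plan as the paper: reduce the TV lower bound to the probability that a random leaf colouring freezes the root, derive a one-step recursion over the height of freezable trees, and run a fixed-point argument at $k=(1-\alpha)\Delta_-/\ln\Delta_-$ (you silently correct the $(1+\alpha)$ in the lemma statement, which is a typo in the paper). Where you genuinely differ is in the derivation of the recursion. The paper writes $P_h=\mathbb{E}\bigl[\prod_{q\neq c}\bigl(1-(1-P_{h-1})^{w_q}\bigr)\bigr]$ with $(w_q)_q$ a multinomial over the good children's colours, and then stochastically dominates the multinomial by a family of independent Poisson variables (citing Sly) to handle the dependence among the $w_q$, losing an additive $s\leq 1/k^2$ from the conditioning event. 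You instead apply a direct union bound over the $k-1$ colours: for each $q$, the probability no good child is assigned $q$ and frozen is $(1-p_{h-1}/(k-1))^{\Delta'}$, so $1-p_h\leq(k-1)(1-p_{h-1}/(k-1))^{\Delta'}$. This sidesteps the multinomial-to-Poisson step entirely and is more elementary; the price is that the recursion is slightly lossier, but in the regime $k=(1-\alpha)\Delta_-/\ln\Delta_-$ with $\Delta_-$ large both bounds comfortably clear the target $1-2/\log k$. Your opening reduction (via the sets $S_c$ and the Bayes/colour-symmetry identity $\mu^c_L(\sigma_L)=k\mu_L(\sigma_L)$ on $S_c$) also makes explicit the inequality $\|\mu^i-\mu^j\|_{L_h}\geq\Pr_\mu[\text{root frozen}]$, which the paper states without proof; that is a welcome addition. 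The "delicate point" you flag --- discarding non-freezable and surplus children --- is handled correctly by the monotonicity of forbidden-colour sets; together with the conditional independence of the subtrees given the root colour, the argument is sound.
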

\begin{proof}
The lemma will follow by  assuming  any instance of the trees in ${\cal F}_h$, i.e. 
we consider a fixed tree $T\in {\cal F}_h$.
We let $\mathbf{F}$ denote the set of these vertices $v$ children of  $r(T)$ such that $\tilde{T}_v$ has a freezable root.  
Since we have assumed that $T\in {\cal F}_h$ it holds that $|\mathbf{F}|\geq \Delta_--(\Delta_-)^{\delta}$.

Take a random colouring of $T$. W.l.o.g. assume that the root is coloured  with colour $c$. 
This means that each of the children of the root has a colour which is distributed uniformly 
at random in $[k]\backslash\{c\}$ and  each of the colour assignments is independent of the other.  
So as the colour assignment of the root to be frozen, it suffices to have the 
following: For every colour 
$q\in [k]\backslash\{c\}$ there should be at least one child in $\mathbf{F}$ which is assigned 
$q$ and its  colouring is frozen.
Clearly, examining only the children of the $r(T)$ which are in $\mathbf{F}$ will yield a lower
bound for the probability that we have a frozen colouring at $r(T)$.
Let $P_h$ denote the probability that the root of $T$ is frozen.
For the Gibbs distribution of the tree $T$ then it holds that 
$$
||\mu^i-\mu^j||_{L_h} \geq P_h.
$$
Also, since the tree $T$ is chosen arbitrarily from ${\cal F}_h$, we get that
$P_h$ is a lower bound for the expectation $\mathbb{E}\left [||\mu^i-\mu^j||_{L_h} \left|  {\cal T}^h_{\xi}\in {\cal F}_h \right .\right]$, 
too.   The lemma follows by bounding appropriately $P_h$.  

At this point, we can derive the bound by working, essentially, as in  \cite{InfFlowTrees,SemerjianFreez,SlyRecon}.
For the sake of completeness in what follows we present the steps for bounding $P_h$.

Letting $w_q$ denote the number of occurrences of the colour $q$ between
the vertices in $\mathbb{F}$ we have that
\begin{eqnarray}
P_h=\mathbb{E}\left[ \prod_{q\in [k]\backslash\{c\}}\left(1-(1-P_{h-1})^{w_q}\right)\right],
\end{eqnarray}
where the expectation is w.r.t. the random variables $w_q$. Clearly the variables
$w_q$ for different $q$ follow the multinomial distribution. E.g.  the should sum to
$|\mathbf{F}|$. Clearly the random variables are correlated with each other.

Consider a set of $k-1$ independent random variables $\tilde{w}_q$ for every $q\in [k]\backslash\{c\}$.
Each $\tilde{w}_q$ follows a Poisson distribution with parameter $D=\frac{|\mathbf{F}|}{k-1}\left(1-\frac{1}{\log k} \right)$.
It is elementary  to show that conditional that $\sum_{q\in [k]\backslash\{q\}}\tilde{w}_q\leq |\mathbf{F}|$
there is a coupling of $(w_1,\ldots,w_{k-1})$ and $(\tilde{w}_1,\ldots,{w}_{k-1})$ such that
for every $q$ it holds that $w_q\geq \tilde{w}_q$,  (e.g. see Lemma 4 in \cite{SlyRecon} ).
Then clearly we get that
\begin{eqnarray}
P_h &\geq& \mathbb{E}\left[ \prod_{q\in [k]\backslash\{c\}}\left(1-(1-P_{h-1})^{\tilde{w}_q}\right)\right]-\Pr\left[\sum_{q\in [k]\backslash\{c\}}\tilde{w}_q>|\mathbf{F}|\right]\nonumber\\
&\geq& \prod_{q\in [k]\backslash\{c\}} E\left[ \left(1-(1-P_{h-1})^{\tilde{w}_q}\right)\right]-\Pr\left[\sum_{q\in [k]\backslash\{c\}}\tilde{w}_q>|\mathbf{F}|\right]\nonumber\\
&\geq&  \left [1-\exp(P_{h-1}D)\right]^{k-1}-\Pr\left[\sum_{q\in [k]\backslash\{c\}}\tilde{w}_q>|\mathbf{F}|\right],\nonumber
\end{eqnarray}
in the second inequality we use the fact that $\tilde{w}_q$s are independent with each other. 
It holds that $\sum_{q\in [k]\backslash\{c\}}\tilde{w}_q$ is distributed as in Po$\left(|\mathbf{F}|\left(1-1/\log k\right)\right)$.
Thus, it holds that $s=\Pr\left[\sum_{q\in [k]\backslash\{c\}}\tilde{w}_q>|\mathbf{F}|\right] \leq 1/k^2$.

Let $f(x)=(1-\exp\left(xD\right))^{k-1}-s$. Then it is direct to verify that $f(1-\frac{1}{\log k})>1-\frac{1}{\log k}$.
Since  $P_0=1$ and $f(x)$ is increasing function we get that $P_h>1-\frac{1}{\log k}$, for any $h\geq 0$.
\end{proof}

\section{Proof of Theorem \ref{thrm:Threshold}}\label{sec:thrm:Threshold}

We will show the theorem by using  Theorem \ref{thrm:ReconThr}.

Let  $\xi$ be a distribution on the  non-negative integers such
that it is well-concentrated. Also let $d_{\xi}$ be the expected value of $\xi$.
We assume that  $d_{\xi}$ is  sufficiently large.

The theorem  follows by showing that for any fixed $\alpha>0$, for  $k_1=(1+\alpha)d_{\xi}/\ln d_{\xi}$
and $k_2=(1-\alpha)d_{\xi}/\ln d_{\xi}$ the following is true: There exist  appropriate numbers  
$\gamma_1=\gamma_1(\alpha)>0$ and $\gamma_2=\gamma_2(\alpha)>0$ such that 
$d_{\xi}\leq \Delta_+\leq (1+\gamma_1)d_{\xi}$ also 
$d_{\xi}\geq \Delta_-\geq (1-\gamma_2)d_{\xi}$, where $\Delta_+$ and $\Delta_-$
are chosen as specified by Theorem \ref{thrm:ReconThr}.
Furthermore it holds that 
$k_1\geq (1+\alpha/2)\Delta_+/\ln \Delta_+$ and 
$k_2\leq (1-\alpha/2)\Delta_-/\ln \Delta_-$.

Consider, first,   the  quantity $\Delta_+$.  
We choose $\gamma_1$ to be the largest number  such that 
$(1+\alpha)d_{\xi}/\ln d_{\xi}\geq (1+\alpha/2)\rho/\ln \rho$, where $\rho=(1+\gamma_1)d_{\xi}$.
We choose $\gamma_1$ to be independent of $d_{\xi}$. This means that for a given
$\alpha$ and $\gamma_1$, the inequality 
$(1+\alpha)d_{\xi}/\ln d_{\xi}\geq (1+\alpha/2)\rho/\ln \rho$ holds for sufficiently large $d_{\xi}$.

It suffices to show that  $\Delta_+$, chosen as specified in  Theorem \ref{thrm:ReconThr},
 is such that $d_{\xi}\leq \Delta_+\leq (1+\gamma_1)d_{\xi}$.
 Note that the parameter $\delta$ we use for  $\Delta_+$ is such that  $\delta=\min\{ \alpha/4, 1/10\}$.
  
Since $\xi$ is well concentrated,  for any $x\geq (1+\gamma_1)d_{\xi}$ it holds that
\begin{eqnarray}\label{eq:thrm:Threshold:WellConc}
\sum_{i\geq x}\xi_i \leq x^{-c},
\end{eqnarray}
where  $c>0$ is sufficiently large number.
Choosing $q=2d^{-c}_{\xi}$ it is direct to verify that the condition (\ref{eq:non-reconA})
is trivially satisfied by choosing   $\Delta_+\leq (1+\gamma_1)d_{\xi}$. This follows by 
using the inequality in (\ref{eq:thrm:Threshold:WellConc}), i.e. that $\xi$ is well concentrated 
and the Chernoff bounds  for $\Pr[{\cal B}(\Delta_+,q)\geq \Delta^{\delta}_+]$.

%For any such $\delta$ and $\Delta_+$ as defined above observe that
%condition (\ref{eq:non-reconA}) is trivially satisfied for $q=2\Pr[{\cal Z}>(1+\gamma_1)d_{\xi}]\leq 2d^{-c}_{\xi}$, 
%where $c>0$ is sufficiently large.

The leftmost conditions in (\ref{eq:non-reconB}) is also satisfied for $\Delta_+\leq (1+\gamma_1)d_{\xi}$ and 
sufficiently large $c>0$.
%in (\ref{eq:uniq-condition-threshold}). 
I.e. it holds that
$$
\sum_{t>(1+\gamma_1)d_{\xi}}t\cdot \xi_t\leq \sum_{t>(1+\gamma_1)d_{\xi}}t\cdot t^{-c}\leq 2[(1+\gamma_1)d_{\xi}]^{-(c-1)}.
$$
The second condition in (\ref{eq:non-reconB}) is trivially satisfied, as we describe above. 

Consider now the case of $\Delta_-$.  We work in a very similar way as for the case of $\Delta_+$.
We choose $\gamma_2$ to be the largest number  such that 
$(1-\alpha)d_{\xi}/\ln d_{\xi}\leq (1-\alpha/2)\rho/\ln \rho$, where $\rho=(1-\gamma_2)d_{\xi}$.
We choose $\gamma_2$ to be independent of $d_{\xi}$, in the same manner as we chose
$\gamma_1$, for $\Delta_+$.

It suffices to show that  $\Delta_-$, chosen as specified in  Theorem \ref{thrm:ReconThr},
 is such that $d_{\xi}\geq \Delta_-\leq (1-\gamma_2)d_{\xi}$.
 Note that the parameter $\delta$ we use for  $\Delta_-$ is such that  $\delta=\min\{ \alpha/4, 1/10\}$.

Our assumption that $\xi$ is well concentrated, implies that 
% from (\ref{eq:recon}) we  have that 
\begin{eqnarray}\label{eq:thrm:Threshold:WellConcB}
\sum_{i\leq (1-\gamma_2)d_{\xi}}\xi_i\leq d^{-c}_{\xi}.
\end{eqnarray}
Setting   $d_{\xi}\geq \Delta_- \geq (1-\gamma_2)d_{\xi}$ and   $g=2d^{-c}_{\xi}$, where $c$ is the same as above,
it suffices to show that  the constraint  (\ref{eq:recon}), in Definition \ref{def:gDelta-}, is satisfied.
In particular,  in the light of  (\ref{eq:thrm:Threshold:WellConc}), it suffices to show that for our choice of $g$ and $\Delta_-$,
the rightmost sum in (\ref{eq:recon}) is sufficiently small.
%In particular, we have to upper bound appropriately the rightmost sum in (\ref{eq:recon}).

%To see why $g=2\Pr[{\cal Z}<(1-\gamma_2)d_{\xi}]$ satisfies (\ref{eq:recon}) observe
%that $g\leq 2d^{-c}_{\xi}$ (this is from (\ref{eq:uniq-condition-threshold})).
It holds that $g\cdot \Delta_- < d^{-c/2}_{\xi}\ll (\Delta_-)^{-1+\delta}$.
This implies that for any $i\geq \Delta_-$ we have that
$$\Pr\left[{\cal B}(i,1-g)<(\Delta_-)-(\Delta_-)^{\delta}\right]<\Pr\left[{\cal B}(\Delta_-,1-g)<(\Delta_-)-(\Delta_-)^{\delta}\right],$$
as $\Delta_--\Delta_-^{\delta}<i\cdot g$ for all $i\geq \Delta_-$.
 Thus,  it holds that
\begin{eqnarray}
\sum_{i\geq \Delta_-}\xi_i\Pr\left[{\cal B}(i,1-g)<(\Delta_-)-(\Delta_-)^{\delta}\right]&\leq &
\Pr\left[{\cal B}(\Delta_-,1-g)<(\Delta_-)-(\Delta_-)^{\delta}\right] \sum_{i\geq \Delta_-}\xi_i\nonumber \\
&\leq& \Pr\left[{\cal B}(\Delta_-,1-g)<(\Delta_-)-(\Delta_-)^{\delta}\right]\nonumber\\
&=& \Pr\left[{\cal B}(\Delta_-,g)> (\Delta_-)^{\delta}\right] \leq \exp\left(-\Delta^{\delta}\right) \nonumber.
\end{eqnarray}
The inequality in the second line follows from the fact that $ \sum_{i\geq \Delta_-}\xi_i\leq 1$.
The last inequality follows from a direct application of Chernoff bounds, i.e. Corollary 2.4 in \cite{janson}.
Using the above bounds, it is trivial to show for our choice of $g$ and $\Delta_-$  (\ref{eq:recon}) is true.

 The theorem follows.
\\  \\ \\ 

\noindent
{\bf Acknowledgement.}
The author of this work would like to thank Guilhem Semerjian for our communication and
the discussion on the problem.
Also, the author would like   to thank Amin Coja-Oghlan for the discussions, his comments
 and the suggestions for improving the content of this work.


\begin{thebibliography}{00}

\bibitem{ACO}
D.~Achlioptas, A.~Coja-Oghlan.
{\em  Algorithmic Barriers from Phase Transitions.}
In Proc. of  49th Annual IEEE Symposium on Foundations 
of Computer Science, FOCS 2008.

\remove{
\bibitem{GnmChromatic}
D.~Achlioptas  and A.~Naor. {\em The two possible values of the chromatic number of a random graph}. 
Annals of Mathematics, 162 (3), (2005), 1333-1349.


\bibitem{coupling-lemma}
D.~Aldous. {\em Random walks of finite 
groups and rapidly mixing Markov chains.} In: S\'eminaire de Probabilit\'es
XVII 1981/82, Springer-Verlag, Berlin. pp. 243-297.
}

\bibitem{TreeNonNaya}
N.~Bhatnagar, J.~Vera, E.~Vigoda and D.~Weitz.
{\em Reconstruction for coloring on Trees.} 
SIAM J. Discrete Math. {\bf 25} (2), (2011), pp 809-826.

\bibitem{HCTreeNonRecon}
N. Bhatnagar, A. Sly, and P. Tetali. {\em Reconstruction Threshold for the Hardcore Model.}
In Proc. of the 14th International Conference on Randomization and Computation (RANDOM), 
6302  LNCS, pp 434-447, 2010.

\bibitem{BasicIdeaPaper}
C.~ Borgs, J.~Chayes, E.~Mossel, and S.~Roch. {\em The kesten-stigum 
reconstruction bound is tight for roughly symmetric binary channels.}  In Proc. of 
the 47th Annual IEEE Symposium on Foundations of Computer Science (FOCS’06),
pp 518–530, 2006. 


\bibitem{MagnetizationRef}
J.~T.~Chayes, L.~Chayes, J.~Sethna and D.~J.~Thoules.  {\em A mean field spin glass with
short-range interactions.} Comm. Math.. Phys., {\bf 106}(1), pp41-89, 1986.

\bibitem{BetterKSat}
Amin Coja-Oghlan. {\em A better algorithm for random $k$-SAT}. 
SIAM Journal on Computing 39:2823-2864.


\bibitem{ISGEo}
A.~Coja-Oghlan C.~Efthymiou. {\em On independent sets in random graphs.} 
To appear in  the journal Random Structures and Algorithms. Also, in proc. of
22nd Symposium on Discrete Algorithms (SODA'11), pp 136-144, 2011.


\bibitem{LocalWeak}
A.~Coja-Oghlan, C.~Efthymiou, N.~Jaafari.
{\em Local convergence of random graph colorings}.
Available in http://arxiv.org/abs/1501.06301



\bibitem{OptPhylogeny} C.~Daskalakis, E.~Mossel and S.~Roch. 
{\em Optimal Phylogenetic Reconstruction}. In proc. of the 38th 
Annual ACM Symposium on Theory of Computing (STOC), 159-168, 2006.


\bibitem{old-GnpSampling} M.~Dyer, A.~Flaxman, A.~M.~Frieze and
E.~Vigoda. {\em Random colouring sparse random graphs with fewer 
colours than the maximum degree}. Random Struct. and Algorithms
{\bf 29}, pp. 450-465,  2006.


\bibitem{MyMCMC}
C.~Efthymiou, {\em  MCMC sampling colourings and independent sets of G(n, d/n) near uniqueness threshold.}
In proc of SODA'14, pp 305-316, 2014.


\bibitem{mysampling1}
C.~Efthymiou {\em A simple algorithm for sampling colouring of $G(n,d/n)$ using $(2+\epsilon)d$ colours}.
In proc of SODA'12, pp 272-280, 2012.

\bibitem{mysampling2}
C.~Efthymiou {\em Switching Colouring of $G(n,d/n)$ for Sampling up to Gibbs Uniqueness Threshold}.
To appear in ESA 2014.


\bibitem{IsingTree}
W.~Evan, C.~ Kenyon, Y.~Peres and L.~J. Schulman.
{\em Broadcasting on trees, the Ising model.} The Annals of Applied Probability,
2000, Vol. 10 No. 2,  , 410-433


\remove{
\bibitem{mysampling}
C.~Efthymiou. {\em  A simple algorithm for random colouring
$G(n, d/n)$ using $(2 + \epsilon)d$ colours}. In Proc. of the 23rd Annual ACM-SIAM 
Symposium on Discrete Algorithms, SODA 2012.
}

\bibitem{GershMont}
A. Gerschenfeld and A. Montanari. {\em Reconstruction for models on random graphs}.
In Proc. of  48th Annual IEEE Symposium on Foundations of Computer Science, FOCS 2007.



\bibitem{Georgii}
H.~O.~Georgii. {\em Gibbs Measures and Phase Transitions},
de Gruyter Stud. Math. 9, de Gruyter, Berlin, 1988

\bibitem{ColorAlg}
G.~ R.~ Grimmett and C.~J.~H.~McDiarmid.
{\em On colouring random graphs}.
 Math. Proc. of the Camb. Phil. Soc.  {\bf 77}  (02), pp 313-32, 1975.


\remove{
\bibitem{Kesten}
H.~Kesten and B.~H.~Stigum. {\em Additional limit theorems for 
indecomosable multidimensional Galton-Watson processes.} Ann. Math. Statist,
1966, 37 pp 1463-1481.
}




\bibitem{1RSBPaper}F.~Krzakala, A.~Montanari, F.~Ricci-Tersenghi, G.~Semerjianc, 
L.~Zdeborova. {\em Gibbs states and the set of solutions of random constraint 
satisfaction problems.} In Proc. National Academy of Sciences {104} (2007)
pp 10318-10323.

\bibitem{janson} S.~Janson, T.~Luczak and A.~Ruci\'nski. {\em Random Graphs}.
John Wiley \& Sons, Inc. 2000


\bibitem{GDTree1st} F.~Martinelli, A.~Sinclair and D.~Weitz. 
{\em Glauber dynamics on trees: Boundary conditions and mixing time}.
Comm. Mathm. Phys., {\bf 250} (2004), pp 301-334.


\bibitem{GDTree2nd} F.~Martinelli, A.~Sinclair and D.~Weitz. 
{\em Fast mixing for independent sets, colorings and other models on trees}.
Random Structures Algorithms, {\bf 31},  (2007), pp 134-172.

\bibitem{cavity}
M.~M\'ezard, G. Parisi and R. Zecchina.  {\em Analytic and Algorithmic
Solution of Random Satisfiability Problems}. Science {\bf 297} no. 5582 pp. 812-815, 2002.

\bibitem{MolloyFr}
M.~Molloy. {\em The freezing threshold for k-colourings of a random graph.}
In proc. of the 44th ACM Symposium on Theory of Computing (STOC'12),  pp 921-930, 2012.

\bibitem{MonResTet}
A. Montanari, R. Restrepo and P. Tetali. {\em Reconstruction and Clustering in Random Constraint 
Satisfaction Problems}, SIAM Journal on Discrete Mathematics, 2011


\bibitem{Beating2ndEigen}
E.~Mossel.
{\em Reconstruction on Trees: Beating the second eignevalue.} 
The Annals of Applied Probability 2001, Vol. 11, No. 1 285-300.



\bibitem{PhaseTransPhylog}
E. Mossel. {\em Phase transitions in phylogeny}. Trans. Amer. Math. Soc., 356(6):2379-2404, 2004


\bibitem{InfFlowTrees}
E.~Mossel and Y.~Peres. {\em Information flow on trees.}  The Annals of Applied Probability 2003, Vol. 13, No. 3, 817-844.

\bibitem{RandAlgBook}
R.~Motwani \& P.~ Raghavan. {\em Randomized Algorithms}. Cambridge University Press, 1995.


\bibitem{SemerjianFreez}
G.~Semerjian. {\em On the freezing of variables in 
random constraint satisfaction problems.} J. Stat. Phys, 2008, Vol. 130, No 2, 251-293.



\bibitem{SlyRecon}
A.~Sly. {\em Reconstruction of Random Colourings},
Commun. Math. Phys., {\bf 188} (2009), pp 943-961.

\bibitem{FPTASampling}
Y.~Yin, C.~Zhang, {\em Sampling colorings almost uniformly in sparse random graphs}.
In http://arxiv.org/abs/1503.03351

\end{thebibliography}
\end{document}